\let\oldnl\nl
\newtheorem{theorem}{Theorem}[section] 
\newtheorem{definition}[theorem]{Definition} 
\newcommand{\linebreakand}{%
  \end{@IEEEauthorhalign}
  \hfill\mbox{}\par
  \mbox{}\hfill\begin{@IEEEauthorhalign}
}
\begin{document}
\title{Efficient Process-to-Node Mapping Algorithms for Stencil Computations}
\author{
	\IEEEauthorblockN{Sascha Hunold}
	\IEEEauthorblockA{TU Wien, Faculty of Informatics\\
	  Vienna, Austria\\
	  \url{hunold@par.tuwien.ac.at}
	}
	\and
	\IEEEauthorblockN{Konrad von Kirchbach}
	\IEEEauthorblockA{TU Wien, Faculty of Informatics\\
	  Vienna, Austria\\
	  \url{kirchbach@par.tuwien.ac.at}
	}
	\and
	\IEEEauthorblockN{Markus Lehr}
	\IEEEauthorblockA{TU Wien, Faculty of Informatics\\
	  Vienna, Austria\\
	  \url{lehr@par.tuwien.ac.at}
	}
	\linebreakand
	\IEEEauthorblockN{Christian Schulz}
	\IEEEauthorblockA{
	  University of Vienna, Faculty of Computer Science\\
	  Vienna, Austria\\
	  \url{christian.schulz@univie.ac.at}
	}
	\and
	\IEEEauthorblockN{Jesper Larsson Tr\"aff}
	\IEEEauthorblockA{TU Wien, Faculty of Informatics\\
	  Vienna, Austria\\
	  \url{traff@par.tuwien.ac.at}
	}
}

\maketitle
\begin{abstract}
	Good process-to-compute-node mappings can be
	decisive for well performing HPC applications. A special, important
	class of process-to-node mapping problems is the problem of mapping
	processes that communicate in a sparse stencil pattern to Cartesian grids.  
	By thoroughly exploiting the inherently present structure in this type
	of problem, we devise three novel distributed algorithms that are
	able to handle arbitrary stencil communication patterns effectively. 
	We analyze the expected performance of our algorithms based on an
	abstract model of inter- and intra-node communication. 
        An
	extensive experimental evaluation on several HPC machines shows that   
	our algorithms are up to two orders of magnitude faster in
	running time than a (sequential) high-quality general graph
	mapping tool, while obtaining similar results in communication
	performance. Furthermore,
	our algorithms also achieve significantly better mapping quality
	compared to previous state-of-the-art Cartesian grid mapping
	algorithms.
	This results in up to a threefold performance improvement of an \texttt{MPI\_\-Neighbor\_\-alltoall}\xspace
	exchange operation.
	Our new algorithms can be used to implement the \texttt{MPI\_\-Cart\_\-create}\xspace functionality.
\end{abstract}

\begin{IEEEkeywords}
  MPI, Process Mapping, Stencil Computations
\end{IEEEkeywords}
  
\section{Introduction}

The communication performance of applications running on High-Performance Computing (HPC) systems
depends on a variety of factors like the
capability and topology of the underlying communication system, the
required communication (patterns, frequencies, volumes, and
dependencies) between processes, and the
software and algorithms used to realize the communication.  If the
communication pattern is known, and if a hardware topology description
is given, it is natural to attempt to find a good mapping of the
application processes onto the hardware processors such that pairs of
processes that frequently communicate large amounts of data
become~located~closely.

Many important scientific computing applications involve stencil
computations.  For example, stencil computations are used for climate
and ocean modeling \cite{haralick1992computer}, in computational
electromagnetic codes
\cite{taflove2005computational,DBLP:journals/jcphy/UstyugovPKN09}, for
image-processing \cite{sawdey1995design}, in Jacobi or multigrid
solvers~\cite{DBLP:conf/ipps/RenganarayananHDR07}, for earthquake
simulations \cite{DBLP:conf/sc/ChristenSC12} or in general in
simulations systems such as
OpenLB~\cite{DBLP:conf/europar/FietzKSSH12}.  In most cases, elements
of a $d$-dimensional matrix are repeatedly updated using the values of
fixed \emph{stencil} pattern of neighboring elements. When run on a
parallel computer, this yields communication patterns that are very
regular and more or less symmetric depending on the organization of
the processors. More precisely, each processing element exchanges data
repeatedly with a small set of neighboring processing elements, and
all processing element neighborhoods have the same structure. In this
situation, in each exchange step, all processes communicate with
other processes and all follow the same pattern determined by the
computational stencil and the organization of the processes.
\begin{figure}[t]
	\center
	\includegraphics[trim = 0 0 0 0, clip, height=4.5cm]{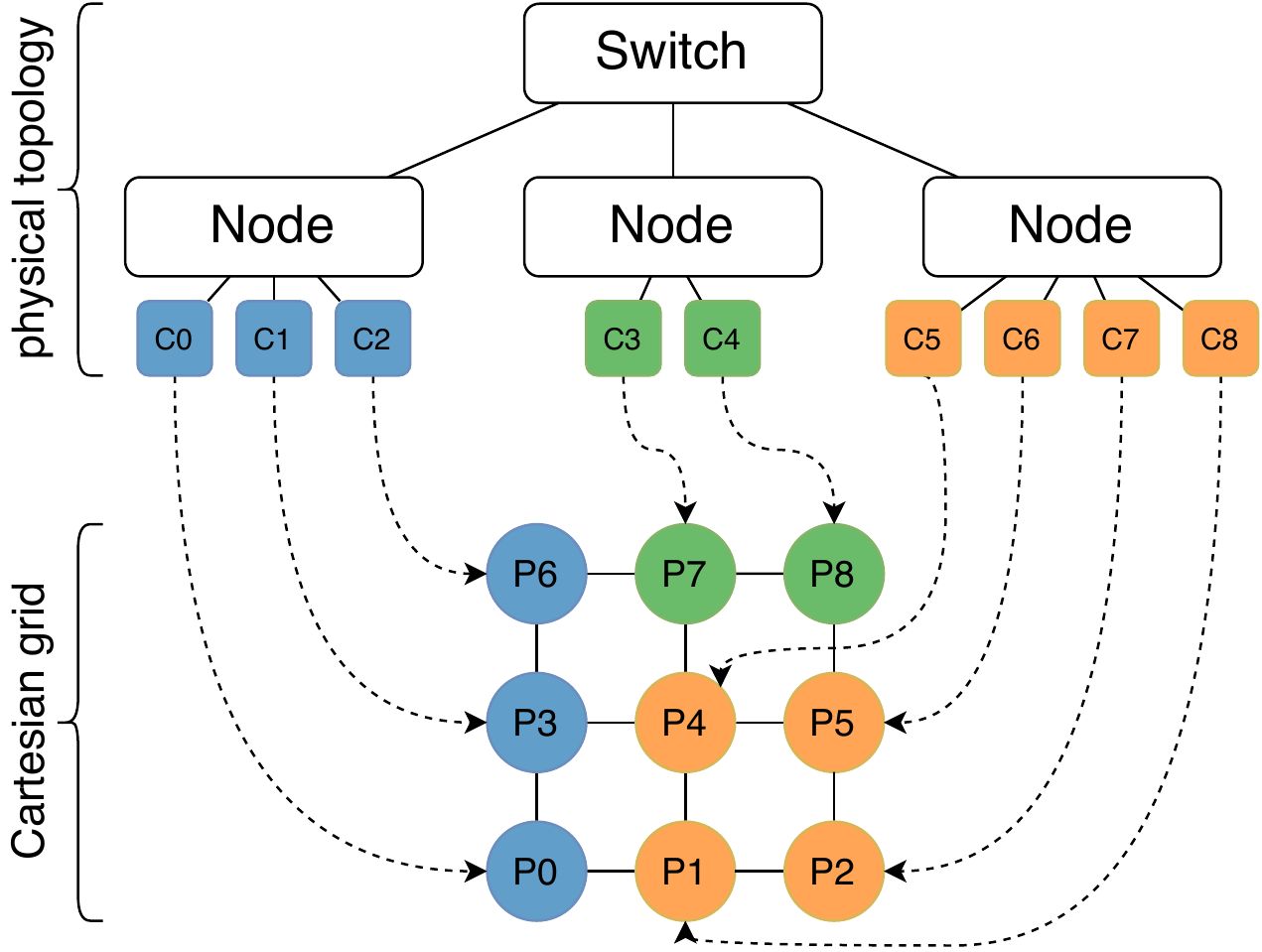}
	\caption{Motivational example: Given a set of compute nodes
          with possibly different numbers of processes per node, and a
          computational grid, find a mapping of the processes on the nodes
          to the grid, \textrm{s.t.}\xspace the number of communication edges between
          compute nodes is minimized.}
	\label{fig:motivation}
\end{figure}

The Message Passing Interface (MPI)~\cite{MPI-3.1} supports complex
communication patterns by providing functions to specify virtual
process topologies by process neighborhoods.  Using for instance
Cartesian topologies the user can refer to processes by rank or by
coordinate vectors. Moreover, MPI supports neighborhood collective
operations such as \texttt{MPI\_\-Neighbor\_\-alltoall}\xspace which make it possible for the
MPI library to exploit (regular) communication patterns to provide
more efficient data exchange operations.  MPI also
defines functionality to reorder processes in order to optimize the
communication performance, however, at the moment most MPI libraries
do not actually perform such remapping (in the general case). \\

\noindent \textbf{Contribution.} We make the following contributions:
\begin{itemize}
\item We show that the general Cartesian mapping problem under stencil
  patterns is NP-hard.
  This result motivates
  our work on heuristic algorithms for the problem.
\item We present new algorithms for the process-mapping problem for
  stencil patterns which in contrast to previous solutions are also
  applicable to cases where 1) the number of \texttt{MPI}\xspace processes per node is
  different and 2) where the number of processes is not factorizable
  or divisible by the number of processes per node, and 3) consider
  the case of arbitrary stencil patterns, not only the nearest-neighbor stencils
  implied by the \texttt{MPI}\xspace specification.
\item We perform an extensive experimental evaluation and benchmark
  the time needed for an \texttt{MPI\_\-Neighbor\_\-alltoall}\xspace operation. The results
  show that our algorithms significantly outperform previous solutions in terms of
  communication performance as well as initialization time.
 For example, our algorithms are up to two orders of magnitude faster in running time than the (sequential) 
	high-quality general graph mapping tool Vienna Mapping (\texttt{VieM}\xspace), while obtaining similar results in
	communication performance. Moreover, our algorithms are up to three times faster than other 
	Cartesian grid mapping algorithms, and achieving significantly better mapping quality.  \\
\end{itemize} 

\noindent \textbf{Organization.} The rest of the paper is organized as follows.  We start by
introducing the process-to-node mapping problem in
Section~\ref{sec:preliminaries} and discuss related work in
Section~\ref{sec:relatedwork}.  In Section~\ref{sec:hardness}, we look
at the mapping problem for Cartesian graphs, for which we show that
mapping problem for specific graph types is NP-hard.  For that reason,
we introduce three different, efficient algorithms to solve the
process-to-node mapping problem for Cartesian graphs in
Section~\ref{sec:neighborhoodaware}.  We present the results of an
extensive experimental evaluation of our novel algorithms in
Section~\ref{sec:experiments} before we conclude in Section~\ref{sec:conclusion}.

\section{Notation and Problem Formulation}
\label{sec:preliminaries}

We are considering the traditional setup of HPC system architectures,
where several compute nodes are interconnected via a high-speed
network.  The compute nodes usually comprise multiple processor-cores,
often on two or more CPU sockets. In order to solve a computational
problem on such systems, data needs to be exchanged among the
distributed processes.  We call communication between processes
residing on different compute nodes \emph{inter-node communication}
and communication between processes residing on the same compute node
\emph{intra-node communication}.  We follow the common assumption that
intra-node communication (on a compute node) is (much) faster than
inter-node communication with higher cumulated bandwidth. We
assume homogeneous communication performance between the computation nodes,
and also within the nodes~\cite{Gropp2019,Niethammer2019}.

We denote by $N$ the \emph{number of
  (compute) nodes} allocated for the application to run.  Let $p$ be
the total \emph{number of processes} of an application and $n_i$ with
$i \in \{0, 1, \dots, N-1\}$ the \emph{number of processes per node
  $i$}, \textrm{i.e.}\xspace, $\sum_{i=0}^{N-1} n_i = p$. If all the nodes have the
same number of processes (homogeneous node sizes), we denote by $n$
the number of processes on each node, \textrm{i.e.}\xspace, $p = N n$.
     
We assume that the processes are organized in a $d$-dimensional
Cartesian grid with dimension sizes $\mathcal{D}=[d_0, \dots, d_{d-1}]$, and thus,
the \emph{size of a grid} is the number of processes it comprises,
$p=\prod_{i=0}^{d-1}d_i$.  Each process with rank~$r$,
$0 \leq r < p$, is associated with a vector
$\vec{r}=[r_0, \dots, r_{d-1}]$, where $0 \leq r_i < d_i$ for
$i \in \{0, 1, \dots, d-1\}$, uniquely determining the position of
the process in the grid. W.l.o.g., processes are assigned in row-major
order to the grid.

\paragraph*{Target Stencils}

We now define three different stencils which will be used in the
remainder of the article. A $2$-d example of the considered stencils is depicted in
Figure~\ref{fig:stencils}.
To that end, we consider a
$k$-\emph{neighborhood} of a process to be a set of communication
targets which can be described as a list of relative coordinates
$\mathcal{S}=\{\vec{R_0}, \vec{R_1}, \dots, \vec{R_{k-1}}\}$.
Every~$\vec{R_i}=[R_{i,0}, R_{i,1}, \dots, R_{i,d-1}]$ with
$0 \leq i < k$ describes the relative offset along the dimensions
to the target process.  Let $\mathds{1}_i$ be a vector with the only
non-zero component being one at index $i$.
Then, we define the following stencils:
\begin{enumerate}[label=(\alph*)]
\item \emph{nearest neighbor stencil}:
  $\mathcal{S}=\{\mathds{1}_i, -\mathds{1}_i \mid  0 \leq i <
  d\}$,
\item \emph{component stencil}: $\mathcal{S}=\{\mathds{1}_i, -\mathds{1}_i \mid  0 \leq i <
  d-1\}$, and
\item \emph{nearest neighbor with hops}: $\mathcal{S}=\{\mathds{1}_i, - \mathds{1}_i \mid 0~\leq~i~<~d\} \cup \{a\mathds{1}_0, -a\mathds{1}_0 \mid \forall a \in \{2, 3\}\}$.
\end{enumerate}
\begin{figure}[t]
	\begin{subfigure}{0.20\textwidth}
		\center
	\includegraphics[trim=0 110 250 60, clip, scale=0.55]{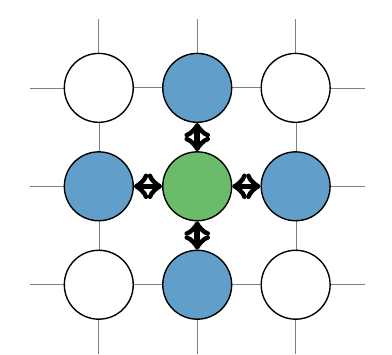}
	\caption{nearest neighbor stencil}
	\label{fig:nearestneighbor}
	\vfill
	\center
	\includegraphics[trim=0 110 250 60, clip, scale=0.55]{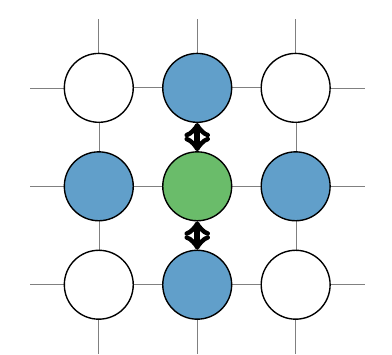}
	\caption{component stencil}
	\label{fig:component}
	\end{subfigure}
	\hfill
	\begin{subfigure}{0.30\textwidth}
		\center
                \vspace*{.5cm}
	\includegraphics[trim=0 35 250 20, clip, scale=0.55]{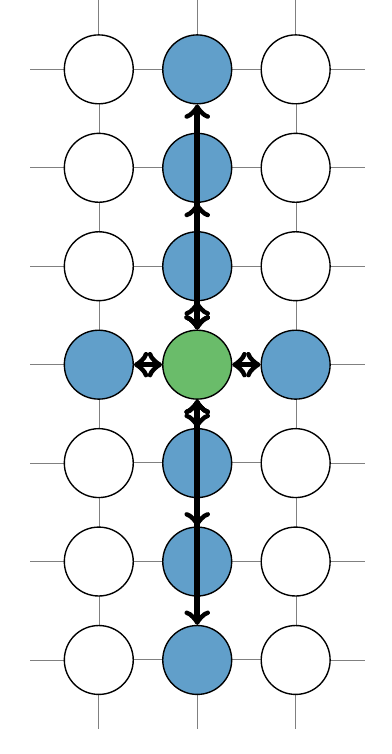}
	\caption{nearest neighbor with hops stencil}
	\label{fig:nearestneighborhops}
	\end{subfigure}
	\caption{Examples of the three $2$-d stencils.}
	\label{fig:stencils}
\end{figure}

\paragraph*{Optimization Problem}
By defining the $k$-neighborhood communication neighbors of each process in the
Cartesian process grid with dimension sizes $\mathcal{D}$,
we induce a Cartesian communication graph $\mathnormal{C}=(V, E)$ (Cartesian graph)
where $V$ is the vertex set representing the processes, \textrm{i.e.}\xspace, $|V|=p$
and $E$ is the set of communication edges between the processes. We
assume unit edge-weight and sparse communication, \textrm{i.e.}\xspace, the number of
communication neighbors is much smaller than the total number of
processes~($k \ll p$).

Let $\sigma: V \times V \rightarrow \{0,1\}$ be a cost function that
determines whether the communication between two processes $v_i$ and
$v_j$ involves two different compute nodes, \textrm{i.e.}\xspace, inter-node
communication is required.
Let $\mathcal{N}$ be the set of compute nodes and let
$M : V \rightarrow \mathcal{N}$ be a function that maps a process
$u \in V$ to exactly one compute node $\eta \in \mathcal{N}$.
For all $u, v \in E$, let~$\sigma(u,v)=0$ if $M(u) = M(v)$ and $\sigma(u,v)=1$ otherwise.
The total cost (amount) of inter-node communication operations is defined as
$J_\text{sum} := \sum_{ (u, v) \in E} \sigma(u, v)$.
We define the bottleneck node as the node with the largest number of
outgoing communication edges, \textrm{i.e.}\xspace,
$N_b := \operatorname*{argmax}_{\eta \in \mathcal{N}}\{\sum_{ (u,v) \in E}
\sigma(u, v) \mid M(u)=\eta\}$.  Let~$\mathcal{B} := \{u \mid \forall u \in V :
M(u)=N_b\}$ be the set of vertices assigned to the bottleneck node $N_b$.
Then the cost of this bottleneck node is~$J_\text{max} :=\sum_{ u\in
\mathcal{B}} \sum_{ (u,v) \in E} \sigma(u,v)$.

Our objective is to find a mapping function $M$ of processes to nodes
that minimizes $J_\text{sum}$. We use $J_\text{max}$ to distinguish
cases with similar values of $J_\text{sum}$, especially in the
experimental evaluation. Note that the original allocation
($N \times n$) given by the scheduler needs to be respected, \textrm{i.e.}\xspace, for
each node $N_i$ it must hold that $|\{u \in V \mid M(u)=N_i\}| = n_i$.

\section{Related Work}
\label{sec:relatedwork}

There has been an immense amount of research on partitioning and process
mapping -- we refer
to~\cite{GPOverviewBook,SPPGPOverviewPaper,DBLP:reference/bdt/0003S19} for
extensive material.  The problem of process reordering for different topologies
has been an
active field of research since the beginning of \texttt{MPI}\xspace
\cite{Brandfass2013,Traeff2002,Mercier2011,Hatazaki1998,YuChungMoreira06}.
Many reordering algorithms take an arbitrary unstructured graph as
input topology, making it difficult to perform efficient, scalable
mappings on a structured grid, where communication is implicitly implied
through the grid structure. In this paper we aim to exploit both stencil and
grid structure, \textrm{i.e.}\xspace, aim for specialized algorithms.

Gropp~\cite{Gropp2019} pointed out that many \texttt{MPI}\xspace implementations have not
implemented the \texttt{MPI\_\-Cart\_\-create}\xspace reordering method. As a response, he
proposed an algorithm (\texttt{Nodecart}\xspace) for homogeneous node sizes
$n$, based on the prime factorization of $n$. \texttt{Nodecart}\xspace decomposes
the dimensions into a grid spanning the nodes and a grid describing
the layout of the processes within a node. From this decomposition,
every process can calculate its new coordinate $\vec{r}$ from which it
can obtain its new rank. As a result, Gropp was able to show
significant improvements in the time needed for a nearest neighbor
message exchange in comparison to a blocked mapping of processes to nodes.
\texttt{Nodecart}\xspace was specifically designed for the implied nearest neighbor stencil of
Cartesian communicators in \texttt{MPI}\xspace.

Niethammer and Rabenseifner \cite{Niethammer2019} used a different approach to
assign processes to nodes. They point out that the \texttt{MPI\_\-Dims\_\-create}\xspace routine only
considers the total number of processes $p$ in an application from which it
finds a grid decomposition where the dimension sizes are as close as possible.
This algorithm can lead to bad domain decompositions in terms of inter-domain
communication, if the underlying data mesh is not shaped cubically.  Thus, they
propose to solve the task of grid dimension creation and process mapping
simultaneously. This is done by finding a factorization of the number nodes $N$, with the
aim to minimize the weighted communication over the domain boundaries (the
weights represent the expansion of the application mesh and a user defined
communication cost factor). Their algorithm can be extended to handle
hierarchical systems, where the weighted inter-domain communication is
minimized at each level, but it requires symmetric hierarchies.  With this
approach, they can achieve significant performance gains for a
nearest neighbor message exchange in comparison to the blocked assignment of 
processes to nodes.

Schulz et al. \cite{Schulz2017, Schulz2017a}
developed an algorithm (VieM, Vienna Mapping) for general process mapping with the objective
of minimizing the total weighted communication.
Their algorithm takes as input an unstructured communication graph
and maps it onto a hierarchical hardware graph. This is done in a 
recursive manner with perfectly balanced graph partitioning techniques
and randomized local search for improving found solutions. Even though
the approach is costly in terms of runtime and memory,  the communication cost in comparison
to the state-of-the-art has been significantly~reduced.

\section{NP-Hardness of Cartesian Mapping Problem}
\label{sec:hardness}
In general, graph embedding problems are NP-hard, as shown by
\cite{Bokhari1981}. However, the structure of the Cartesian graph
mapping problem induced by a $k$-neighborhood pattern could make the
problem easier to solve, in terms of NP-hardness or complexity of
approximation algorithms. We propose the following formal definition of the
Cartesian partitioning problem.

\begin{definition}
	Let $\mathcal{C}=(V, E)$ be a Cartesian graph with dimension sizes $\mathcal{D}$ and 
	$k$-neighborhood $\mathcal{S}$, as defined in
	Section~\ref{sec:preliminaries} and let $\mathcal{N}$ be a set of
	partition sizes (number of cores per compute node) $\mathcal{N} = n_0,
	\dots, n_{N-1}$, \textrm{s.t.}\xspace, $\sum_{i=0}^{N-1} n_i = |V|$.  Let $M: V
	\rightarrow \mathcal{N}$ be a mapping function that assigns each vertex
	$v \in V$ of the Cartesian graph to a distinct partition~$n \in
	\mathcal{N}$.
  The \texttt{GRID-\-PARTITION}\xspace problem answers the question whether there exists
	a mapping $M$ such that $J_\text{sum} \le Q$.
  
\end{definition}

\begin{definition}
  The \texttt{3-WAY-\-PARTITION}\xspace problem consists of dividing a
  multi-set of integers into three subsets, such that the sum of each
  subset is equal.  Formally, given a multi-set~$\mathcal{I}$ of
  integers, we ask whether~$\mathcal{I}$ can be partitioned into $3$
  disjoint sub-sets $I_{1, 2, 3}$, where
  $\sum_{x \in I_{1}} x = \sum_{y \in I_2} y = \sum_{z \in I_3} z$
	and~$I_1 \cup I_2 \cup I_3 = \mathcal{I}$ holds.
\end{definition}
It is well-known that the \texttt{3-WAY-\-PARTITION}\xspace problem is
NP-complete \cite{Korf2009}.
Now, we show that the \texttt{GRID-\-PARTITION}\xspace problem is already
NP-hard for two dimensions and a simple, one-dimensional
component stencil.
To that end, we reduce \texttt{3-WAY-\-PARTITION}\xspace to
\texttt{GRID-\-PARTITION}\xspace which leads to the following theorem.

\begin{theorem}
The \texttt{GRID-\-PARTITION}\xspace problem is NP-hard, when restricted to two
dimensions and a one-dimensional component stencil
	$\mathcal{S}=\{-\mathds{1}_1, \mathds{1}_1\}$.
\end{theorem}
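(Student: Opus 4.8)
The plan is to give a polynomial-time many-one reduction from \texttt{3-WAY-\-PARTITION}\xspace to \texttt{GRID-\-PARTITION}\xspace restricted to the stated shape. Starting from a \texttt{3-WAY-\-PARTITION}\xspace instance $\mathcal{I}=\{a_1,\dots,a_m\}$ of positive integers with total $T=\sum_{i=1}^{m}a_i$, I would first dispose of the case $3\nmid T$ (map it to a fixed no-instance), and otherwise build the following \texttt{GRID-\-PARTITION}\xspace instance: the two-dimensional Cartesian grid with dimension sizes $\mathcal{D}=[3,\,T/3]$ and the one-dimensional component stencil $\mathcal{S}=\{-\mathds{1}_1,\mathds{1}_1\}$; the partition sizes $\mathcal{N}=(a_1,\dots,a_m)$, \textrm{i.e.}\xspace one compute node of capacity $n_i=a_i$ per item (so $\sum_i n_i=T=|V|$); and the bound $Q=m-3$. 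Since the dimension sizes and capacities are stored compactly (in binary), this construction is polynomial.

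The structural heart of the argument is that the stencil $\{-\mathds{1}_1,\mathds{1}_1\}$ makes the induced Cartesian graph a disjoint union of exactly three paths $P_0,P_1,P_2$, one per value of the first coordinate, each on $T/3$ vertices, with no edges between distinct paths. Hence for every mapping $M$ we have $J_\text{sum}(M)=\sum_{\ell=0}^{2}(\text{cut edges inside }P_\ell)$, and the cuts inside a single path depend only on the set of nodes used on it: if $c_\ell$ distinct nodes appear on $P_\ell$, then arranging the $T/3$ vertices of that path shows the number of cut edges inside $P_\ell$ equals $(\text{number of maximal monochromatic runs})-1\ge c_\ell-1$, with equality exactly when each of those nodes occupies one contiguous block. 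Writing $r_i$ for the number of paths on which node $i$ places at least one vertex, I would then extract the universal lower bound
\[
J_\text{sum}(M)\;\ge\;\sum_{\ell=0}^{2}(c_\ell-1)\;=\;\Big(\sum_{i=1}^{m}r_i\Big)-3\;\ge\;m-3,
\]
where the final step uses $r_i\ge 1$, valid because $a_i\ge 1$ forces every node to be nonempty.

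With this in hand the two directions are short. Forward: a valid partition $I_0\cup I_1\cup I_2=\mathcal{I}$ with $\sum_{x\in I_\ell}x=T/3$ gives the mapping that lays the nodes of $I_\ell$ as contiguous blocks on $P_\ell$; the capacities in $I_\ell$ sum to exactly $|P_\ell|=T/3$, so each path is filled exactly, the cuts inside $P_\ell$ number $|I_\ell|-1$, and $J_\text{sum}=\sum_\ell(|I_\ell|-1)=m-3=Q$. Converse: any mapping with $J_\text{sum}(M)\le m-3$ must make the displayed chain tight, so $\sum_i r_i=m$ and hence $r_i=1$ for all $i$; every node therefore lies on a single path, and grouping the nodes by that path yields three classes $I_0,I_1,I_2$ partitioning $\mathcal{I}$ whose sums each equal the path length $T/3$, \textrm{i.e.}\xspace a valid \texttt{3-WAY-\-PARTITION}\xspace (each $I_\ell$ is automatically nonempty since $|P_\ell|=T/3\ge 1$). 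As \texttt{3-WAY-\-PARTITION}\xspace is NP-complete, NP-hardness of \texttt{GRID-\-PARTITION}\xspace in the restricted form follows.

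I expect the main obstacle to be stating and proving the structural lemma cleanly — that $J_\text{sum}$ decouples over the three independent paths and that the per-path optimum is exactly $c_\ell-1$ — since this is precisely what links "minimum cut under prescribed node sizes" to "number of distinct nodes per path" and thus to the combinatorics of splitting the $a_i$ into three equal-sum groups. The remaining points are routine but need care: the degenerate inputs ($3\nmid T$, $m<3$, non-positive items, all collapsing to trivial instances), and the encoding convention — the constructed grid has $\Theta(T)$ vertices, so polynomiality of the reduction relies on dimension sizes being given compactly, and one should note that \texttt{GRID-\-PARTITION}\xspace is in NP when the grid is presented explicitly, which suffices here since only NP-hardness is claimed.
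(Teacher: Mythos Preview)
Your reduction is exactly the one the paper uses: from \texttt{3-WAY-\-PARTITION}\xspace, build the $[3,\,T/3]$ grid with the component stencil and one node per item, then set the threshold so that hitting it forces every node onto a single path. Your write-up is in fact more careful than the paper's (you argue both directions explicitly via the $\sum_\ell(c_\ell-1)=\sum_i r_i-3\ge m-3$ bound, and you flag the encoding subtlety); the only cosmetic discrepancy is that the paper takes $Q=2|\mathcal{I}'|-6$ rather than your $m-3$, reflecting that its $J_\text{sum}$ counts each cut edge twice via the symmetric stencil.
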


\begin{figure}[t]
	\center
	\includegraphics[trim=0 0 0 0, clip, height=2.5cm]{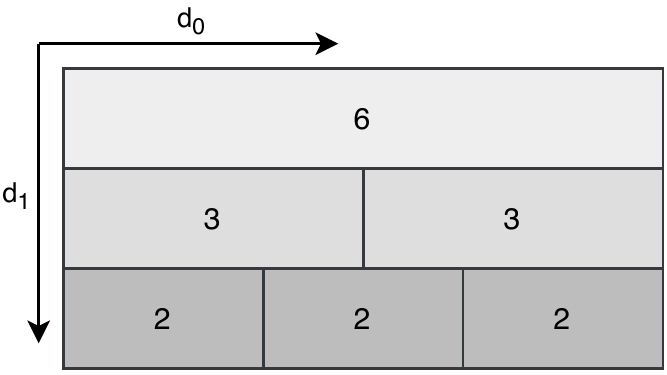}
	\caption{Example of \texttt{3-WAY-\-PARTITION}\xspace to \texttt{GRID-\-PARTITION}\xspace transformation. Given 
	$\mathcal{I'}=\{6,3,3,2,2,2\}$, create a grid with $\mathcal{D}=[6,2]$ for the Cartesian graph
	and find a mapping \textrm{s.t.}\xspace $J_\text{sum} \leq 2|\mathcal{I'}|-6$.}
	\label{fig:hardness}
\end{figure}

\begin{proof}
	Given an arbitrary instance $\mathcal{I'}$ of \texttt{3-WAY-\-PARTITION}\xspace,
        we construct an instance of
        \texttt{GRID-\-PARTITION}\xspace with a Cartesian graph, composed of $\mathcal{S}$ and~$\mathcal{D}$,
	and $\mathcal{N}, Q$
        as follows: $\mathcal{S} = \{-\mathds{1}_1, \mathds{1}_1\}$, $\mathcal{D}=[3,
          \frac{\sum_{x \in \mathcal{I'}} x}{3}]$, $\mathcal{N} =
	\{x \mid x \in \mathcal{I'}\}$, $Q = 2 |\mathcal{I'}| - 6$.

An optimal mapping of a two-dimensional
\texttt{GRID-\-PARTITION}\xspace problem with the component stencil always
traverses the vertices in the grid along the communicating dimension
given by the stencil, assigning them to a partition until it is
full. Thus, for $|\mathcal{N}| \geq 3$, each partition has at most two
outgoing communication edges (the first vertex assigned to the
partition and the last), except partitions at the border of the grid
which have one outgoing communication edge, \textrm{i.e.}\xspace, we can say \textrm{w.l.o.g}\xspace
that $Q= 2|\mathcal{N}| - 6$.

A \emph{yes} instance of \texttt{3-WAY-\-PARTITION}\xspace now corresponds to a
\emph{yes} instance of \texttt{GRID-\-PARTITION}\xspace, since we can assign every
vertex in the first, second and third column to the partitions that correlate
to the values in of $I_1$, $I_2$, $I_3$, respectively.
\end{proof}

An example for an instance with $\mathcal{I'}=\{6,3,3,2,2,2\}$ is shown in
Figure~\ref{fig:hardness}. By encoding solutions with the first and last vertex of each
partition, one can easily show that two-dimensional \texttt{GRID-\-PARTITION}\xspace
with the component stencil is NP-complete (that is, is also in
NP). With some adjustments, one can also show that the problem is
NP-complete if we allow periodicity along the dimension of
communication. The more interesting case is for which fixed stencils
the problem remains NP-complete.

\section{Rank Reordering Algorithms for $k$-Neighborhoods}
\label{sec:neighborhoodaware}

In this section, we propose three algorithms for a $k$-neighborhood
aware process reordering for Cartesian grids. The goal is to find
reordering schemes that are a) fully distributed, that is, each process
can compute its new rank independently of the other processes based on
the input alone (grid and stencil), and b) efficient, that is, not
polynomially dependent on $p$, preferably dependent only on the size
of the (sparse, compared to $p$) stencil and the number of dimensions
(we will tolerate polylog $p$).

\subsection{Hyperplane Algorithm}
The \texttt{Hyperplane}\xspace algorithm is a variation of recursive bisection. 
The main idea consists in recursively finding a split of a suitable dimension
$d_i$ of the Cartesian grid $g$ into $d_i', d_i''$ \textrm{s.t.}\xspace $d_i= d_i' + d_i''$.
This induces two grids $g'$ and $g''$, where the $i$th dimension size of $g'$
is $d_i'$ and of $g''$ is $d_i''$. The split is chosen \textrm{s.t.}\xspace the sizes of the two
new grids is a multiple of $n$, \textrm{i.e.}\xspace, $n\mid |g'|$ and $n\mid|g''|$.
Note that if we have heterogeneous nodes, one can use the mean, minimum
or maximum of the node sizes as an input for the algorithm.
This produces $N$ grids each of
size $n$ which can be mapped to the nodes. The cuts should be chosen \textrm{s.t.}\xspace the
minimal possible amount of communication between the grids is induced, since
those correspond to
inter-node communication. For that purpose, we calculate how parallel each
vector~$\vec{R} \in \mathcal{S}$ in the stencil is to a grid dimension $j$
using the cosine.
\begin{equation}\label{angle}
	\cos(\alpha_{\vec{R}, \vec{e_j}}) = \frac{\vec{R}\vec{e_j}}{||\vec{R}||||\vec{e_j}||} \in [-1, 1].
\end{equation}
Here, $\vec{e_j}$ is the unit vector along dimension~$j$ with~$0 \leq j < d$,~$\alpha_{\vec{R_i}, \vec{e_j}}$ 
is the angle between the relative coordinate vector~$\vec{R_i}$ and dimension $j$'s unit vector $\vec{e_j}$.
In order to have a monotonic increasing function, we 
square each of the values in Equation~\eqref{angle} and sum them over all relative coordinate
vectors $\vec{R} \in \mathcal{S}$, giving us the following list
\begin{equation}\label{cosines}
	\left[\sum_{i=0}^{k-1} \cos^2(\alpha_{\vec{R_i}, \vec{e_0}}),\ 
	\dots,\
	\sum_{i=0}^{k-1} \cos^2(\alpha_{\vec{R_i}, \vec{e_{d-1}}})\right].
\end{equation}
The dimension with the minimal value in Equation~\eqref{cosines} is the most orthogonal
to all $\vec{R} \in \mathcal{S}$, thus, we try to partition the grid alongside of it. 
Ties are broken by size, \textrm{i.e.}\xspace, we want to partition along the bigger dimension.
By sorting the dimensions according to their value in Equation~\eqref{cosines},
we define a preferred dimension order along which we try~to~perform~the~cuts.

The pseudo-code can be found in Algorithm~\ref{hyperplanealgo}. The input consists
of the dimension sizes~$\mathcal{D}$, the~$k$-neighborhood $\mathcal{S}$,
the number of processes per node~$n$, the rank~$r$ of the calling process and
it outputs the new position~$\vec{r}_\text{new}$ of the calling rank on the
Cartesian grid.

In each recursive step, we check if the grid is smaller than~$2 n$. 
We do this, for it is not needed to find an explicit cut for this grid size,
rather, we can directly calculate the new coordinates with the preferred
dimension order. This avoids bad splitting along very skewed grids, \textrm{e.g.}\xspace, a
nearest neighbor stencil on a two-dimensional grid with dimensions~$[2, n]$
where~$n$ is large and odd. Instead of being forced to cut along the first
dimension of size~$2$ to obtain two~$[1, n]$ partitions, we obtain two
partitions, each with~$3$~outgoing~communication~edges.

Otherwise, the algorithm finds the best possible split of the current
grid into two new grids. For that purpose, it traverses the
current dimensions~$\mathcal{D}$ sorted in increasing order of the values
defined in Equation~\eqref{cosines} (sorting in each recursive step is necessary,
because of changing dimension
sizes) and tries to position the splitting hyperplane in the current dimension~$d_i$.  The
hyperplane is initially placed at the center~$\frac{d_i}{2}$ of the candidate
dimension~$d_i$. If the initial split is not
suitable, the position of the hyperplane is incremented/decremented, respectively, \textrm{s.t.}\xspace
the position is as close as possible to the original grid's border in
an effort to reduce~$J_\text{max}$. If it cannot find a suitable split along the 
candidate dimension, it will proceed to the next, until it finds a split.
With the following proof, we show that it is always possible to find such a split.
\begin{theorem}\label{hyperplaneproof}
	Let $C, n \in \mathbb{N}$ and $C \geq 2$, let the dimension sizes
	be given by $\mathcal{D}=[d_0, \dots, d_{d-1}]$ and $\forall d
	\in \mathcal{D} : d \in \mathbb{N}$, \textrm{s.t.}\xspace $C  n =
	\prod_{i=0}^{d-1} d_i$. Then, it is always possible to split a dimension
	$d'$ \textrm{s.t.}\xspace the two induced grids are of size which is a multiple of
	$n$.
\end{theorem}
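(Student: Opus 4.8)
The plan is to reduce the statement to a pure divisibility question about a single dimension, and then settle that question with a short number‑theoretic argument.

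First I would observe that splitting dimension $j$ at position $d_j'$ (so that $d_j'' = d_j - d_j'$ with $1 \le d_j' \le d_j - 1$) produces two grids of sizes $|g'| = d_j' \prod_{i \ne j} d_i$ and $|g''| = C n - |g'|$. Since $|g'| + |g''| = C n$ is itself a multiple of $n$, it suffices to guarantee $n \mid |g'|$; then $n \mid |g''|$ is automatic, and both grids are nonempty precisely because $1 \le d_j' \le d_j - 1$. So the theorem reduces to the following: find an index $j \in \{0,\dots,d-1\}$ and an integer $d_j'$ with $1 \le d_j' \le d_j - 1$ such that $n \mid d_j' \prod_{i \ne j} d_i$, equivalently, using $\prod_{i \ne j} d_i = C n / d_j$, such that $n \mid \frac{C n}{d_j}\, d_j'$.

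The key step is choosing the right dimension. Since $C \ge 2$ and $C$ divides $\prod_i d_i = C n$, the number $C$ cannot be coprime to every $d_i$: otherwise $C$ would be coprime to the product $\prod_i d_i = C n$, contradicting $C \mid C n$ together with $C > 1$. Hence there is some $j$ with $g := \gcd(C, d_j) > 1$. For that $j$ I would set $d_j' := d_j / g$. Then $1 \le d_j' = d_j/g < d_j$, the strict inequality because $g > 1$, so $d_j'$ is an admissible split position; moreover $d_j' \prod_{i \ne j} d_i = \frac{d_j}{g}\cdot\frac{C n}{d_j} = \frac{C}{g}\,n$, which is an integer multiple of $n$ since $g \mid C$. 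That is exactly the split claimed, and the complementary size $C n - \frac{C}{g} n = n\big(C - \tfrac{C}{g}\big)$ is likewise a positive multiple of $n$.

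I do not expect a serious obstacle: the content is elementary. The only point that needs a little care is the ``some dimension must share a prime factor with $C$'' step — a small pigeonhole/gcd argument rather than anything deep, but it is what rules out pathological dimension vectors (for instance all $d_i$ coprime to $C$). Everything else is bookkeeping: checking $1 \le d_j' \le d_j - 1$, checking both resulting sizes are positive multiples of $n$, and noting the degenerate situations ($d = 1$, or some $d_i = 1$) are handled automatically, since such a dimension is simply never selected. Finally, to match the algorithm's preference for cuts near the border, I would remark that \emph{any} multiple of $d_j/g$ lying in $[1, d_j - 1]$ works equally well, so there is enough freedom left to place the hyperplane so as to reduce $J_\text{max}$.
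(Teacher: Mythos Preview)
Your proof is correct and follows essentially the same route as the paper's: both hinge on the observation that, since $C\geq 2$, some dimension $d_j$ must share a nontrivial common factor with $C$ (the paper phrases this via prime-factor multisets $F(d')\cap F(C)\neq\emptyset$, you via $\gcd(C,d_j)>1$). Your version is in fact more explicit, since you actually construct the split $d_j'=d_j/\gcd(C,d_j)$ and verify both resulting grid sizes are positive multiples of $n$, whereas the paper's proof stops once the shared factor is exhibited.
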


\begin{proof}
	Let $F(x) =
	\{ f_{1}, \dots, f_{l} \} $ be the multi-set of all prime factors of $x$.
	Then,
	\begin{equation}\label{prime_factor_product}
		F(\prod_{i=0}^{d-1} d_i) = \prod_{i=0}^{d-1} F(d_i) = F(C) F(p) = F(C p).
	\end{equation}
	Since $C \geq 2$ it must hold that, $\exists d' \in \mathcal{D}$, \textrm{s.t.}\xspace
	$F(d') \cap F(C) \neq \emptyset$
\end{proof}
In Line~$5$ of Algorithm~\ref{hyperplanealgo}, the subroutine \texttt{find\_split}
returns the index $i$ of the dimension to be split, and split sizes $d'$ and $d''$.
When a suitable split is found, two new grids $g'$, $g''$ are created where the $i$th dimension
size is replaced with $d'$ and $d''$ for $g'$ and $g''$, respectively. If the calling
rank is located on the left-hand side of the split, it will call \texttt{Hyperplane}\xspace with
$g'$ ($LHS$) as input and else, it calls \texttt{Hyperplane}\xspace with $g''$ ($RHS$) as new~input.

The number
of recursions executed by the \texttt{Hyperplane}\xspace algorithm is logarithmic
in the number of compute nodes~$N$, although the two grids per
recursion step can be very imbalanced in terms of size.
\begin{theorem}
  \label{proof:hyperplane_runtime}
	Let $g$ be a grid with dimension sizes $\mathcal{D}=[d_0, \dots,\
	d_{d-1}]$ and $\forall d \in \mathcal{D} : d \in \mathbb{N}$,
	and $\prod_{i=0}^{d-1} d_i = C p$ for some $C \in \mathbb{N}$.
	Then, the \texttt{Hyperplane}\xspace algorithm will always partition $g$ into two
	grids $g'$ and $g''$ \textrm{s.t.}\xspace $\frac{1}{2} \leq \frac{|g'|}{|g''|} \leq
	1$.
\end{theorem}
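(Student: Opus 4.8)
The plan is to reduce the claim about the whole run of the \texttt{Hyperplane} algorithm to a single recursive call and then settle that call by a short number-theoretic analysis of where the cutting hyperplane may sit. First I would note that the hypothesis $\prod_i d_i = Cp = CnN$ makes $|g|$ a multiple of $n$, and that if $|g| < 2n$ the algorithm does not split $g$ at all (it handles such grids directly and the statement is vacuous); so I may assume $m := |g|/n$ is an integer with $m \ge 2$. It then suffices to show that the two sub-grids $g',g''$ produced by one call satisfy, with $g'$ taken to be the smaller, $\tfrac12 \le |g'|/|g''| \le 1$.

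Next I would characterise the admissible cuts. If dimension $d_i$ is split at local position $d_i'$ (so $d_i'' := d_i - d_i'$), then $|g'| = \tfrac{d_i'}{d_i}|g|$ and $|g''| = \tfrac{d_i''}{d_i}|g|$; since $n \mid |g|$ it is enough to demand $n \mid |g'|$, which is equivalent to $d_i \mid m\,d_i'$, i.e. to $e_i \mid d_i'$ where $e_i := d_i/\gcd(d_i,m)$. Hence the admissible positions $d_i' \in \{1,\dots,d_i-1\}$ are exactly $e_i, 2e_i, \dots, (t_i-1)e_i$ with $t_i := \gcd(d_i,m) = d_i/e_i$, a non-empty set precisely when $t_i \ge 2$. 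Invoking Theorem~\ref{hyperplaneproof} with $m$ in the role of its constant $C$ (legitimate since $mn = \prod_i d_i$ and $m \ge 2$), at least one dimension has $t_i \ge 2$; since the subroutine of Line~$5$ scans the dimensions and cuts the first one that admits a split, it does return a cut, in some dimension $d_i$ with $t_i \ge 2$.

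Then I would pin down the position it picks. The hyperplane starts at the centre $d_i/2 = \tfrac{t_i}{2}e_i$ and, when that is inadmissible, is moved to the admissible position nearest the centre (the tie-break toward the border does not change the size ratio). If $t_i$ is even the centre is admissible, $|g'| = |g''|$, and the ratio is $1$. If $t_i$ is odd, hence $t_i \ge 3$, the admissible positions flanking the centre are $\tfrac{t_i-1}{2}e_i$ and $\tfrac{t_i+1}{2}e_i$, both at distance $\tfrac{e_i}{2}$ from it, so the selected cut has $\{|g'|,|g''|\} = \{\tfrac{t_i-1}{2},\tfrac{t_i+1}{2}\}\cdot\tfrac{e_i}{d_i}|g|$ and therefore $|g'|/|g''| = \tfrac{t_i-1}{t_i+1} \in [\tfrac12,1)$. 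In both cases $\tfrac12 \le |g'|/|g''| \le 1$, which in particular gives $|g''| \le \tfrac23|g|$ and hence the $O(\log N)$ recursion depth asserted in the surrounding text.

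The step I expect to cost effort is not the arithmetic but reading off the exact behaviour of the Line~$5$ subroutine from Algorithm~\ref{hyperplanealgo}: one has to confirm that it (i) only ever returns a dimension that actually admits a cut — which, with Theorem~\ref{hyperplaneproof}, guarantees it always succeeds — and (ii) within that dimension returns the admissible position closest to the centre (equivalently, within $\tfrac{e_i}{2}$ of it), not an arbitrary admissible one; indeed, had it chosen a position near the boundary in a dimension with $t_i \ge 4$ the ratio bound would already fail. Granting this reading, the remainder is the elementary computation above, and the only case distinction needed is the parity of $t_i$ — the parity of $d_i$ itself never enters.
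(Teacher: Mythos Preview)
Your proof is correct and takes a genuinely different route from the paper's. The paper works through the prime factorisation: it lets $\{d'_1,\dots,d'_m\}$ be the common prime factors (ordered increasingly) of the chosen dimension $d'$ and the node-count parameter, argues that the split the algorithm finds has ratio at least $(d'_1-1)/(d'_1+1)$, proves this function is monotone so that the smallest prime $d'_1$ governs the worst case, and bounds $(d'_1-1)/(d'_1+1)\ge\tfrac12$ from $d'_1\ge 3$ (the case $d'_1=2$ being an exact bisection). It then checks separately, via an inequality on $\lfloor d'_1/2\rfloor\prod_{j\ge 2}d'_j$, that the sweep starting at $\lfloor d'/2\rfloor$ actually reaches such a position. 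Your argument replaces all of this with a single $\gcd$: setting $t_i=\gcd(d_i,m)$ and $e_i=d_i/t_i$, you show the admissible cut positions are exactly the arithmetic progression $e_i,2e_i,\dots,(t_i-1)e_i$, so the nearest-to-centre admissible cut is read off from the parity of $t_i$ alone and the ratio is computed \emph{exactly} as $(t_i-1)/(t_i+1)$ when $t_i$ is odd. Since the paper's $d'_1$ divides your $t_i$, your bound is never looser and is often strictly sharper; you also sidestep the monotonicity lemma and the separate ``latest position'' verification. Both proofs rest on the same reading of \texttt{find\_split} that you flag in your final paragraph, and you are right that this is where the care lies.
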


\begin{proof}
	Let $F(x)$ be as defined in the proof of Theorem~\ref{hyperplaneproof}.
	Let $d'$ be the candidate dimension of the algorithm with $F(d') \cap
	F(C) \neq \emptyset$. Let $F(d') \cap F(C) = \{d'_1, \dots,
	d'_m\}$ be ordered in ascending order.
	Then, the algorithm will surely find a suitable split at $d'_1$
	with $\left\lfloor \frac{d'_{1}}{2} \right\rfloor$ and $\left\lceil \frac{d'_{1}}{2} \right\rceil$. If $d'_{1}=2$, the resulting split will yield two partitions of exactly the same size.
	
	If $d'_{1} \geq 3$, then a split yields two partitions with a bigger or
	equal difference than a split of any other prime factor $d'_k \in F(d') \cap F(C)$.
	With $\left\lfloor
	\frac{d'_{1}}
	{2} \right\rfloor = \frac{d'_{1}-1}{2}$ and $\left\lceil \frac{d'_{1}}{2} \right\rceil = \frac{d'_{1}+1}{2}$,
	\begin{equation}
		\begin{aligned}
			\frac{|g'|}{|g''|} =
			\frac{\frac{d'_{1}-1}{2} \prod_{j=2}^m d'_j \prod_{d_i \neq d'} d_i}
			{\frac{d'_{1}+1}{2} \prod_{j=2}^m d'_j \prod_{d_i \neq d'} d_i} &\leq
			\frac{\frac{d'_{k}-1}{2} \prod_{j \neq k }^m d'_j \prod_{d_i \neq d'} d_i}
			{\frac{d'_{k}+1}{2} \prod_{j\neq k}^m d'_j \prod_{d_i \neq d'} d_i} \\
			\implies 
			\frac{d'_{1}-1}{d'_{1}+1} &\leq \frac{d'_{k}-1}{d'_{k}+1}.\\
		\end{aligned}
	\end{equation}	
	This always holds, since this is a strictly monotonic increasing function that converges to $1$ for growing $d'_k$ values.
		To see this suppose $x,y \geq 0$
		\begin{equation}
			\begin{aligned}
				\frac{x-1}{x+1} &\leq \frac{y-1}{y+1}\\
				(x-1)(y+1) &\leq (y-1)(x+1)\\
				xy +x -y -1 &\leq xy + y -x -1\\
				x &\leq y.
			\end{aligned}
		\end{equation}
	Note that if the Algorithm~\ref{hyperplanealgo} first positions the hyperplane at $\left\lfloor \frac{d'}{2} \right\rfloor$ it will eventually find a
	suitable split, latest at $\left\lfloor \frac{d'_{1}}{2} \right\rfloor \prod_{j=2}^{m} d'_j$.
	\begin{equation}
		\begin{aligned}
			\left\lfloor \frac{d'_1}{2} \right\rfloor \prod_{j=1}^m d'_j &\leq \left\lfloor \frac{d'}{2} \right\rfloor\\
			\frac{d'_1 - 1}{2} \prod_{j=1}^m d'_j &\leq \frac{d' - 1}{2}\\
			d' - \prod_{j=2}^m d'_j &\leq d' -1\\
			1 &\leq  \prod_{j=2}^m d'_j
		\end{aligned}
	\end{equation}
	
	We can bound the ratio of the two grid sizes $|g'|$ and $|g''|$ from below, since $d'_1 \geq 3$
	\begin{equation}
		\begin{aligned}
			\frac{|g'|}{|g''|} =  
			\frac{\frac{d'_{1}-1}{2} \prod_{j=2}^m d'_j \prod_{d_i \neq d'} d_i}
			{\frac{d'_{1}+1}{2} \prod_{j=2}^m d'_j \prod_{d_i \neq d'} d_i} &\geq \frac{d'_1 - 1}{d'_1 +1}
			\geq \frac{3-1}{3+1} &\geq \frac{1}{2}.
		\end{aligned}
	\end{equation}
\end{proof}

It follows that the running time 
of \texttt{Hyperplane}\xspace is bounded by~$\mathcal{O}\left( \log N  \sum_{i=0}^{d-1} d_i  \right)$.

\begin{algorithm}[t!]
	\SetKw{break}{break}
	\SetKw{return}{return}
	\SetAlgoLined
	\SetKwFunction{hyperplane}{hyperplane}
	\SetKwFunction{newcoordinate}{new\_coordinate}
	\SetKwFunction{sort}{sorted}
	\SetKwFunction{adapt}{push\_to\_boundary}
	\SetKwFunction{findsplit}{find\_split}
	\SetKwFunction{inducedgrids}{induced\_grids}
	\SetKwFunction{calcremain}{n\_remain\_parts}
	\KwIn{Grid with dimension sizes~$\mathcal{D}=[d_0, \dots, d_{d-1}]$,
	set of relative vectors~$\mathcal{S}$ describing the $k$-neighborhood,
	number of processes per node~$n$ and rank $r$ of calling process.}
	\KwResult{New coordinate $\vec{r}_\text{new}$ of calling rank.}
	\Indm\renewcommand{\nl}{\let\nl\oldnl}\hyperplane{$\mathcal{D}, \mathcal{S}, n, r, \vec{r}_\text{new}$}\\
	\Indp
	\eIf{$\prod_{d \in \mathcal{D}} d \leq 2 n$}{
		$\vec{r}_\text{new} \gets \newcoordinate{$\mathcal{D}, \mathcal{S}, n, r$}$\\
		\return
	}{
		$i, d', d'' \gets \findsplit{$\mathcal{D}, \mathcal{S}, n$}$\\
		$LHS, RHS~\gets~\inducedgrids{$\mathcal{D}, i, d',d''$}$\\
		\eIf{$r \in LHS$}{
			$\hyperplane{$LHS, \mathcal{S}, n, r_\text{new}$}$\\
		}{
			$\hyperplane{$RHS, \mathcal{S}, n, r_\text{new}$}$\\
		}
	}
	\caption{\texttt{Hyperplane}\xspace Algorithm}
	\label{hyperplanealgo}
\end{algorithm}
\begin{figure}[t]
	\begin{subfigure}{0.15\textwidth}
	\includegraphics[trim = 100 70 100 45, clip, scale=0.45]{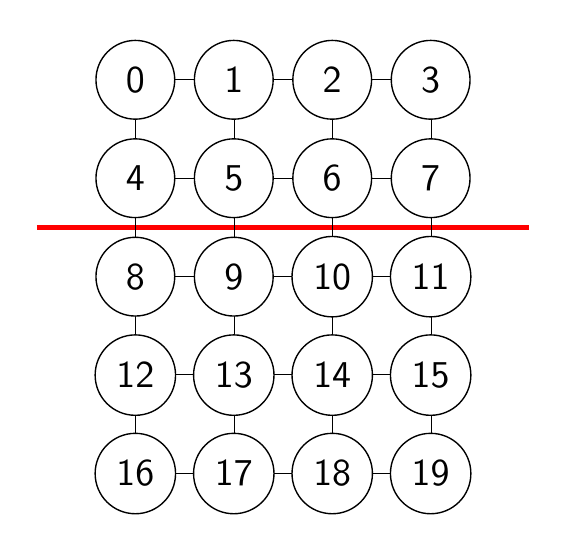}
	\caption{}
	\end{subfigure}
	\begin{subfigure}{0.15\textwidth}
	\includegraphics[trim = 100 70 100 45, clip, scale=0.45]{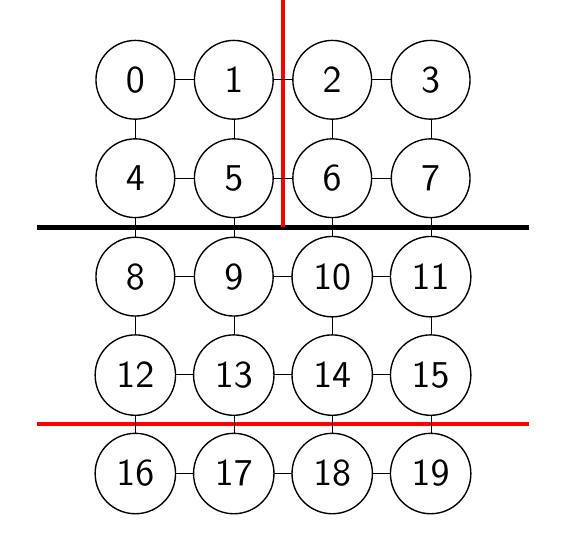}
	\caption{}
	\end{subfigure}
	\begin{subfigure}{0.15\textwidth}
	\includegraphics[trim = 100 70 100 45, clip, scale=0.45]{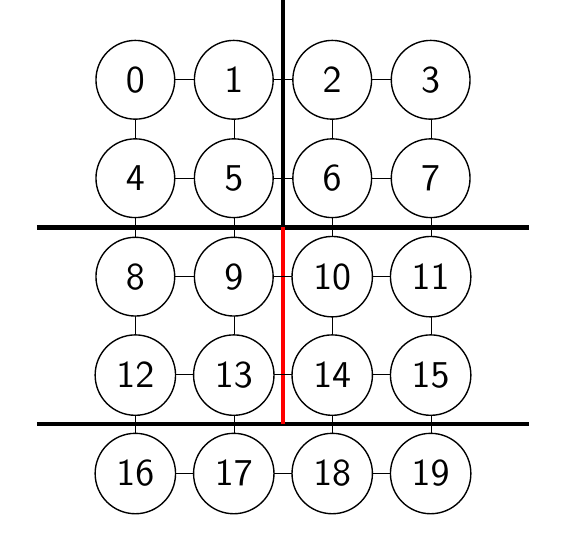}
	\caption{}
	\end{subfigure}
	\caption{\texttt{Hyperplane}\xspace algorithm example on $5 \times 4$ grid with a 
	nearest neighbor stencil and $N=5$, $n=4$:
	(a) The first split is along the largest dimension;
	(b) Two new splits found on both grids; (c) Final split of the grid.}
	\label{hyperplanefig}
\end{figure}

\subsection{$k$-$d$ Tree Algorithm}
\label{sec:kdtree}

Similar to the \texttt{Hyperplane}\xspace algorithm, but inspired by the $k$-$d$ tree data
structure, the $k$-$d$~\texttt{tree}\xspace algorithm works also by recursively splitting
the grid along the dimensions. The main difference to the \texttt{Hyperplane}\xspace algorithm
is that the recursion continues until there is only one vertex left in the Cartesian
grid. The advantage of this approach is that the algorithm is no
longer bound to the number of processes per node $n$. In fact, it is oblivious
to it and only tries to find \emph{dense} mappings, \textrm{i.e.}\xspace, localize communicating
vertices.

Instead of partitioning the grid dimensions in a round-robin manner, like in
the $k$-$d$ tree data-structure, we choose the largest dimension, weighted by
the inverse amount of communication that is performed across it, so that we avoid
splitting a dimension along which is communicated intensively. 
To be more precise, we define the amount of communication across a dimension
$j$ to be 
$
	f_j := |\{\vec{R} \mid \forall \vec{R} \in \mathcal{S} : R_{j} \neq
0\}|,
$
where $R_{j}$ is the $j$th component of $\vec{R}$, then we want to find the
dimension index $i := \operatorname*{argmax}_{0 \leq i < |\mathcal{D}|} \frac{d_i}{f_i}$
over the current grid dimensions sizes $\mathcal{D}$ and split equally along $d_i$.
All ranks $r$ with $r \leq \frac{1}{2}\prod_{j=0}^{d-1} d_j$ are assigned to
the left-hand side, and the others to the right-hand side of the split.

The pseudo-code
is given in Algorithm~\ref{algo:kdtree}. If there is only one vertex left in
the calling grid, we enter
the base-case of the recursive function, in which the vector $\vec{r}_\text{new}$ already
holds the coordinates of the calling process in the grid. Otherwise, 
we split the best suited dimension equally and recurse accordingly. 
\begin{algorithm}[t!]
	\SetKw{break}{break}
	\SetKw{return}{return}
	\SetAlgoLined
	\SetKwFunction{findsplit}{find\_split\_index}
	\SetKwFunction{kdtreecoord}{kd\_tree}
	\SetKwFunction{newcoordinate}{new\_coordinate}
	\KwIn{Dimension sizes $\mathcal{D}$, set of relative vectors
	describing the~$k$-neighborhood $\mathcal{S}$ and 
	rank~$r$ of calling process.}
	\KwResult{New coordinate $\vec{r}_\text{new}$ of calling rank}
	\Indm\renewcommand{\nl}{\let\nl\oldnl}\kdtreecoord{$\mathcal{D}, \mathcal{S}, r, \vec{r}_\text{new}$}\\
	\Indp
	\eIf{$\prod_{d \in \mathcal{D}} d = 1$}{ \return\\ }{
		$k \gets \findsplit{$\mathcal{D}, \mathcal{S}$}$\\
		\eIf{$r \leq \frac{\prod_{d \in \mathcal{D}}d}{2}$}{
			$d_k \gets \lfloor \frac{d}{2} \rfloor$\\
			$\kdtreecoord{$\mathcal{D}, \mathcal{S}, r, \vec{r}_\text{new}$}$\\
		}{
			$d_k \gets \lceil \frac{d}{2} \rceil$\\
			$\vec{r}_\text{new}[k] \gets \vec{r}_\text{new} + d_k$\\
			$\kdtreecoord{$\mathcal{D}, \mathcal{S}, r, \vec{r}_\text{new}$}$\\
		}
	}
	\caption{$k$-$d$~\texttt{tree}\xspace Algorithm}
	\label{algo:kdtree}
\end{algorithm}

For the run-time analysis, it is not difficult to see that depth of the 
recursion tree is $\log_2 p$, where $p$ is the total number of processes
in the Cartesian grid. At each step of the recursion, we have to find the
best dimension for the split. Using a priority queue, this can be achieved
in $\mathcal{O}\left(\log d\right)$ steps. Since the remaining computations
are constant, the runtime for calculating the new reordering is
$\mathcal{O}\left(\log p \log d \right)$.

\subsection{Stencil Strips Algorithm}
\label{sec:stencilstrips}
\begin{algorithm}[b!]
	\SetKw{break}{break}
	\SetAlgoLined
	\SetKwFunction{stencilstrips}{stencil\_strips}
	\SetKwFunction{distortionfactors}{get\_distortion\_factors}
	\SetKwFunction{striplengths}{get\_strip\_lengths}
	\SetKwFunction{stripcoord}{get\_strip\_id}
	\SetKwFunction{newcoordinate}{new\_coordinate}
	\SetKwFunction{setassigndirection}{set\_assignment\_direction}
	\KwIn{Dimension sizes $\mathcal{D}$, set of relative vectors 
	$\mathcal{S}$ describing the~$k$-neighborhood,
	number of processes per compute node $n$
	and rank $r$ of calling process.}
	\KwResult{New coordinate $\vec{r}_\text{new}$ of calling rank.}
	\Indm\renewcommand{\nl}{\let\nl\oldnl}\stencilstrips{$\mathcal{D}, \mathcal{S}, n, r, \vec{r}_\text{new}$}\\
	\Indp
	$\alpha \gets \distortionfactors{$\mathcal{S}, |\mathcal{D}|, n$}$\\
	$s \gets \striplengths{$|\mathcal{D}|, n, \alpha$}$\\
	$v \gets \prod_{i \gets 0}^{|\mathcal{D}|-1} s[i]$\\ 
	\For{$i \gets 0$ \KwTo $|\mathcal{D}|-1$}{
		$v \gets \frac{v}{s[i]}$\\
		$s_\text{coord}[i] \gets \lfloor \frac{r}{v} \rfloor$\\
		$r \gets r - (s_\text{coord}[i]-1)v$\\
	}
	$\vec{r}_\text{new} \gets \newcoordinate{$\mathcal{D}, s, s_\text{coord}, r$}$\\
	\caption{\texttt{Stencil Strips}\xspace Algorithm}
	\label{stencilstripscoord}
\end{algorithm}

During early experimentation, we noticed that a consecutive assignment of processes to 
compute nodes for grids where the dimension sizes were close to the $d$th root
of $n$ (\textrm{i.e.}\xspace, optimal side length) and the nearest neighbor stencil
resulted in lower values of $J_\text{sum}$ and
$J_\text{max}$ in comparison to recursive bisection.
This observation was the
motivation for this algorithm. The main idea is to partition the grid into
strips, where strip lengths are chosen such that they are close to the scaled
length of an optimal bounding rectangle of $\mathcal{S}$ (\textrm{e.g.}\xspace, $\sqrt[\leftroot{-3}\uproot{3}d]{n}$ for
the nearest neighbor stencil). Processes in
coherent strips are~assigned~to~partitions/nodes.

For that purpose, let $R_{i}$ be the $i$th component of $\vec{R}$, then we define
$n_{i, \text{max}} := \max \{R_{i} \mid \forall \vec{R} \in \mathcal{S}\}$ and
$n_{i, \text{min}} := \min \{R_{i}\mid \forall \vec{R} \in \mathcal{S}\}$
to be the maximal and minimal value along dimension $i$ in the $k$-neighborhood
$\mathcal{S}$, respectively. The \emph{extension} $e_i$ of $\mathcal{S}$ along dimension $i$ is
$e_i := n_{i, \text{max}} - n_{i, \text{min}}$.
We use the extensions to find a bounding $n$-dimensional rectangle of
$\mathcal{S}$, with dimension sizes $\mathcal{E} = [e_0, \dots,
e_{d-1}]$.  We define the volume $V_b$ of the bounding $n$-dimensional
rectangle to be \begin{equation*} 
	V_b := \prod_{i=0}^{d-1} \epsilon_i, \quad \epsilon_i=
	\begin{cases}
		1 & \text{if } e_i = 0\\
		e_i & \text{else}.
	\end{cases}
\end{equation*}
With $d_b:= |\{e_i \mid \forall e_i \in \mathcal{E} \land e_i \neq 0\}|$ being
the number of non-zero expansions, we define the \emph{distortion factor} $\alpha_i$ to 
a~$d_b$-dimensional cube along dimension $i$ as
\begin{equation*} 
	\alpha_i := \frac{e_i}{\sqrt[\leftroot{-3}\uproot{3}d_b]{V_b}}. 
\end{equation*} 
As input for the algorithm serves the dimension sizes~$\mathcal{D}$ of the
grid, the~$k$-neighborhood $\mathcal{S}$, the number of processes per node~$n$
and the rank $r$ of the calling process. The pseudo-code is presented in
Algorithm~\ref{stencilstripscoord}. The algorithm outputs the new position
vector $\vec{r}_\text{new}$ of the calling rank. The algorithm itself works as
follows: we calculate a list of distortion factors $\alpha$ for every dimension
(Line~$1$). With the distortion factors, we can calculate a list $s$, containing
the optimal strip length~$s_i$ along dimension~$i$ (Line~$2$), with
consideration to already computed strip lengths, the distortion
factors~$\alpha_i$ and the node size~$n$:
\begin{equation*}
	s_i := \sqrt[\leftroot{-3}\uproot{3}d-i]{\frac{\alpha_i n}{\prod_{j=0}^{i-1} s_j}}.
\end{equation*}
This is done for every dimension except the largest one, for we iteratively position
the strips along the largest dimension. To be more precise, we assume that
along the largest dimension the strip length is one. In every other dimension
$d_i$, we fit~$\lfloor \frac{d_i}{s_i} \rfloor$ strips (the last strips is of
size~$s_i + \frac{d_i}{s_i}$). 
Every process can locally compute the number of small and large
strips along the dimensions, how many processes fit into each strip
and, with the calling process' rank~$r$ the coordinates $s_\text{coord}$ of
the strip in which it is contained (Line~$3$-$5$),
from which it can calculate its new position in the strip and thus, in the grid
$\vec{r}_\text{new}$~(Line~$9$). In order to obtain cohesive partitions, we flip the
strip assignment direction accordingly as depicted in
Figure~\ref{fig:stencilstrips}.

The strip widths, strip coordinates and the ranks position in the strip can
each be calculated in~$\mathcal{O}(d)$. The most expensive part of the 
algorithm is calculating the distortion factors, for which we need the maximal
and minimal expansion along the dimensions. This amounts in $\mathcal{O}(k d)$
steps. Thus, the run-time of the algorithm is bounded by $\mathcal{O}(k d)$.
\begin{figure}[t]
	\begin{subfigure}{0.2\textwidth}
		\center
	\includegraphics[trim = 0 0 0 0, clip, height=2cm]{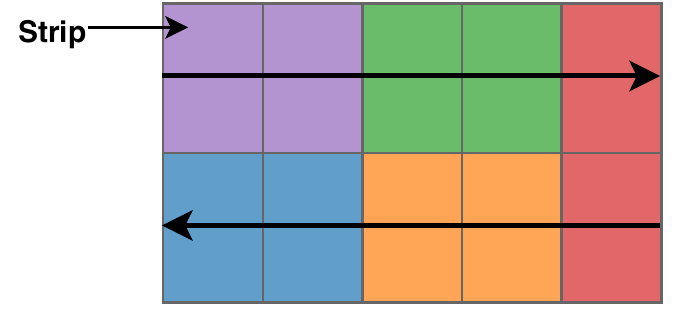} \\
	\caption{}
	\end{subfigure}    
	\hfill	
	\begin{subfigure}{0.2\textwidth}
		\center
	\includegraphics[trim = 0 0 0 0, clip, height=2cm]{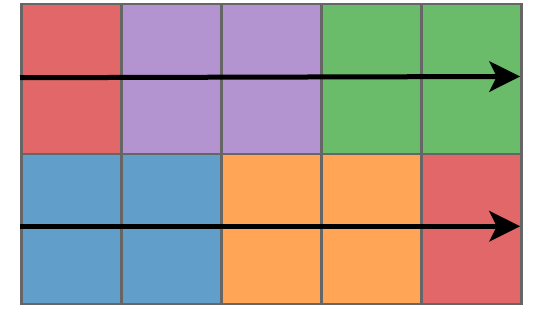} \\
	\caption{}
	\end{subfigure}
	\caption{Assignment direction of the \texttt{Stencil Strips}\xspace algorithm.
	(a): Alternating strips assignment direction accordingly guarantees 
	coherent partitions.
	(b): Imprudent assignment direction can lead to incoherent partitions.
	}
	\label{fig:stencilstrips}
\end{figure}

\section{Experimental evaluation}
\label{sec:experiments}
We have implemented the presented algorithms for Cartesian rank
reordering and the algorithm presented by Gropp~\cite{Gropp2019}, 
in accordance with the detailed pseudo-code of his paper.
We aim to show the advantage of approaches that do not rely on factorization of
the number of processes per node $n$. For that purpose, we compare the presented 
algorithms to a blocked assignment of ranks to nodes (henceforth referred as
to blocked), \texttt{Nodecart}\xspace, and \texttt{VieM}\xspace, which are described in
Section~\ref{sec:relatedwork}.  We start by describing the
systems used to run the experiments in Section~\ref{sec:systems}, implementation and
benchmarking details are found in Section~\ref{sec:methodology}. In
Section~\ref{sec:reduction}, we visualize the distribution of the reduction in
inter-node communication achieved by the different algorithms over the blocked
mapping for a wide set of instances. We proceed in
Section~\ref{sec:throughput}, where we present the throughput gained by our
algorithms for a neighbor all-to-all exchange and different message sizes on
different machines. Finally, we compare the algorithmic running time in
Section~\ref{sec:instantiation}.  
\begin{figure*}[t]
	\captionsetup[sub]{labelformat=empty, skip=0pt}
	\subcaptionbox{Nearest neighbor}[\textwidth]['u']{
		\begin{subfigure}{0.24\textwidth}
			\center
			\caption{}
			\includegraphics[trim=0 20 0 10, clip, height=3.3cm]{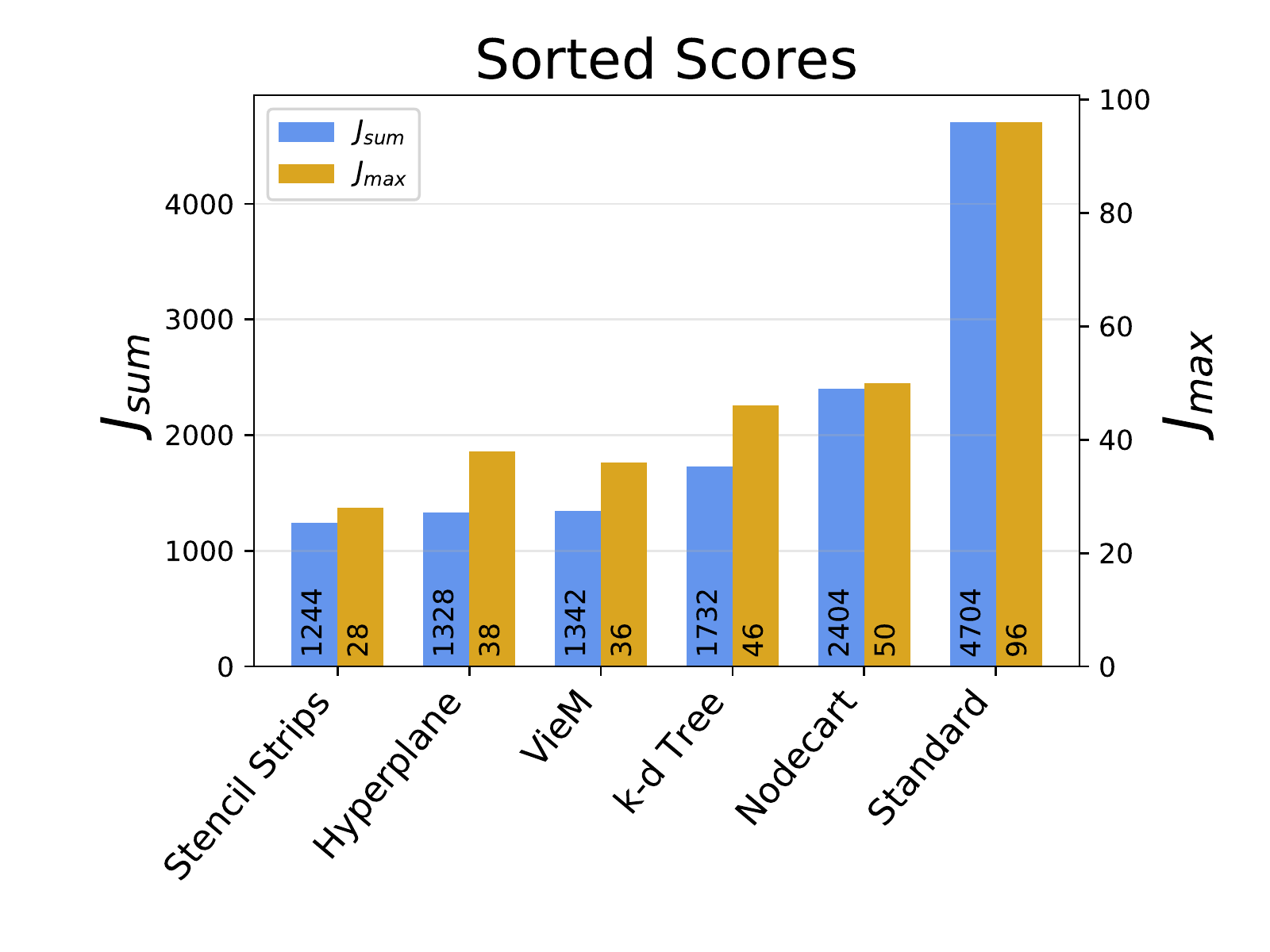}
		\end{subfigure}
		\hfill
		\begin{subfigure}{0.24\textwidth}
			\centering
			\caption{\emph{VSC}$4$\xspace}
			\includegraphics[trim = 0 20 0 30, clip, height=3.1cm]{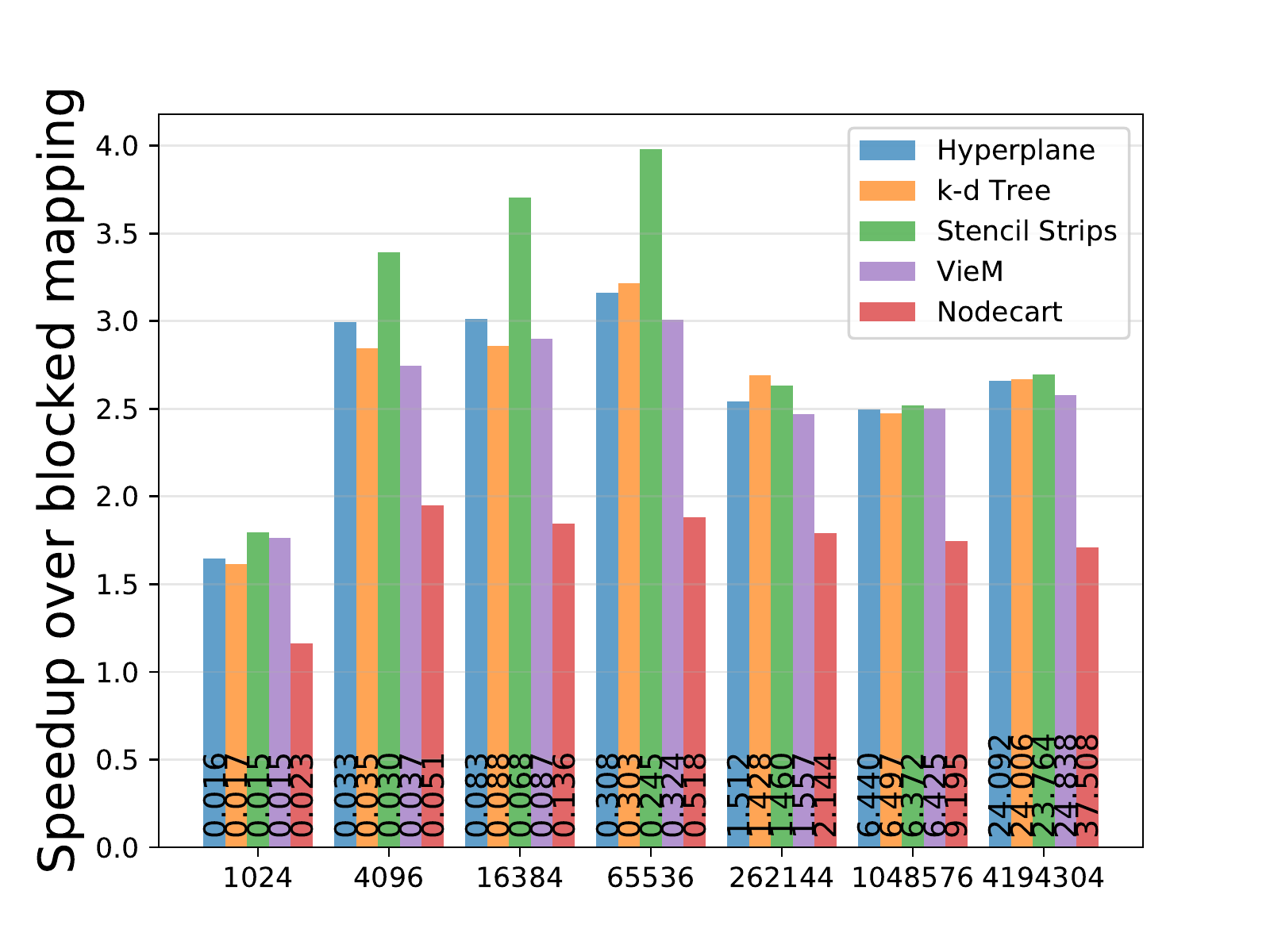}
			\end{subfigure}
		\begin{subfigure}{0.24\textwidth}
			\centering
			\caption{\emph{SuperMUC-NG}\xspace}
			\includegraphics[trim = 0 20 0 30, clip, height=3.1cm]{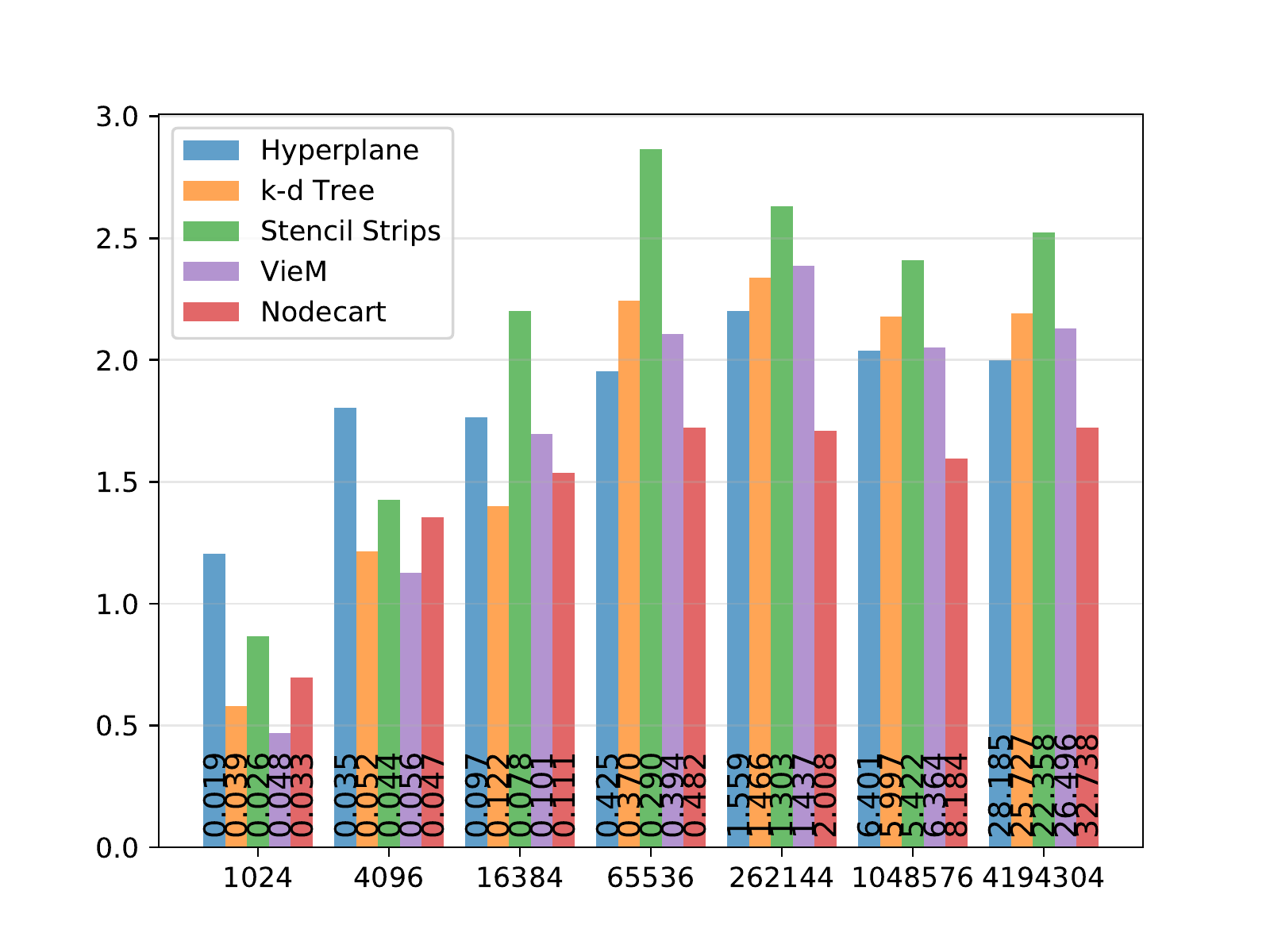}
		\end{subfigure}
		\begin{subfigure}{0.24\textwidth}
			\centering
			\caption{\emph{JUWELS}\xspace}
                         \includegraphics[trim = 0 20 0 30, clip, height=3.1cm]{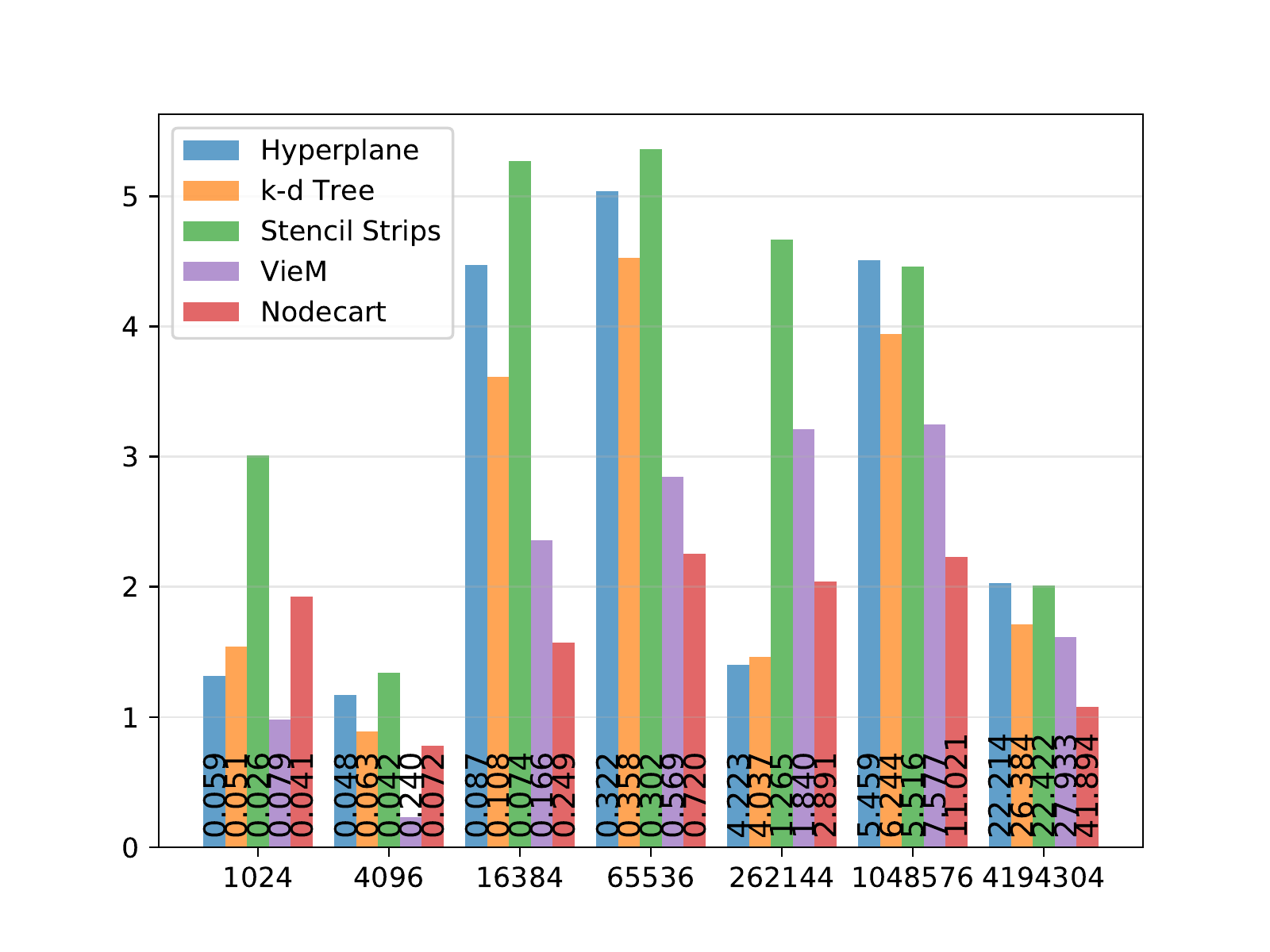}
		\end{subfigure}
	}
	\captionsetup[sub]{labelformat=empty, skip=0pt}
	\subcaptionbox{Nearest neighbor with hops}[\textwidth]['u']{
		\begin{subfigure}{0.24\textwidth}
			\center
			\caption{}
			\includegraphics[trim=0 20 0 10, clip, height=3.3cm]{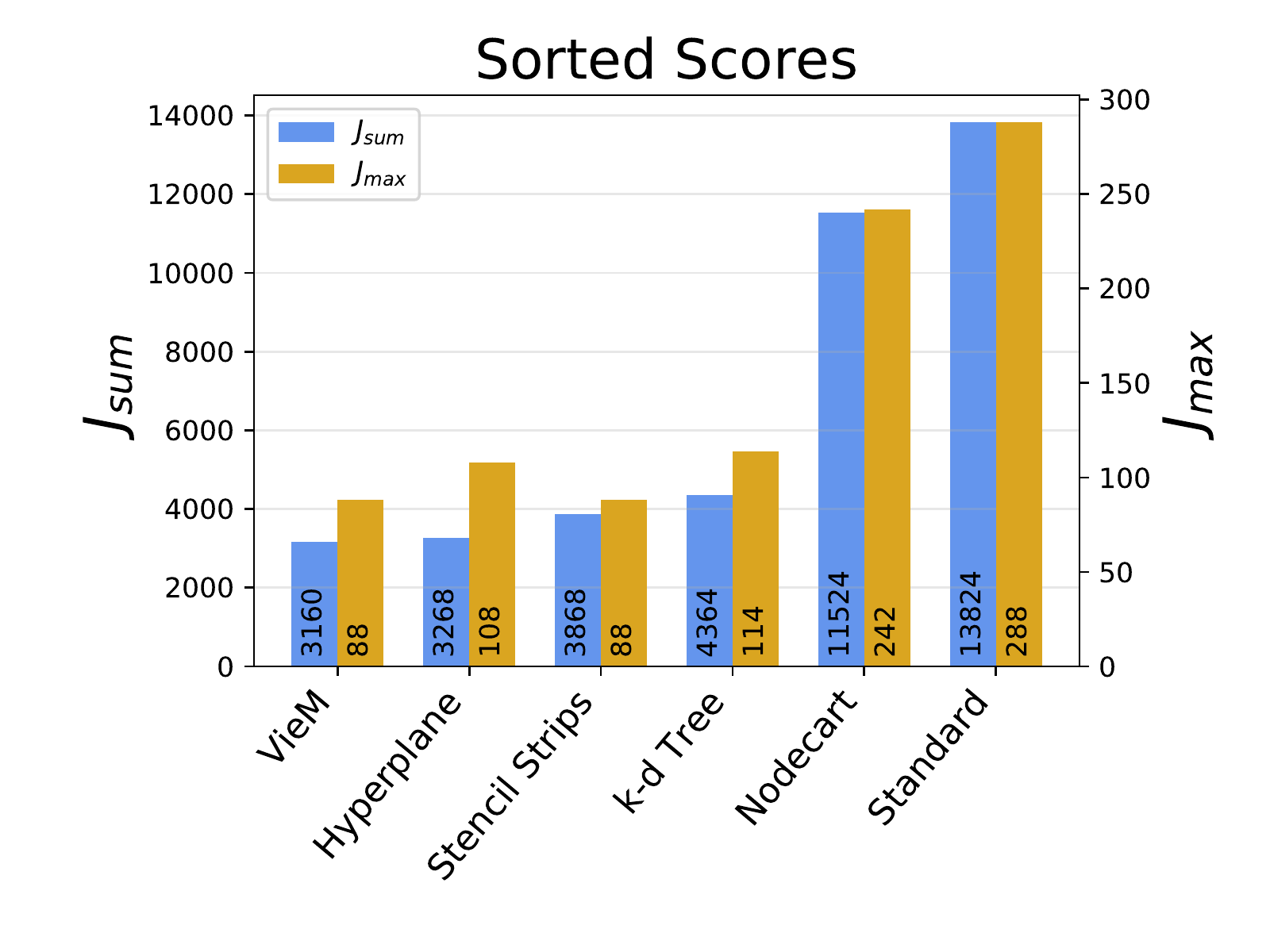}
		\end{subfigure}
		\hfill
		\begin{subfigure}{0.24\textwidth}
			\centering
			\includegraphics[trim = 0 20 0 30, clip, height=3.1cm]{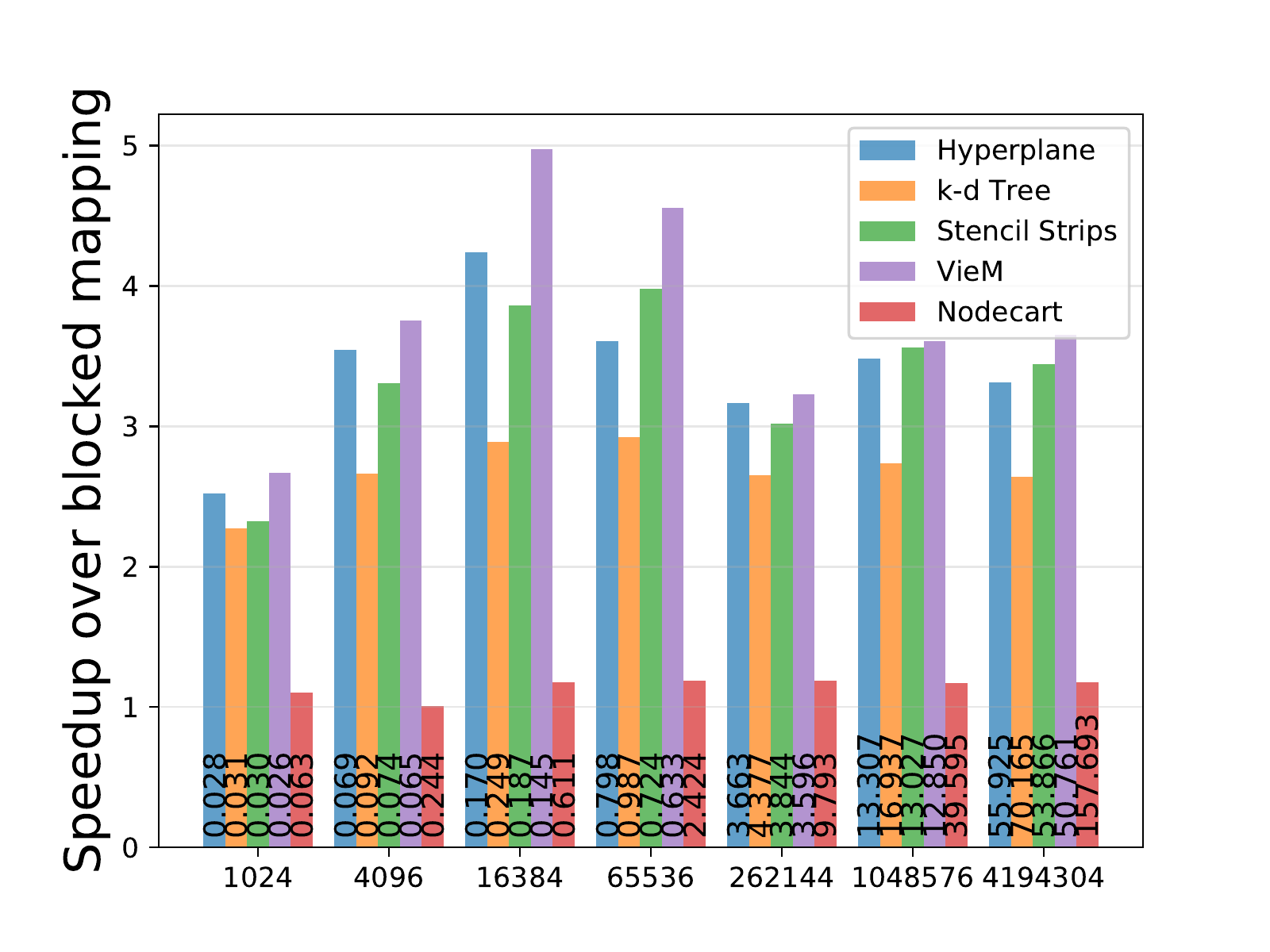}
			\end{subfigure}
		\begin{subfigure}{0.24\textwidth}
			\centering
			\includegraphics[trim = 0 20 0 30, clip, height=3.1cm]{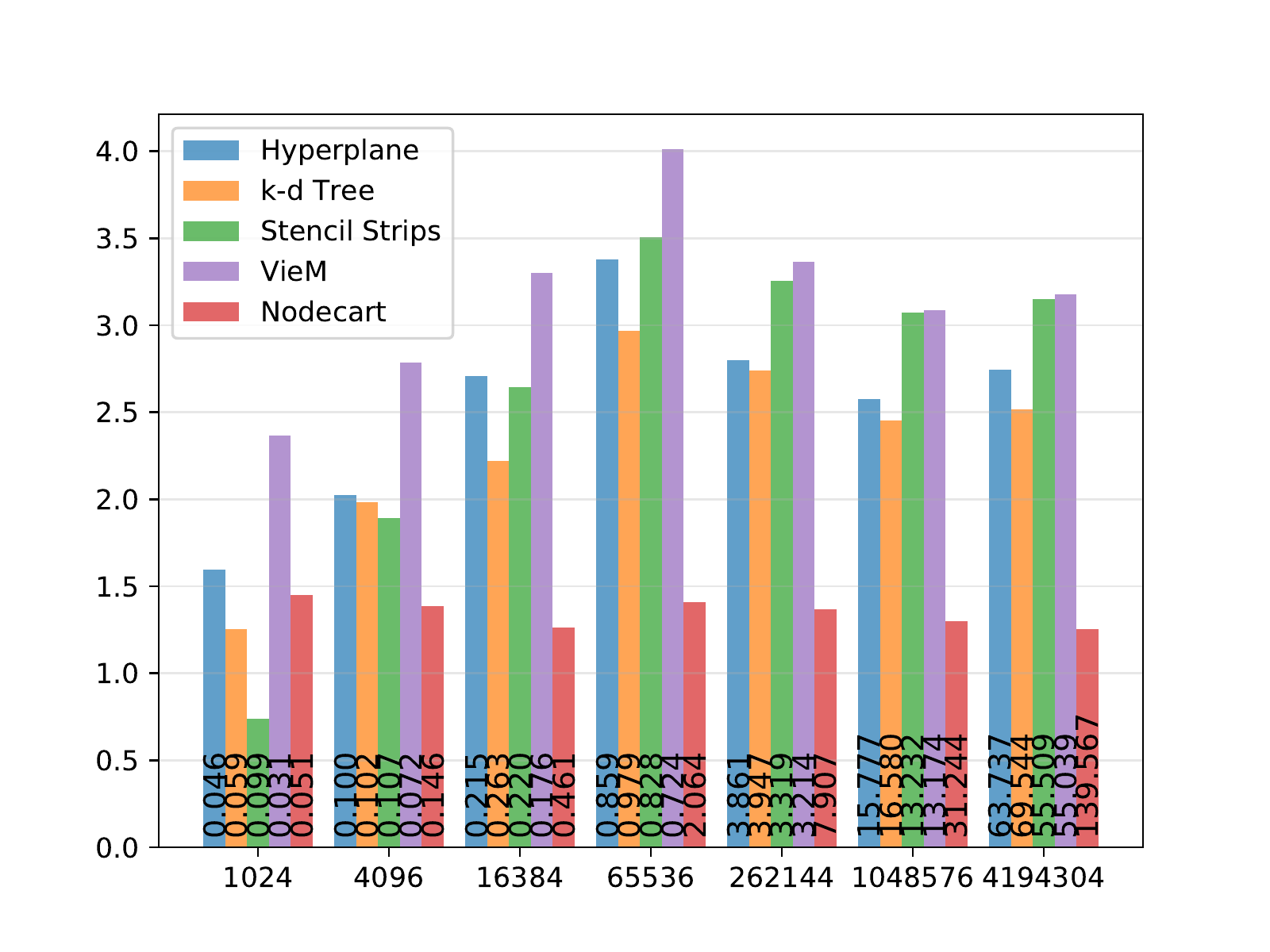}
		\end{subfigure}
		\begin{subfigure}{0.24\textwidth}
			\centering
			\includegraphics[trim = 0 20 0 30, clip, height=3.1cm]{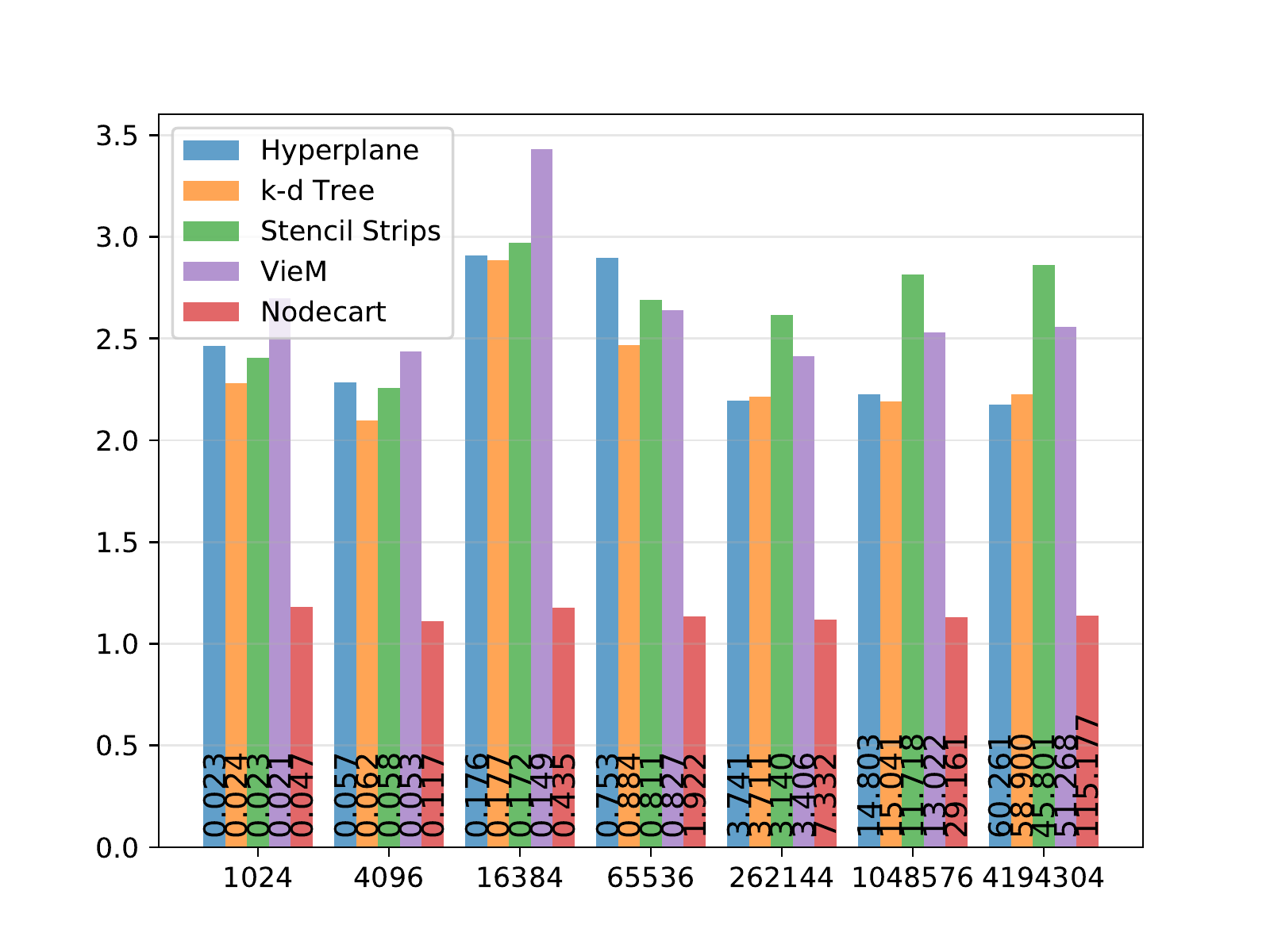}
		\end{subfigure}
	}
	\captionsetup[sub]{labelformat=empty, skip=0pt}
	\subcaptionbox{Component stencil}[\textwidth]['u']{
		\begin{subfigure}{0.24\textwidth}
			\center
			\caption{}
			\includegraphics[trim=0 20 0 10, clip, height=3.3cm]{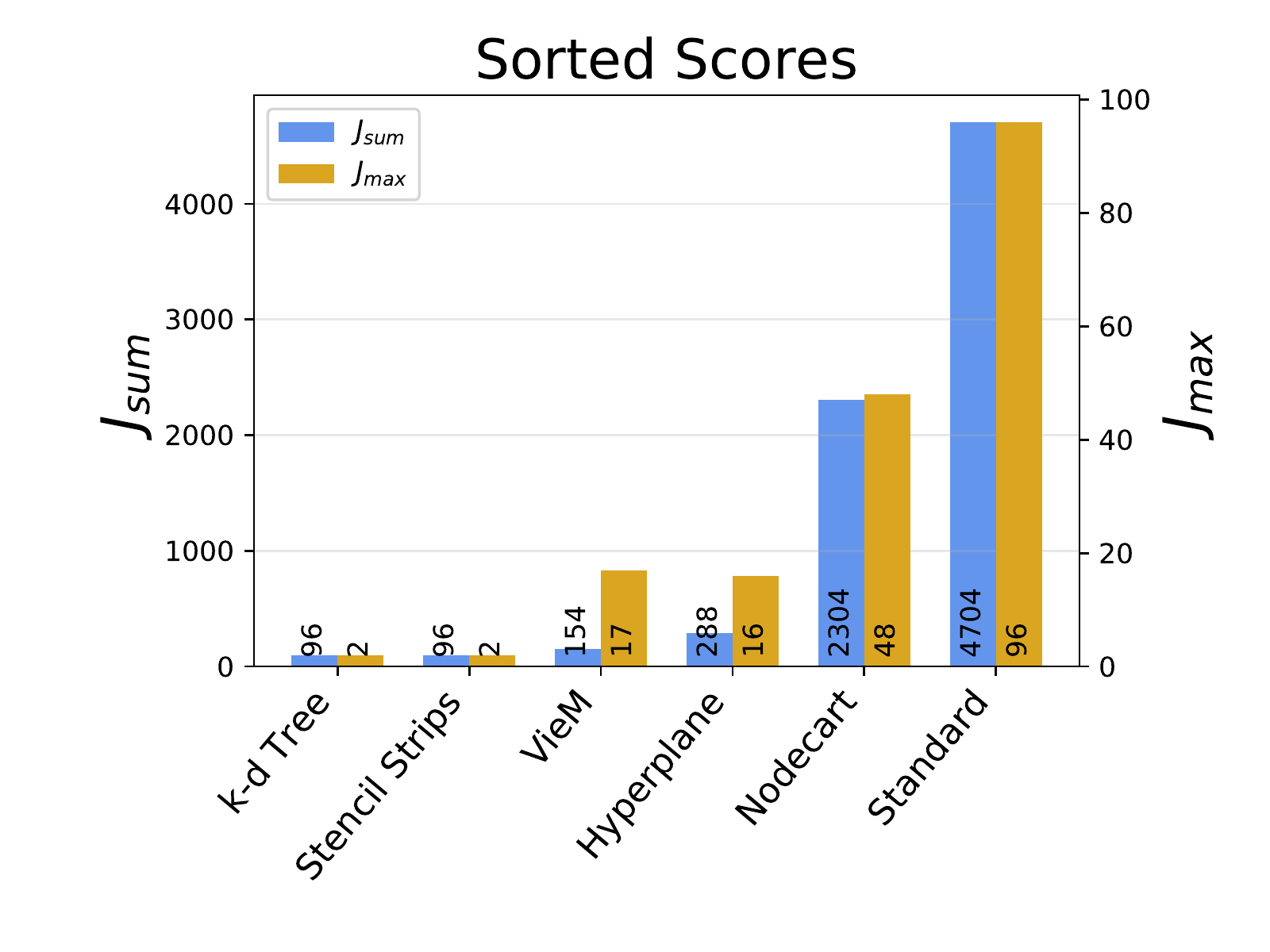}
			\label{fig:relnn50}
		\end{subfigure}
		\hfill
		\begin{subfigure}{0.24\textwidth}
			\centering
			\includegraphics[trim = 0 0 0 30, clip, height=3.2cm]{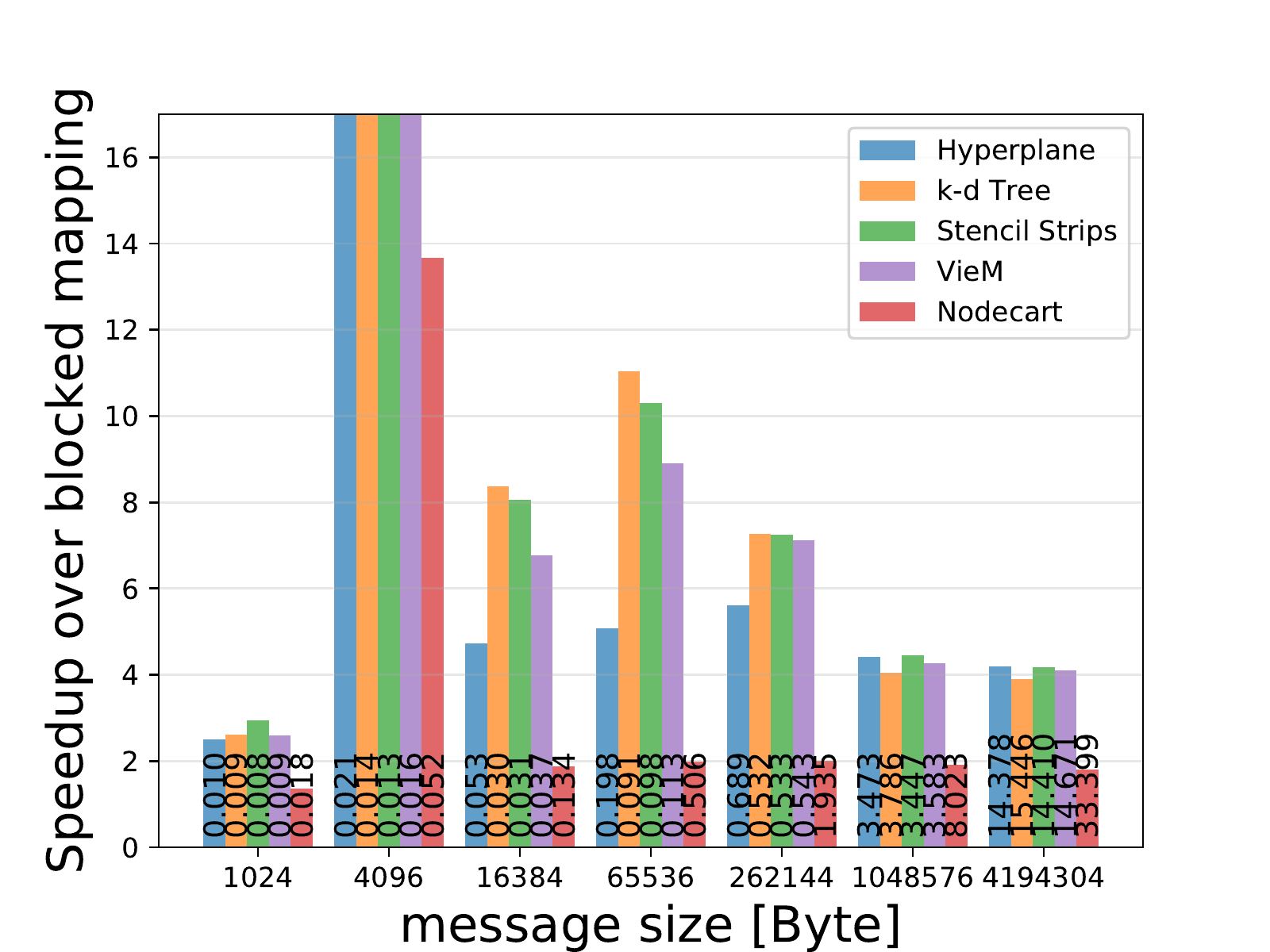}
			\end{subfigure}
		\begin{subfigure}{0.24\textwidth}
			\centering
			\includegraphics[trim = 0 0 0 30, clip, height=3.2cm]{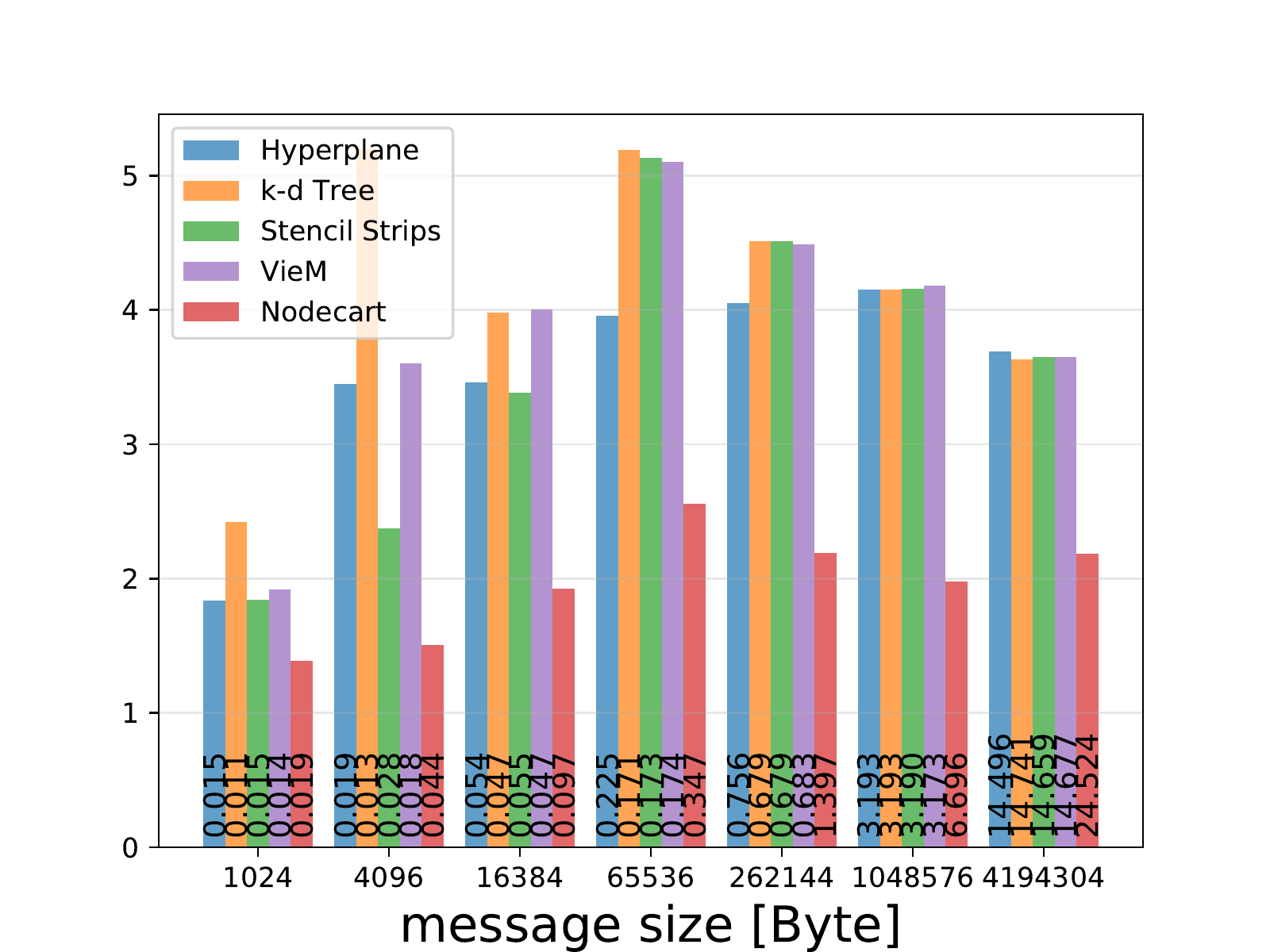}
		\end{subfigure}
		\begin{subfigure}{0.24\textwidth}
			\centering
			\includegraphics[trim = 0 0 0 30, clip, height=3.2cm]{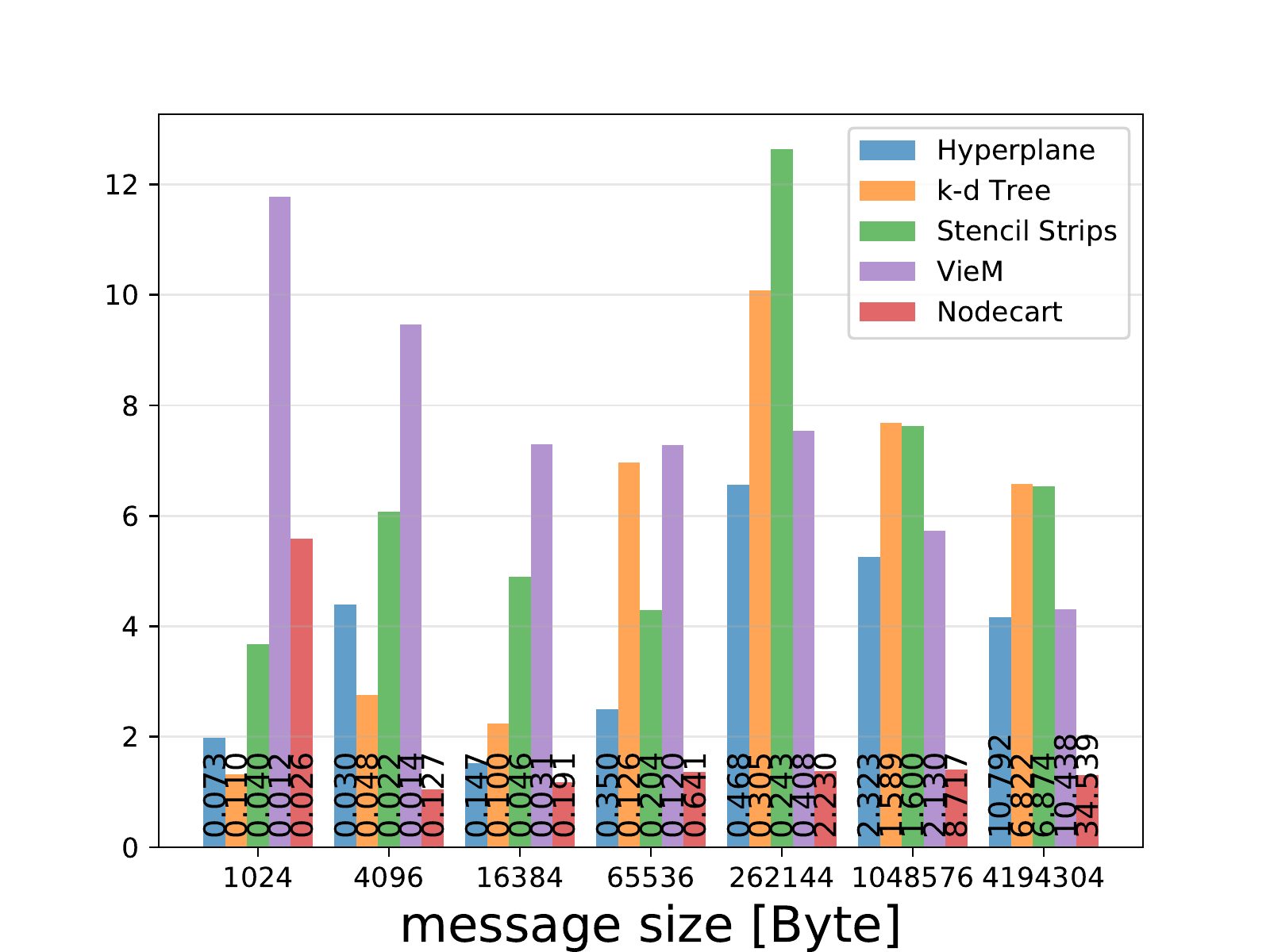}
		\end{subfigure}
	}
	\caption{Left column: scores of the algorithms (smaller is better).
	Right three columns: Speedup over blocked mapping (higher is better) with
	$N=50$ number of nodes and $p=48$ processes per node and grid sizes
	$50\times 48$ for the nearest neighbor stencil,
	the nearest neighbor stencil with hops and the
	component stencil on the \emph{VSC}$4$\xspace, \emph{SuperMUC-NG}\xspace, and \emph{JUWELS}\xspace. Absolute times 
	are written in the corresponding bar.
	}\label{fig:bandwidth50}
\end{figure*}
\begin{figure*}[ht]
	\captionsetup[sub]{labelformat=empty, skip=0pt}
	\subcaptionbox{Nearest neighbor}[\textwidth]['l']{
		\begin{subfigure}{0.24\textwidth}
			\center
			\caption{}
			\includegraphics[trim=0 20 0 10, clip, height=3.3cm]{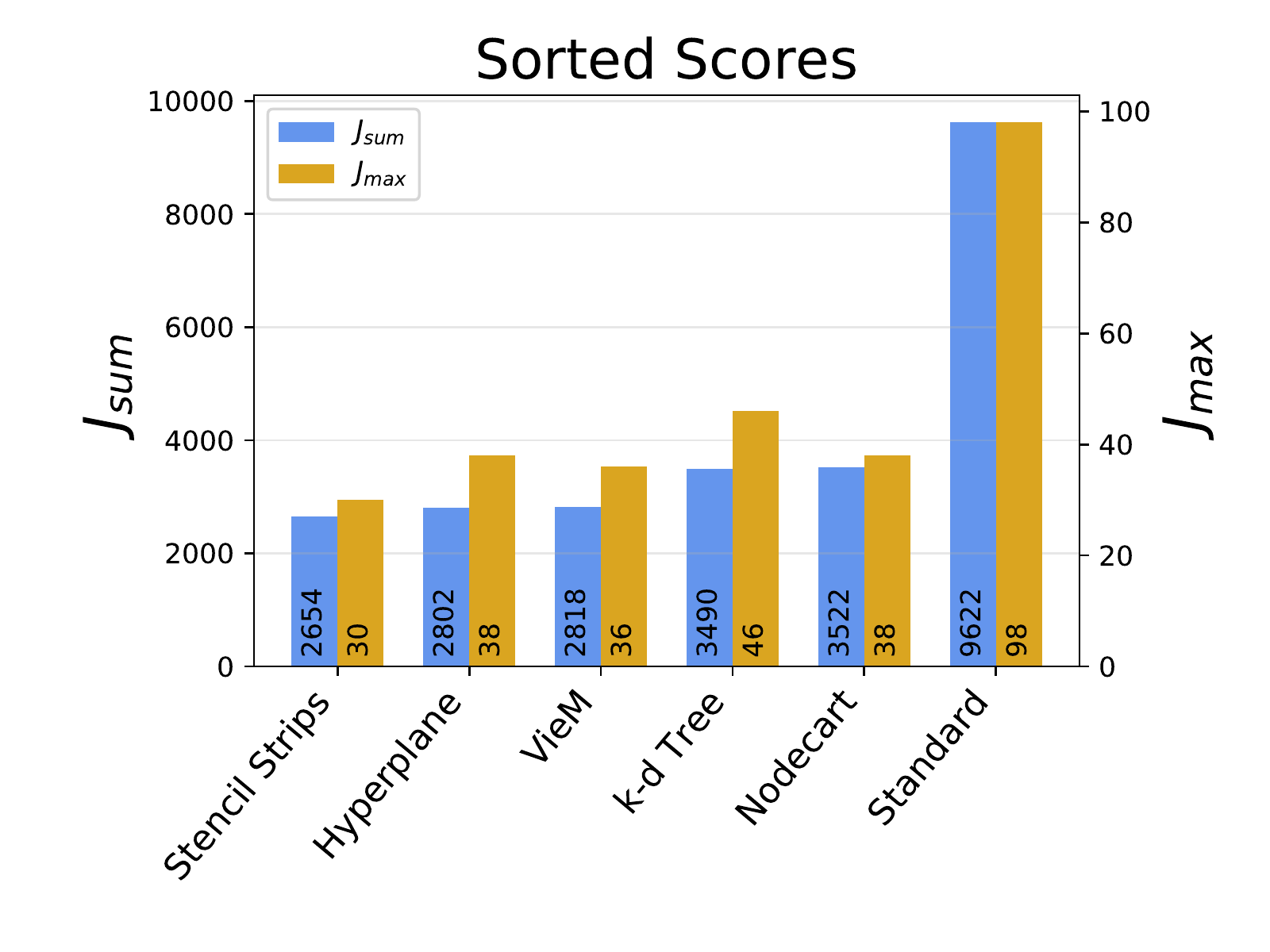}
		\end{subfigure}
		\hfill
		\begin{subfigure}{0.24\textwidth}
			\centering
			\caption{\emph{VSC}$4$\xspace}
			\includegraphics[trim = 0 20 0 30, clip, height=3.1cm]{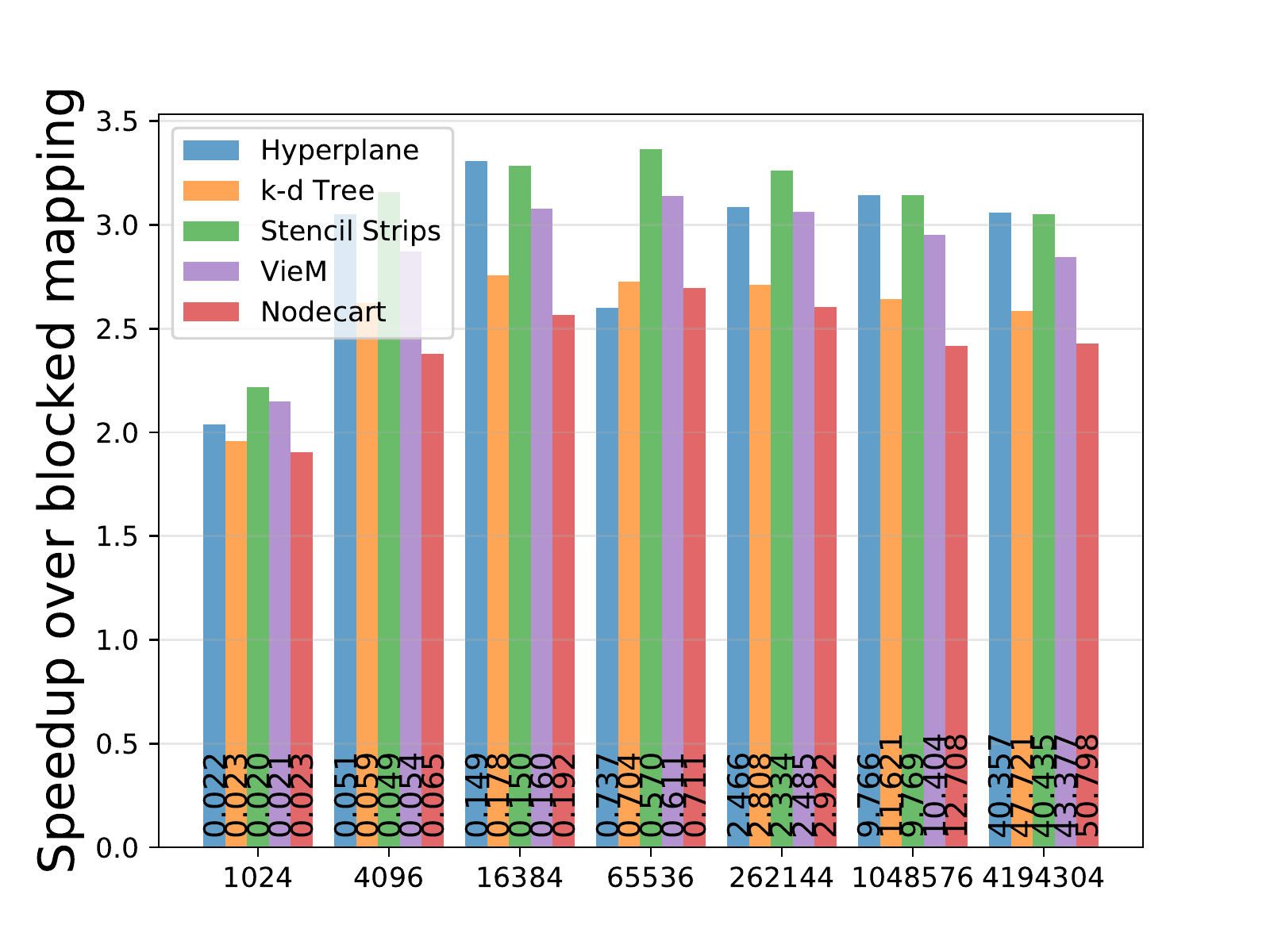}
			\end{subfigure}
		\begin{subfigure}{0.24\textwidth}
			\centering
			\caption{\emph{SuperMUC-NG}\xspace}
			\includegraphics[trim = 0 20 0 30, clip, height=3.1cm]{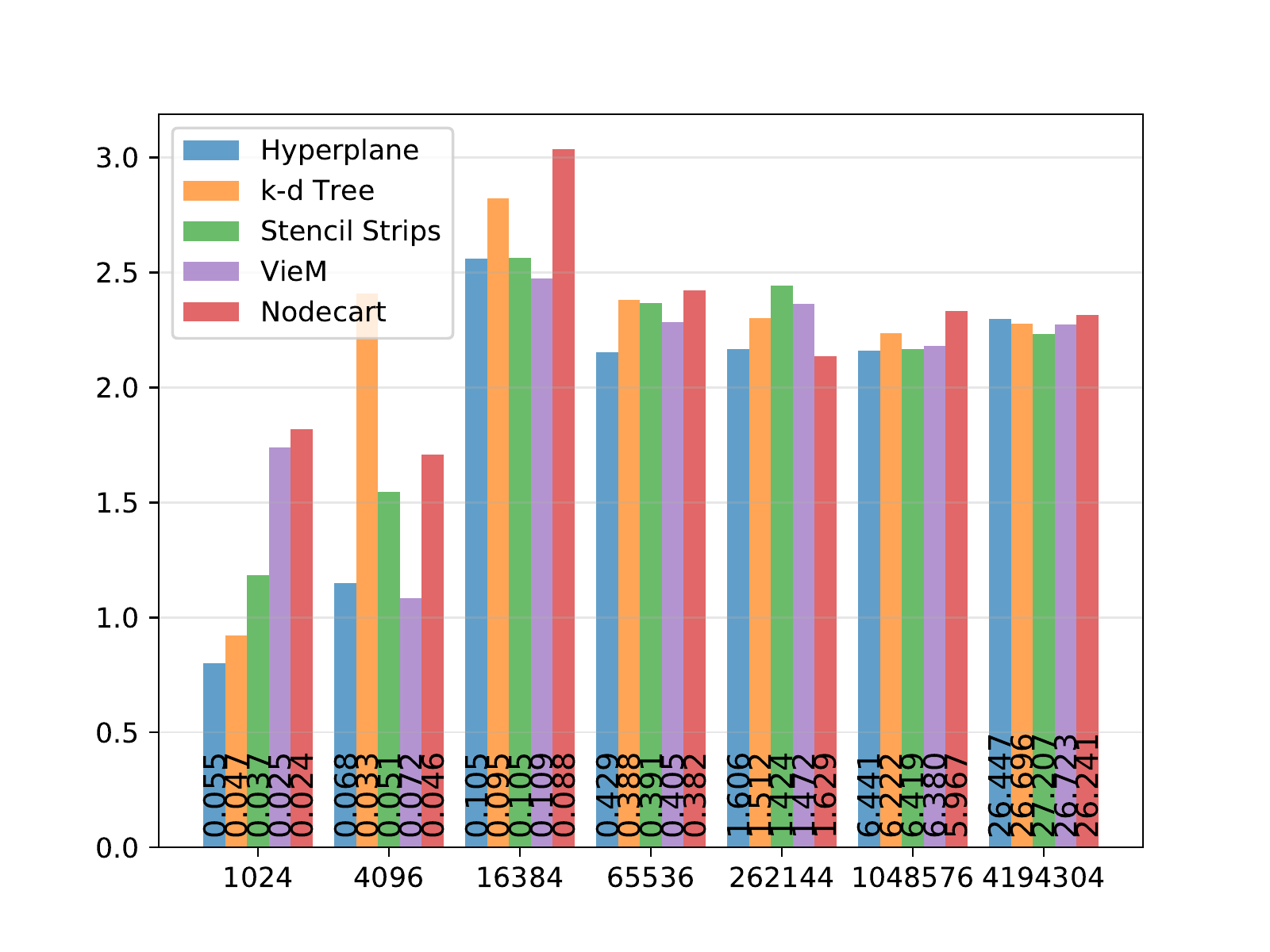}
		\end{subfigure}
		\begin{subfigure}{0.24\textwidth}
			\centering
			\caption{\emph{JUWELS}\xspace}
			\includegraphics[trim = 0 20 0 30, clip, height=3.1cm]{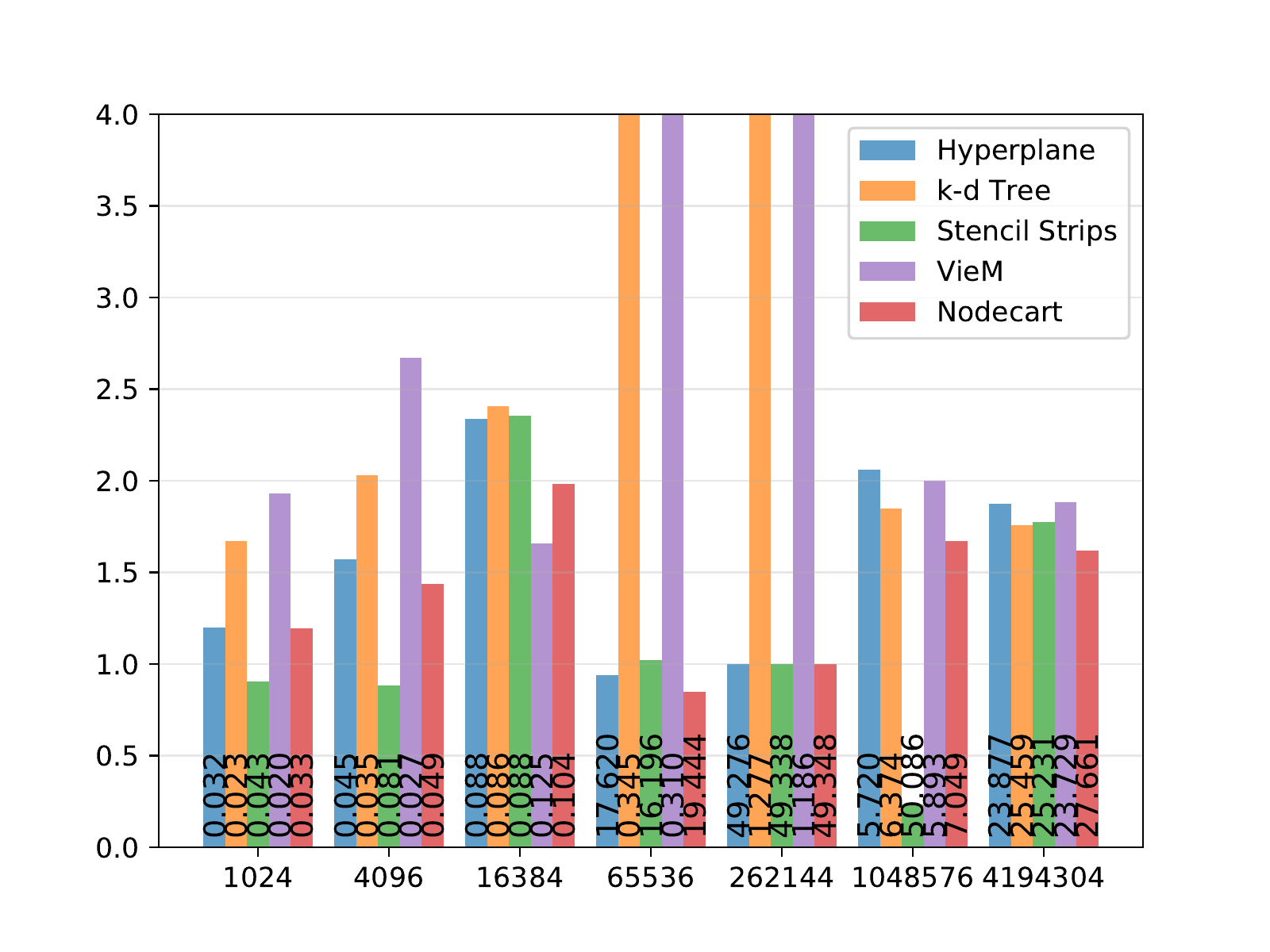}
		\end{subfigure}
	}
	\captionsetup[sub]{labelformat=empty, skip=0pt}
	\subcaptionbox{Nearest neighbor with hops}[\textwidth]['l']{
		\begin{subfigure}{0.24\textwidth}
			\center
			\caption{}
			\includegraphics[trim=0 20 0 10, clip, height=3.3cm]{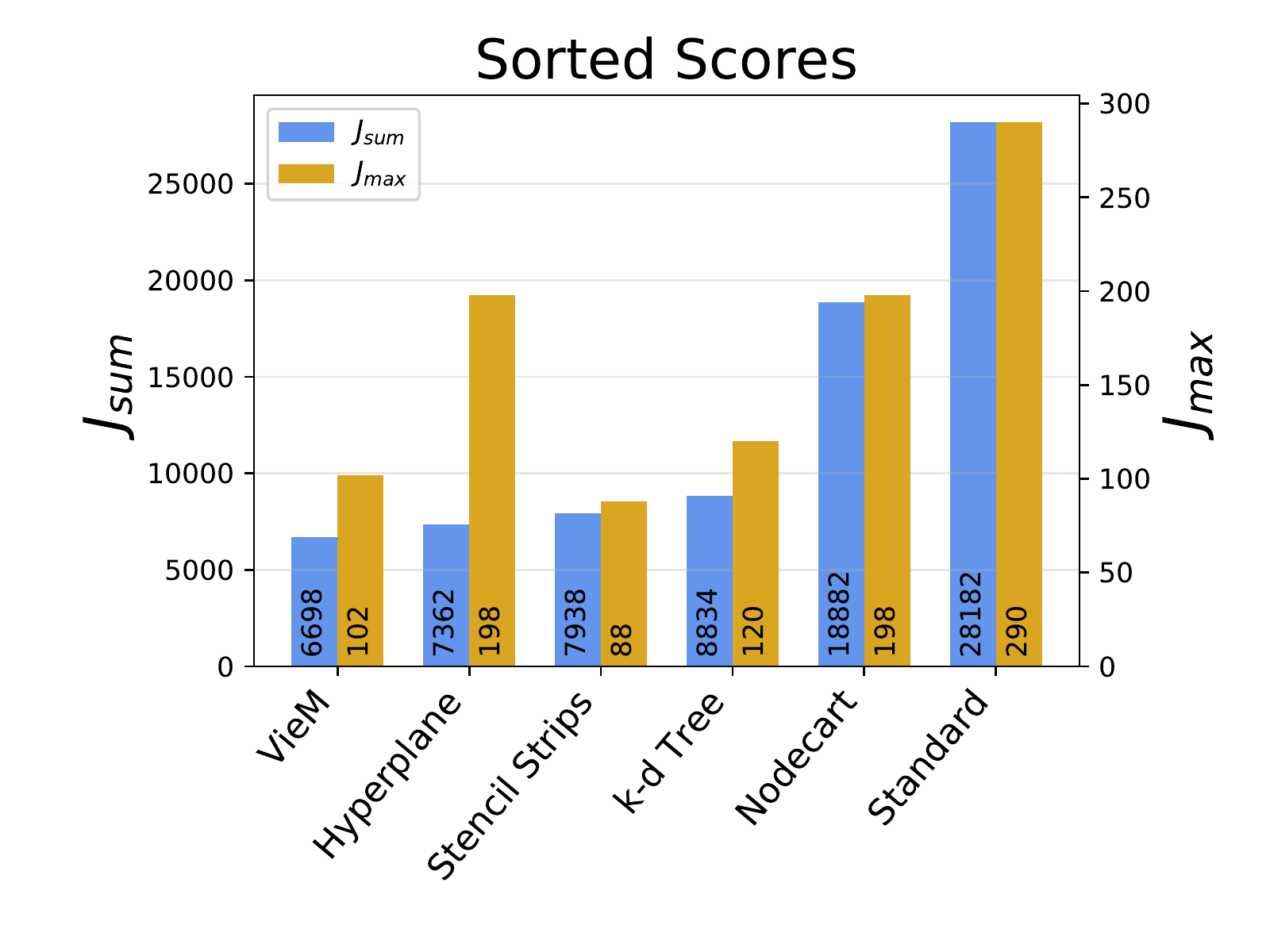}
		\end{subfigure}
		\hfill
		\begin{subfigure}{0.24\textwidth}
			\centering
			\includegraphics[trim = 0 20 0 30, clip, height=3.1cm]{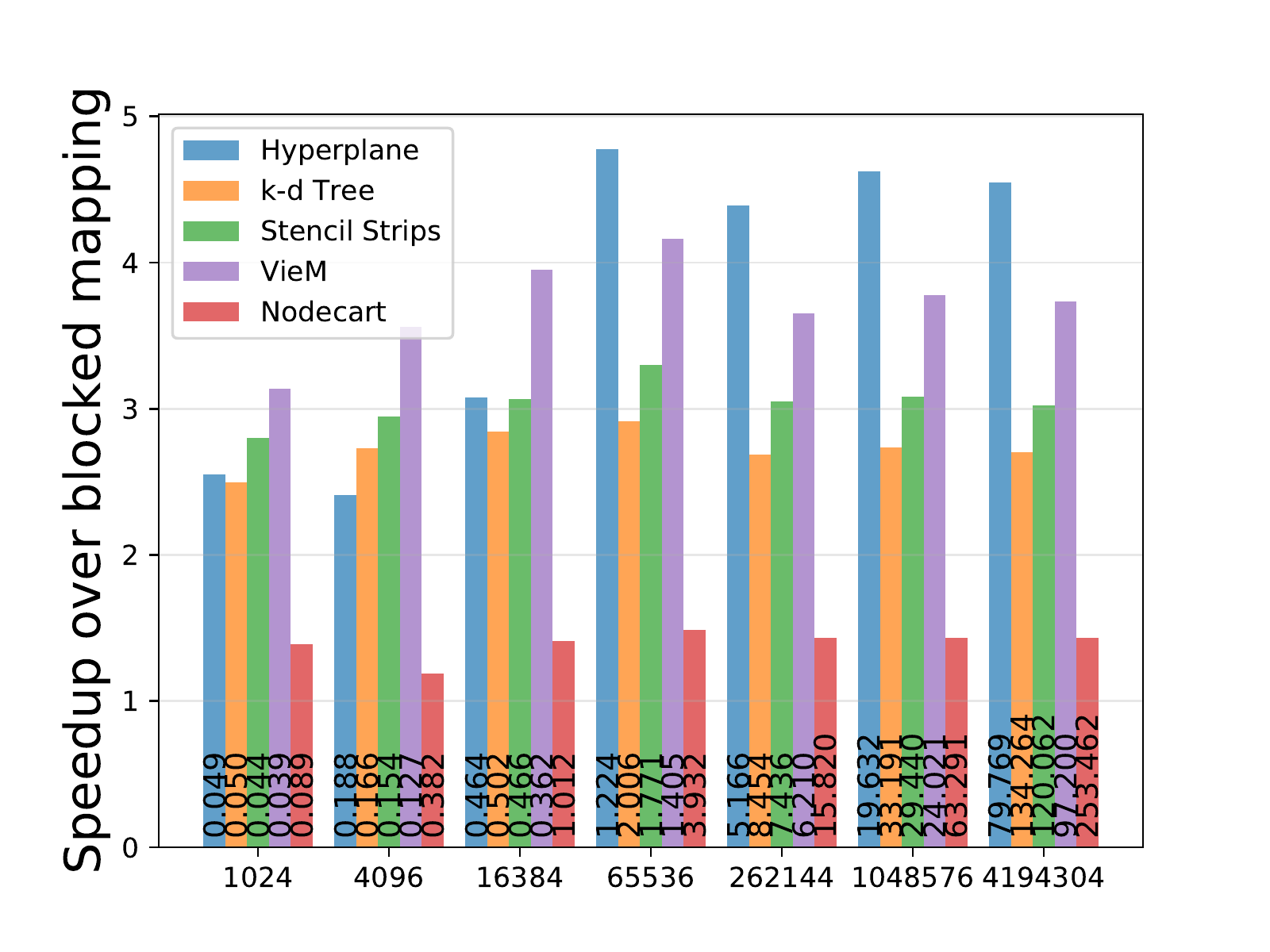}
			\end{subfigure}
		\begin{subfigure}{0.24\textwidth}
			\centering
			\includegraphics[trim = 0 20 0 30, clip, height=3.1cm]{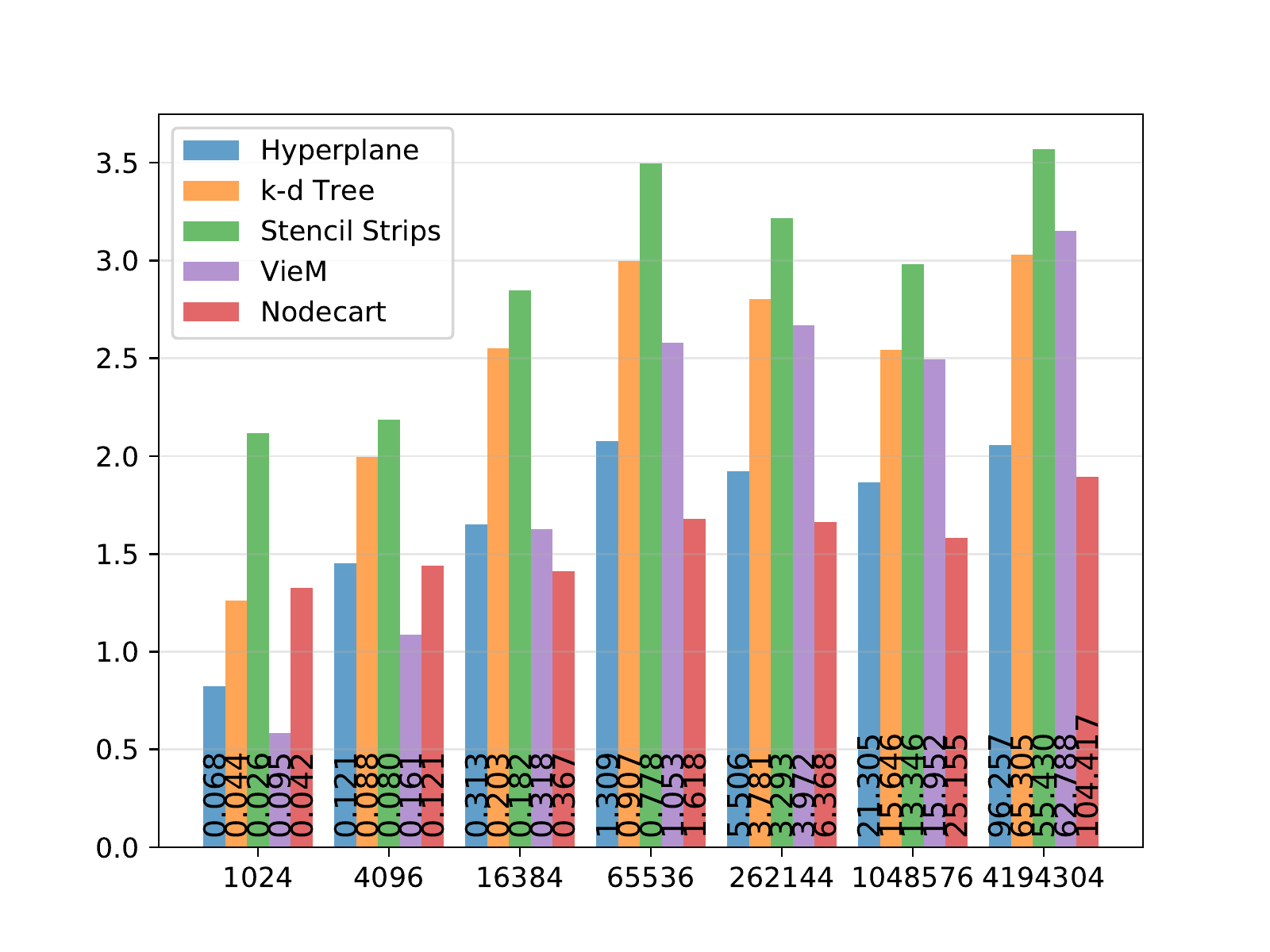}
		\end{subfigure}
		\begin{subfigure}{0.24\textwidth}
			\centering
			\includegraphics[trim = 0 20 0 30, clip, height=3.1cm]{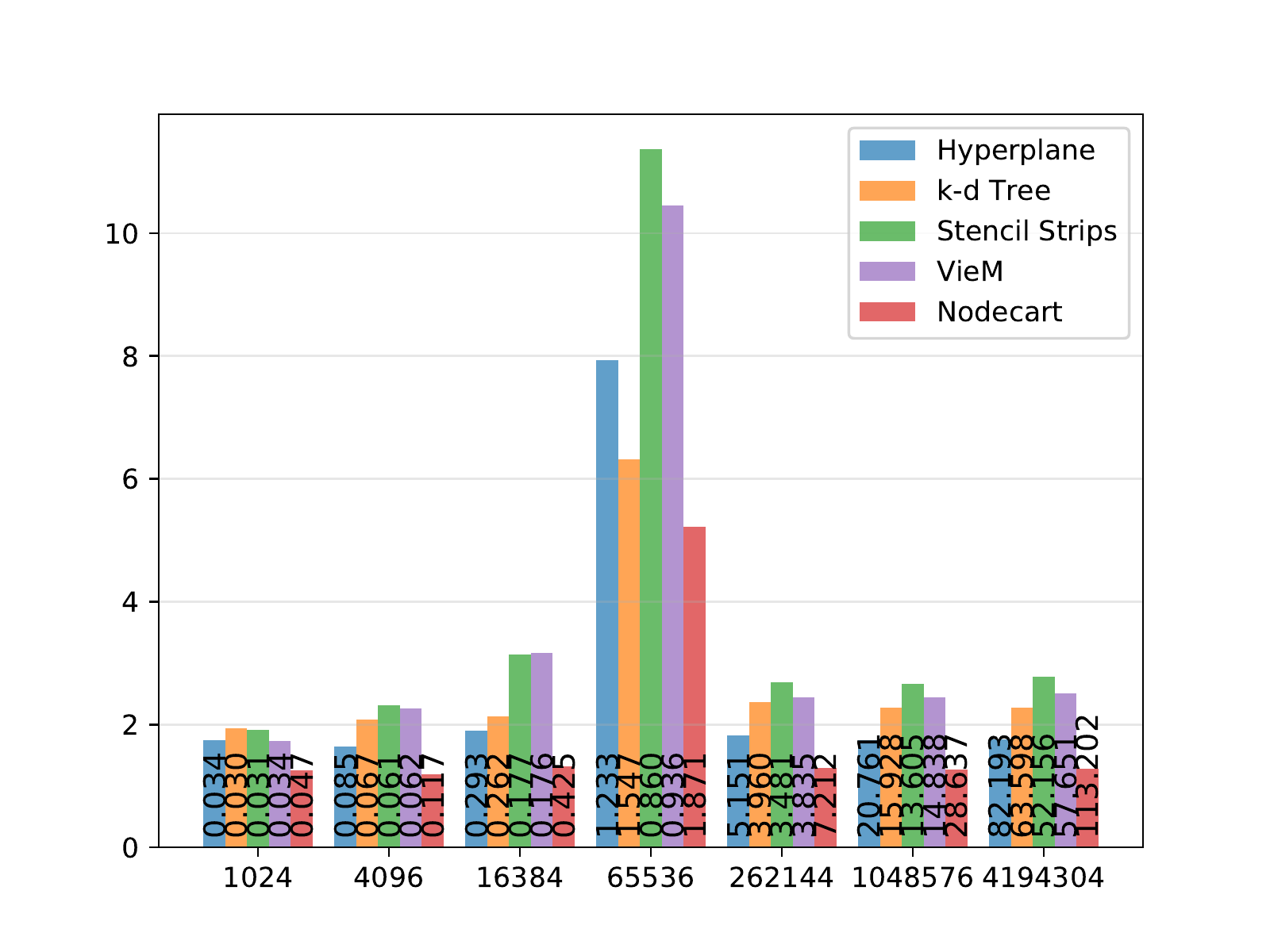}
		\end{subfigure}
	}
	\captionsetup[sub]{labelformat=empty, skip=0pt}
	\subcaptionbox{Component stencil}[\textwidth]['l']{
		\begin{subfigure}{0.24\textwidth}
			\center
			\caption{}
			\includegraphics[trim=0 20 0 10, clip, height=3.3cm]{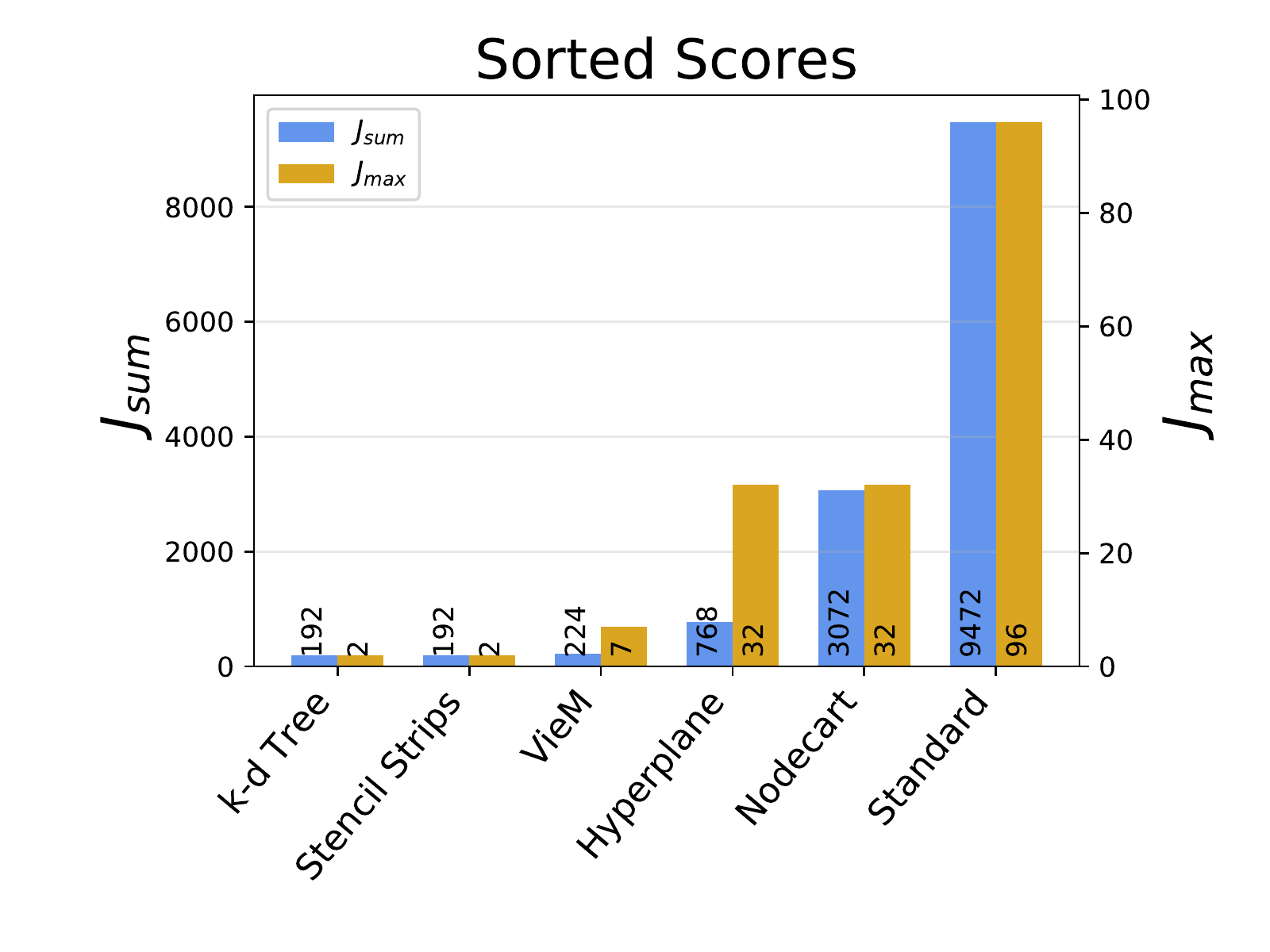}
		\end{subfigure}
		\hfill
		\begin{subfigure}{0.24\textwidth}
			\centering
			\includegraphics[trim = 0 0 0 30, clip, height=3.2cm]{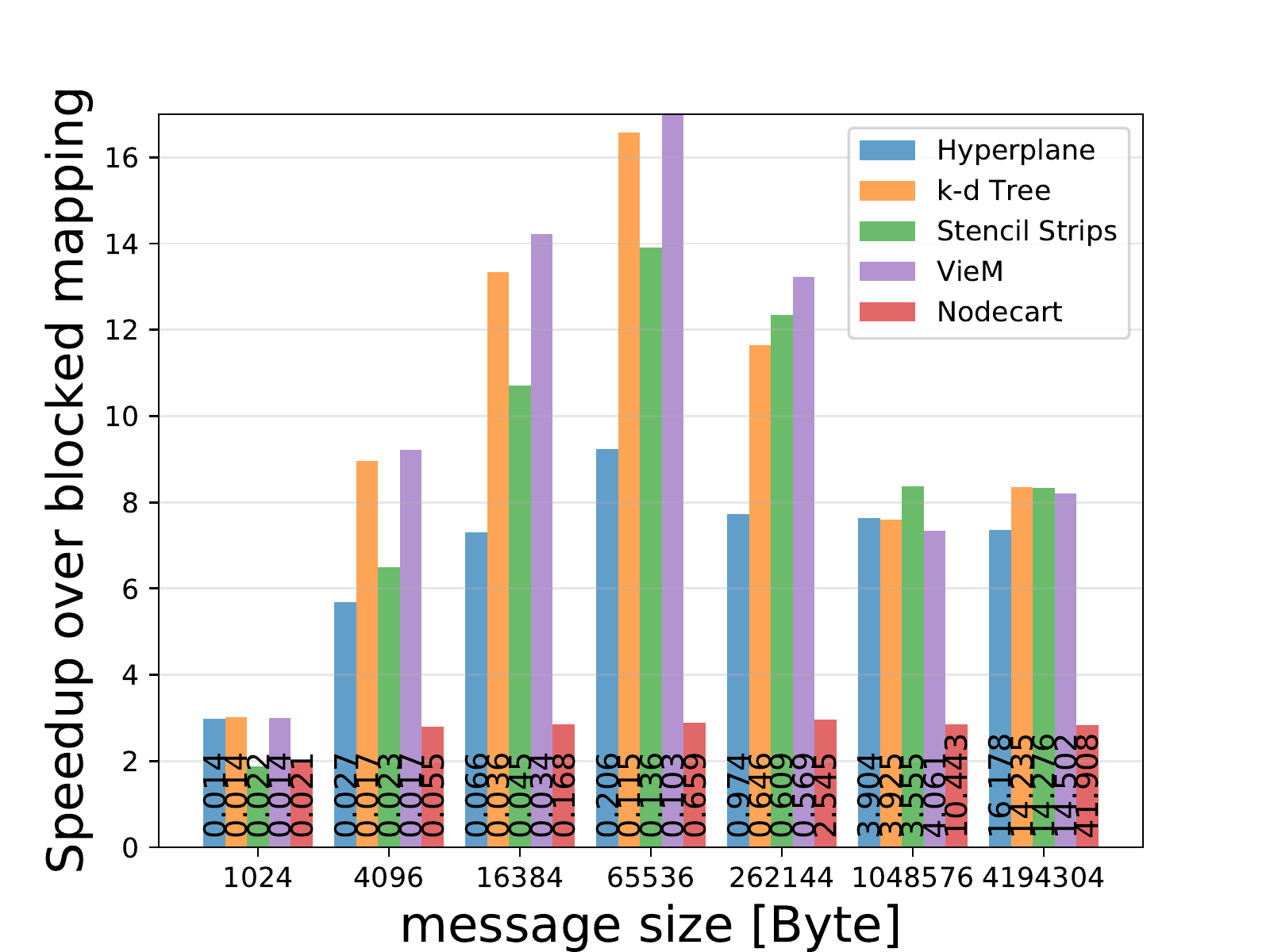}
			\end{subfigure}
		\begin{subfigure}{0.24\textwidth}
			\centering
			\includegraphics[trim = 0 0 0 30, clip, height=3.2cm]{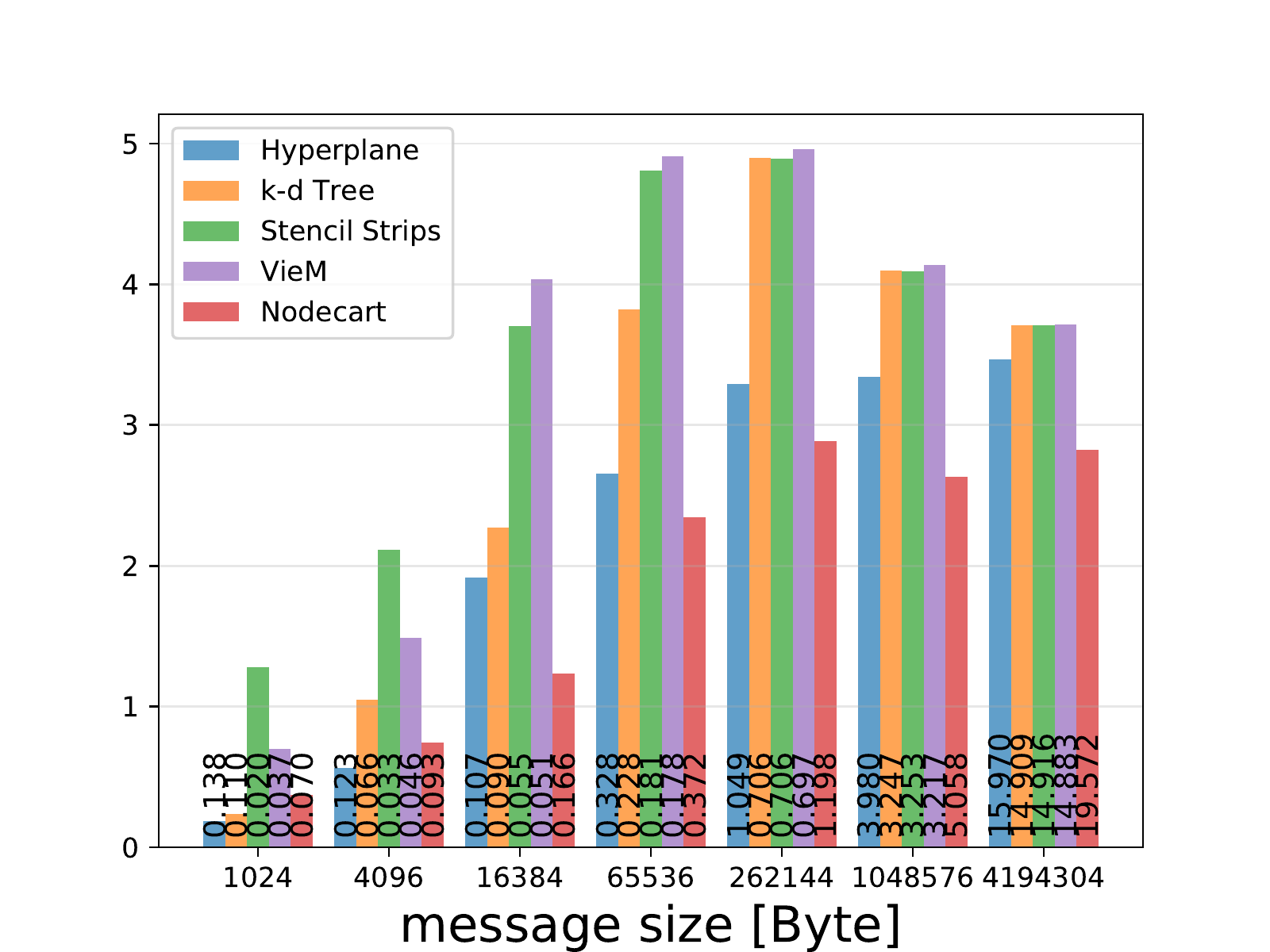}
		\end{subfigure}
		\begin{subfigure}{0.24\textwidth}
			\centering
			\includegraphics[trim = 0 0 0 30, clip, height=3.2cm]{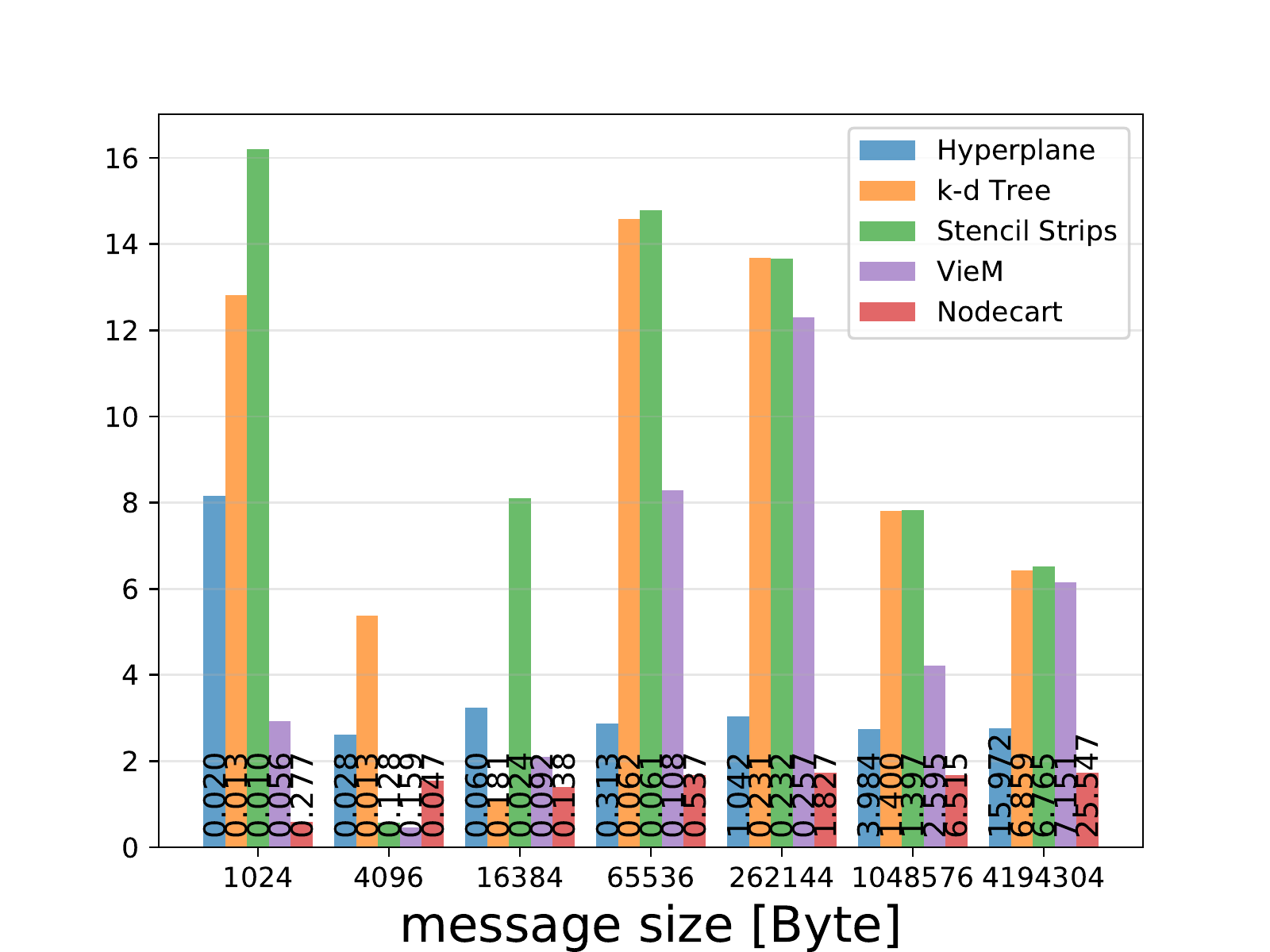}
		\end{subfigure}
	}
	\caption{Left column: scores of the algorithms (smaller is better). 
	Right three columns: Speedup over blocked mapping (higher is better) with
	$N=100$ number of nodes and $p=48$ processes per node and grid sizes
	$75\times 64$ for the nearest neighbor stencil,
	the nearest neighbor stencil with hops and the
	component stencil on the \emph{VSC}$4$\xspace, \emph{SuperMUC-NG}\xspace, and \emph{JUWELS}\xspace. Absolute times 
	are written in the corresponding bar.
	}\label{fig:bandwidth100}
\end{figure*}

\subsection{Machine Description}
\label{sec:systems}
\begin{table}
	\caption{Machines used in the Experiments.}
	\label{table:machines}
	\begin{scriptsize}
		\begin{tabularx}{\the\columnwidth}{l l c r}
		\hline
		Name & Processor & \texttt{MPI}\xspace libraries & Compiler \\
		\hline
		\emph{VSC}$4$\xspace & Intel Skylake Platinum 8174 & Intel\,MPI & icc 19.0.5\\
		\emph{SuperMUC-NG}\xspace & Intel Skylake Platinum 8174 & Intel\,MPI & icc 19.0.5\\
			\emph{JUWELS}\xspace & Intel Xeon Platinum 8168 & Intel\,MPI & icc 19.0.3 \\
			\hline
	\end{tabularx}
	\end{scriptsize}
\end{table}
We perform the experiments in Section~\ref{sec:throughput} 
on the Vienna Scientific Cluster $4$ (\emph{VSC}$4$\xspace), \emph{SuperMUC-NG}\xspace and
\emph{JUWELS}\xspace. \emph{VSC}$4$\xspace is composed of $790$ nodes, where each node is equipped with two Intel Skylake Platinum $8174$ processors running at
\SI{3.1}{\GHz}, \textrm{i.e.}\xspace, each node has $48$ cores. 
The nodes are connected with a two-level fat-tree (blocking factor 2:1) via
OmniPath with a capacity of \SI[per-mode=symbol]{100}{\giga\bit\per\second}.
\emph{SuperMUC-NG}\xspace consists of $\num{6336}$ compute nodes, also equipped with two Intel Skylake Platinum $8174$ processors running at
\SI{3.1}{\GHz}.
The nodes are bundled into islands, with a pruned OmniPath connection
between the islands (pruning factor $1$:$4$) and nodes within an island are connected
via a OmniPath fat-tree.
 \emph{JUWELS}\xspace is composed of $\num{2271}$ compute
nodes, with two Intel Xeon Platinum $8168$, running at \SI{2.7}{\GHz}. The
nodes are connected with a two-level fat-tree InfiniBand network (pruning factor 2:1)
with a capacity of \SI[per-mode=symbol]{100}{\giga\bit\per\second}. A brief summary of the
systems and libraries is given in
Table~\ref{table:machines}.

\subsection{Experimental Setup}
\label{sec:methodology}
\begin{lstlisting}[frame=lines, caption={Interface used for a $k$-neighborhood aware \texttt{MPI}\xspace
Cartesian communicator}, label={stencil_cart}]
int MPIX_Cart_stencil_comm(MPI_Comm oldcomm,
    const int ndims, const int dims[],
    const int periods[], const int reorder,
    const int stencil[], const int k,
    MPI_Comm *cartcomm);
\end{lstlisting}

All algorithms were implemented in \mbox{C\texttt{++}}\xspace$11$.  The code was compiled
with the Intel compiler and full optimization flags (-O$3$) and Intel
MPI.

\begin{figure*}[ht]
	\begin{subfigure}{0.31\textwidth}
		\centering
		\includegraphics[trim = 0 15 0 25, clip, scale=0.35]{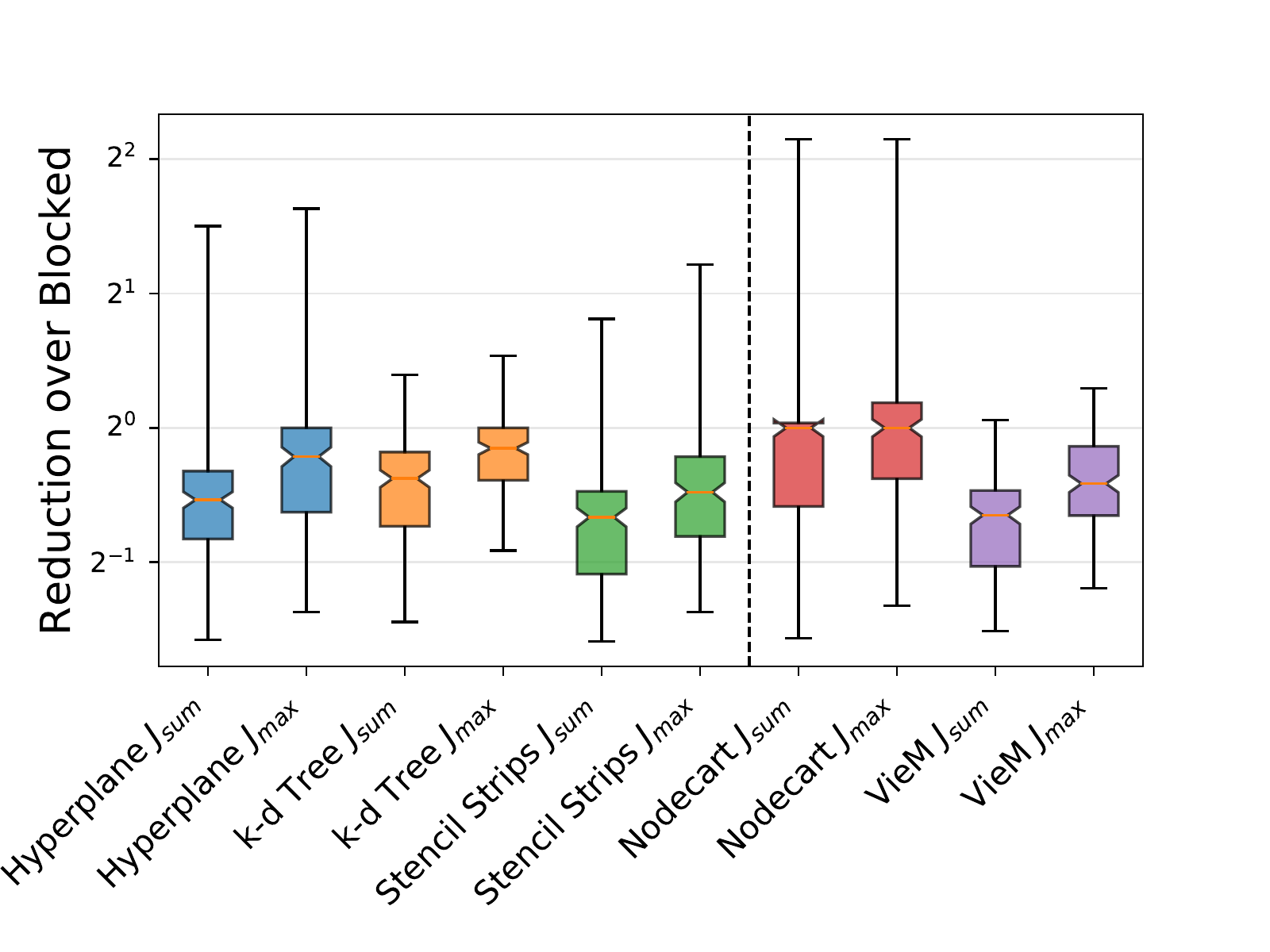}
	\caption{}
	\end{subfigure}
	\hfill
	\begin{subfigure}{0.31\textwidth}
		\includegraphics[trim = 0 15 0 25, clip, scale=0.35]{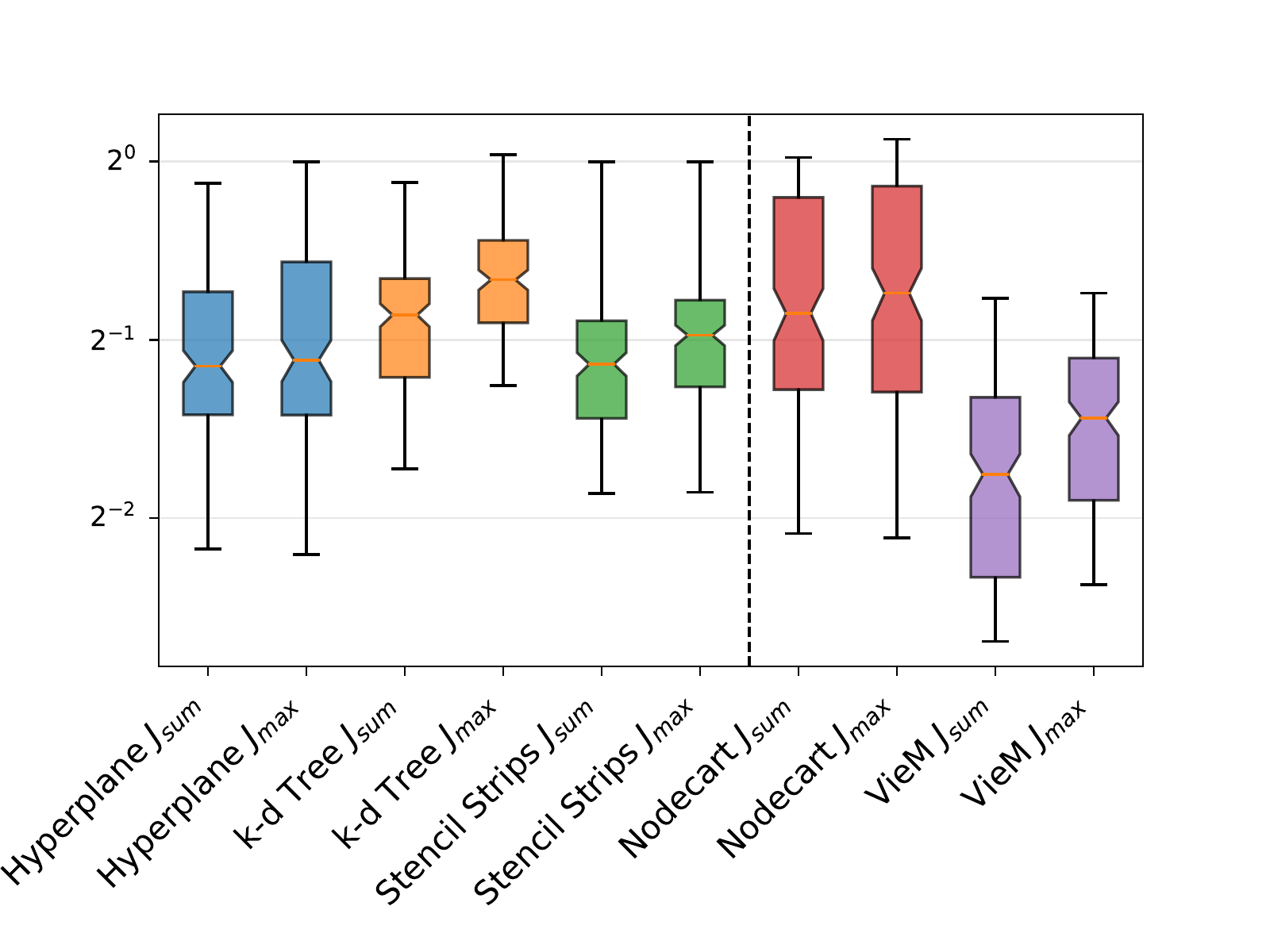}
	\caption{}
	\end{subfigure}
	\hfill
	\begin{subfigure}{0.31\textwidth}
		\includegraphics[trim = 0 15 0 25, clip, scale=0.35]{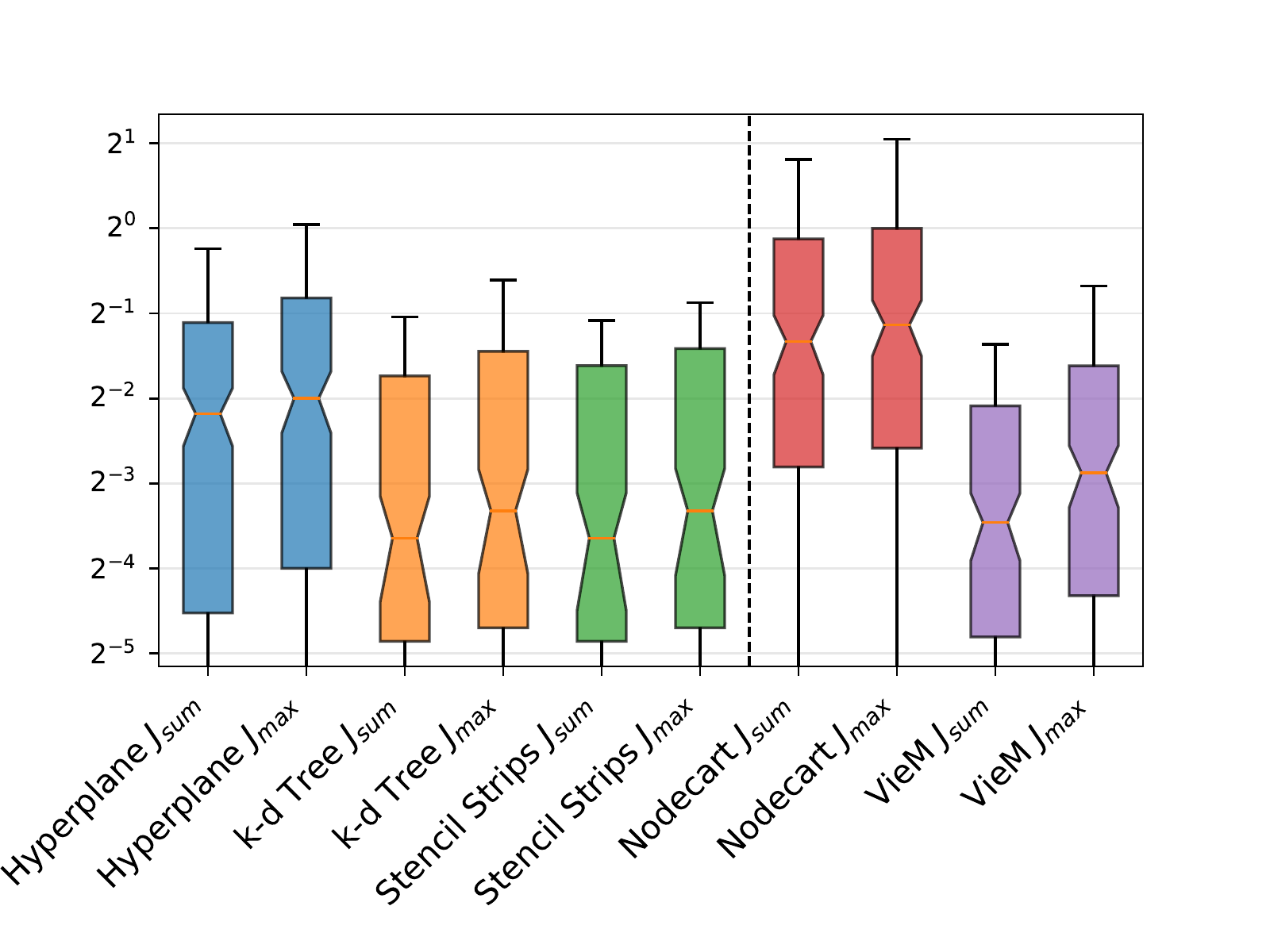}
	\caption{}
	\end{subfigure}
	\caption{Reduction for
	nearest neighbor stencil (a), for the nearest neighbor stencil with
	additional two hops along the first dimension (b) and for the
	artificial component stencil (c). Lower is better. For each algorithm \texttt{sum}
	defines the total reduction, whereas \texttt{max} defines the reduction
	of the bottleneck node. \texttt{Hyperplane}\xspace depicted in blue, $k$-$d$~\texttt{tree}\xspace in
	yellow, \texttt{Stencil Strips}\xspace in green, \texttt{Nodecart}\xspace in red, and \texttt{VieM}\xspace as
	comparison in gray.} 
	\label{fig:distribution}
\end{figure*}
To remap for arbitrary stencils that cannot be expressed with the
\texttt{MPI}\xspace Cartesian interfaces, we use an interface similar to
\texttt{MPI\_\-Cart\_\-create}\xspace as shown in Listing~\ref{stencil_cart}. The array
\texttt{stencil[]} consists of a flattened list of
relative offsets along each dimension per neighbor. The number of
neighbors is~\texttt{k}, thus, the array
\texttt{stencil[]} is of length~$\texttt{k}\cdot \texttt{ndims}$.

As for the $k$-neighborhoods, we choose to test with the presented 
$2$-dimensional stencils from Section~\ref{sec:preliminaries}, \textrm{i.e.}\xspace, 
nearest neighbor stencil, the nearest neighbor stencil with hops and
the component stencil, as depicted in Figure~\ref{fig:stencils}. 

All grids were created according to the \texttt{MPI\_\-Dims\_\-create}\xspace specifications,
that is with the sizes of the dimensions being as close as possible to
each other~\cite{MPI-3.1,Traff15:dimscreate}.

For the stencil exchange, we instantiated a distributed graph communicator from
the Cartesian communicator and the $k$-neighborhood in order to call the
\texttt{MPI\_\-Neighbor\_\-alltoall}\xspace routine. Synchronization between the processes before each
collective message exchange was done with an \texttt{MPI\_\-Barrier}\xspace and we define the time
needed for the operation, as the maximal time any process spent in the
\texttt{MPI\_\-Neighbor\_\-alltoall}\xspace routine.  As we assumed unit-weighted communication edges,
every process sends and receives the same amount of data to its communication neighbors.

\subsection{Inter-Node Communication Analysis}
\label{sec:reduction}
\begin{figure}[b]
	\center
	\includegraphics[trim= 0 70 0 30, clip, scale=0.35]{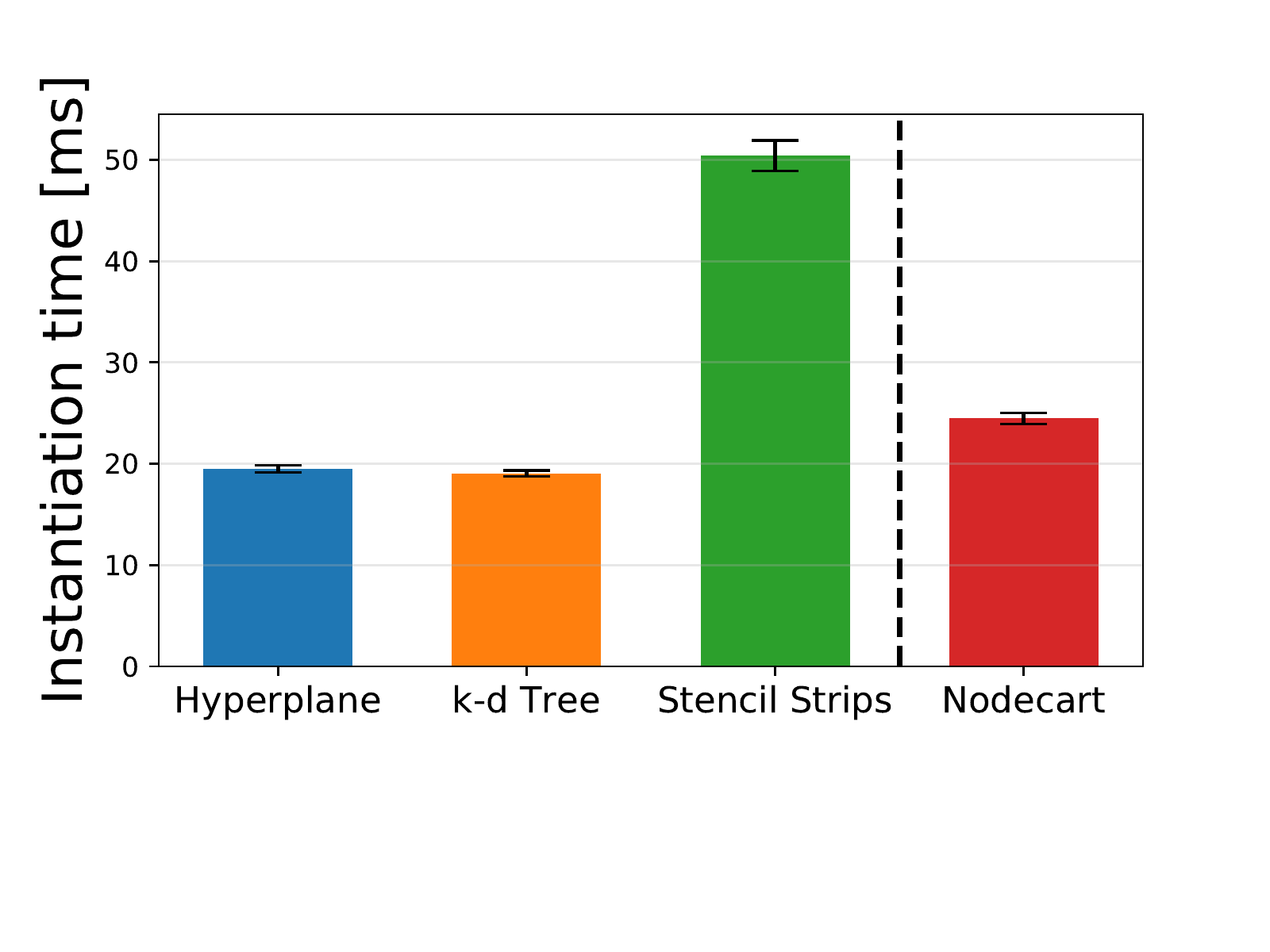}
	\caption{Instantiation time in \si{\milli\second} on a $100\times 48$
	nearest neighbor stencil instance, run on \emph{VSC}$4$\xspace (lower is better). Each algorithm was
	instantiated $200$ times and after outlier removal, we plot the
	mean with a $95\%$ confidence interval. \texttt{VieM}\xspace was omitted since it 
	took two orders of magnitude longer and would distort~the~scale.}
	\label{fig:instantiation} 
\end{figure}

In this section, we investigate the
reduction of~$J_\text{sum}$ and~$J_\text{max}$ of different algorithms over a blocked mapping.
Note that the evaluation of the objective function is machine independent.
We evaluate the performance of the algorithms on a varying input
number of compute nodes, processes per compute node, and number of dimensions.  To
be more precise, the number of nodes is given by the set~$\mathcal{N}=\{10, 13,
16, \dots, 33\}$ and the number of processes per node by the
set~$\mathcal{P}=\{10, 13, 16, \dots, 31\} \cup \{32\}$, while we restrict the set
of dimensions to be $\mathcal{D}=\{2, 3\}$.  The set of instances is therefore
the Cartesian product~$\mathcal{I}=\mathcal{N} \times \mathcal{P} \times
\mathcal{D}$, with $|\mathcal{I}|=144$. We define the reduction to
be~$\frac{C_X}{C_b}$, where~$C_X$ is the amount of inter-node communication
induced by algorithm~$X$ and~$C_b$ the amount of inter-node communication
induced by a blocked mapping. We compare both the
total reduction of inter-node communication~$J_\text{sum}$, as well as the
reduction over the bottleneck node~$J_\text{max}$, as defined in
Section~\ref{sec:preliminaries}.  We evaluate the presented algorithms and
compare ourselves to Gropp's algorithm (\texttt{Nodecart}\xspace) and \texttt{VieM}\xspace, both described
in Section~\ref{sec:relatedwork}.  The configuration for \texttt{VieM}\xspace was set to the
strongest setting, prioritizing quality in terms of reduction of inter-node
communication instead of speed. As for the local-search neighborhood, we
allowed swaps between any connected pair of vertices, \textrm{i.e.}\xspace, we considered the
largest search space. We set the \texttt{hierarchy\_parameter\_string}
and the \texttt{distance\_parameter\_string} to the $n$:$N$ and $0$:$1$, respectively
so that \texttt{VieM}\xspace optimizes our objective function. 

The results for the three different $k$-neighborhood communication patterns can
be seen in Figure~\ref{fig:distribution}. We plot the median improvement (orange)
with a~$95\%$ confidence interval (indicated by the notches), calculated with a
Gaussian-based asymptotic approximation. Since the confidence interval of
the medians do not overlap, we can say with statistical evidence~\cite{murphy2019probability} that the median
reduction improvement of \texttt{Hyperplane}\xspace and \texttt{Stencil Strips}\xspace is better than
\texttt{Nodecart}\xspace in all communication patterns.
Only for the nearest neighbor with hops pattern, there
seems to be no statistical reduction
between the $k$-$d$~\texttt{tree}\xspace and \texttt{Nodecart}\xspace. For the
nearest neighbor and component stencil, the reduction distribution between the
\texttt{Stencil Strips}\xspace and \texttt{VieM}\xspace looks very similar and there is no statistical
difference in the median reduction, indicating that the \texttt{Stencil Strips}\xspace
algorithm, which is not bound to factorization in any way, can greatly exploit
the structure of the grid partitioning problem.

\subsection{Throughput Analysis}
\label{sec:throughput}
We continue by examining the influence of rank reordering on the time
needed for an \texttt{MPI\_\-Neighbor\_\-alltoall}\xspace exchange and different message sizes to be
sent to each communication partner. The experiment was conducted on
\emph{VSC}$4$\xspace, \emph{SuperMUC-NG}\xspace, and \emph{JUWELS}\xspace described in Section~\ref{sec:systems}. In order to 
see how the reordering affects the communication performance of larger instances,
we perform the experiments with $50$ and $100$ nodes and
for each used the full number of processes per node $48$. Each message was sent
$200$ times to obtain large samples.
From each run, we capture the maximum time needed over all processes. Before
each exchange, we synchronize the processes with \texttt{MPI\_\-Barrier}\xspace. In
Figure~\ref{fig:bandwidth50} and Figure~\ref{fig:bandwidth100}, we plot the
mean speedup over the blocked assignments, after removing outliers beyond $1.5$
inter-quartile range from the third and first quartile, respectively.
Along with the speedup, we plot the scores of the
algorithms sorted in increasing order of $J_\text{sum}$ and $J_\text{max}$.
The message size is the number of bytes send per neighbor.
In the appendix we include tables with the absolute
running times and the communication performance of a random assignment of
processes to compute nodes (\texttt{Random}\xspace). Due to constraint space, we omitted
\texttt{Random}\xspace in the speedup plots.

For the nearest neighbor stencil and $N=50$, Figure~\ref{fig:bandwidth50},
\texttt{Hyperplane}\xspace and \texttt{Stencil Strips}\xspace are up to three and four times faster than the
blocked assignment on the \emph{VSC}$4$\xspace, up to two and almost
up to three times faster on \emph{SuperMUC-NG}\xspace and more than five times faster on
\emph{JUWELS}\xspace. All of the presented algorithms are able to outperform \texttt{Nodecart}\xspace in
terms of the mapping metric $J_\text{sum}$, $J_\text{max}$ and communication
performance on \emph{VSC}$4$\xspace and mostly on \emph{SuperMUC-NG}\xspace and \emph{JUWELS}\xspace.  As for \texttt{VieM}\xspace,
\texttt{Stencil Strips}\xspace outperforms it consistently on all machines, whereas
\texttt{Hyperplane}\xspace and $k$-$d$~\texttt{tree}\xspace attain similar or better performance.
Surprisingly, the communication performance of \texttt{VieM}\xspace on \emph{JUWELS}\xspace is worse than
predicted by $J_\text{sum}$, $J_\text{max}$.  In the case of $N=100$,
Figure~\ref{fig:bandwidth100}, \texttt{Hyperplane}\xspace and \texttt{Stencil Strips}\xspace obtain speedups
of up to three over the blocked assignment on \emph{VSC}$4$\xspace, up to
$2.5$ on \emph{SuperMUC-NG}\xspace and up to $2$ on \emph{JUWELS}\xspace. For message sizes
$\num{65536}$, $\num{262144}$ the performance of the blocked assignment,
\texttt{Hyperplane}\xspace, \texttt{Stencil Strips}\xspace, and \texttt{Nodecart}\xspace performed very badly,
resulting in speedups of more than forty, which is why we cut the axis at $4$. 

In the case of the nearest neighbor with hops stencil for $N=50$, Figure~\ref{fig:bandwidth50},
\texttt{Hyperplane}\xspace, \texttt{Stencil Strips}\xspace obtained speedups of up to four on the \emph{VSC}$4$\xspace,
up to $3.5$ on \emph{SuperMUC-NG}\xspace and up to three on \emph{JUWELS}\xspace. \texttt{VieM}\xspace outperforms the presented 
algorithms and both \emph{VSC}$4$\xspace and \emph{SuperMUC-NG}\xspace, but \texttt{Stencil Strips}\xspace obtained similar
to better performance for large message sizes on \emph{JUWELS}\xspace. In comparison to \texttt{Nodecart}\xspace,
the presented algorithms seemed to be up to two and three times faster. 
For $N=100$, Figure~\ref{fig:bandwidth100}, the presented algorithms are up to a factor
$4.5$ faster than the blocked assignment on \emph{VSC}$4$\xspace, up to $3.5$ faster 
on \emph{SuperMUC-NG}\xspace and more than a factor $2$ faster on \emph{JUWELS}\xspace. It seems that the 
performance of the blocked assignment for message size
$\num{65536}$ performed relatively bad in comparison to the other message
sizes. On the \emph{VSC}$4$\xspace, our algorithms are more than a factor two faster than
\texttt{Nodecart}\xspace, whereas on \emph{SuperMUC-NG}\xspace and \emph{JUWELS}\xspace we outperform \texttt{Nodecart}\xspace for
most message sizes by a factor of more than $1.3$.

As for the synthetic component stencil and $N=50$, only $k$-$d$~\texttt{tree}\xspace and \texttt{Stencil Strips}\xspace 
managed to find an optimal mapping, where each compute node has two outgoing communication
edges, as described in Section~\ref{sec:hardness}. This reordering results in speedups
of up to ten over the blocked assignment on \emph{VSC}$4$\xspace, up to five on \emph{SuperMUC-NG}\xspace
and up twelve on \emph{JUWELS}\xspace. Interestingly, \texttt{VieM}\xspace seems to have similar performance on 
\emph{VSC}$4$\xspace and \emph{SuperMUC-NG}\xspace, and outperforms our algorithms for small message sizes on 
\emph{JUWELS}\xspace. 
In the case of $N=100$, again only $k$-$d$~\texttt{tree}\xspace and \texttt{Stencil Strips}\xspace found optimal mappings,
leading to speedups of up to fourteen on \emph{VSC}$4$\xspace, five on \emph{SuperMUC-NG}\xspace, and sixteen on
\emph{JUWELS}\xspace. Surprisingly, \texttt{VieM}\xspace seems to match and outperform the performance
of our algorithms, even though it has a factor $1.6$ more $J_\text{sum}$
and a factor $8.5$ greater $J_\text{max}$.

\subsection{Instantiation Time}
\label{sec:instantiation}

Since the theoretical complexity of the three presented algorithms, \texttt{Nodecart}\xspace and \texttt{VieM}\xspace
all dependent on different parameters, we benchmark the algorithmic runtime
needed to calculate the new ranks only on the largest nearest neighbor stencil
instance, described in Section~\ref{sec:throughput} ($N=100$) on \emph{VSC}$4$\xspace. The algorithmic runtime
does not include the instantiation time of the reordered Cartesian
communicator, rather, it only includes the necessary operations to calculate
the new ranks. For the presented algorithms, this includes creating a sorted
communicator, with precise knowledge over which ranks are assigned to which
nodes. This custom communicator consists of a \emph{node communicator} (grouped
by processes having shared memory) and a \emph{leader communicator}, which
consists of one process per compute node. In this benchmark, the $k$-$d$~\texttt{tree}\xspace
algorithm was implemented with a linear search, for finding the dimension along
which to split, resulting in a theoretical run-time of $\mathcal{O}(d\log p)$.
Each algorithm was instantiated~$200$ times for large sample sizes, and we measured the longest time
needed over all processes. Before each instantiation, the processes were
synchronized using \texttt{MPI\_\-Barrier}\xspace. In Figure~\ref{fig:instantiation}, we plot the
mean instantiation time, after outlier removal (beyond $1.5$ inter-quartile
range from the first and third quartile) with a $95\%$ confidence interval. We
did not include \texttt{VieM}\xspace, since it took on average over \SI{7.95}{\second} 
(a factor of more than $400$) and it would distort the scale of the plot.

We can see, that for this instance the \texttt{Hyperplane}\xspace and $k$-$d$~\texttt{tree}\xspace algorithm 
are the two fastest, and statistically equivalent. \texttt{Nodecart}\xspace seems to 
need $28\%$ longer, whereas the convoluted process of calculating the
strips and strip positions results in \texttt{Stencil Strips}\xspace being the slowest 
algorithm, more than a factor $2$ slower than \texttt{Hyperplane}\xspace and $k$-$d$~\texttt{tree}\xspace.

\section{Conclusion}
\label{sec:conclusion}
We introduced three new efficient algorithms for process to compute node assignment on Cartesian grids and stencils communication patterns. 
By thoroughly exploiting the inherently present structure of the problem, we arrive at algorithms that outperform the state-of-the-art in terms of running time and communication performance.
We implemented the algorithms for \texttt{MPI\_\-Cart\_\-create}\xspace as reordering functions and performed
extensive benchmarks. 
 An intensive experimental evaluation shows that our algorithms are up to two
 orders of magnitude faster in running time than a (sequential) high-quality
 general graph mapping tool \texttt{VieM}\xspace, while obtaining similar results in
 communication performance. Furthermore, our algorithms are three times
 faster in an \texttt{MPI\_\-Neighbor\_\-alltoall}\xspace exchange than a state-of-the-art Cartesian grid mapping
 algorithm (\texttt{Nodecart}\xspace) by achieving a significantly better mapping quality.
 Considering the good results, we plan to release the algorithms and integrate
 them into publicly available \texttt{MPI}\xspace implementations.

\subsection*{Acknowledgments}
This work was partially supported by the Austrian Science Fund (FWF): project P
31763-N31.  The computational results presented have been achieved in part
using the Vienna Scientific Cluster (VSC).  We acknowledge PRACE for awarding
us access to JUWELS at GCS@FZJ, Germany and SuperMUC-NG at GCS@LRZ, Germany.

\FloatBarrier
\bibliographystyle{IEEEtran}
\bibliography{reorder} 
\newpage
\appendix
	\begin{sidewaystable*}
		\begin{scriptsize}
		\caption{Time needed for an \texttt{MPI\_\-Neighbor\_\-alltoall}\xspace exchange, different $k$-neighborhoods and reordering algorithms on \emph{VSC}$4$\xspace with $N=50$ and $p=48$. 
			Experiment performed as described in~\ref{sec:throughput}. We present the mean time in \SI{}{\milli\second} and the~$95\%$ confidence interval range. }
		\begin{widetable}{\textwidth}{l  r r r r r r r r}
			\toprule
			Stencil & Size [\SI{}{\byte}] & \texttt{Blocked}\xspace & \texttt{Hyperplane}\xspace & $k$-$d$~\texttt{tree}\xspace & \texttt{Stencil Strips}\xspace & \texttt{Nodecart}\xspace & \texttt{VieM}\xspace & \texttt{Random}\xspace \\
                        \midrule
\multirow{14}{*}{\shortstack[l]{Nearest\\ neighbor}}
& $\num{64}$ & $0.017_{-0.000}^{+0.000}$ & $0.014_{-0.001}^{+0.001}$ & $0.013_{-0.001}^{+0.001}$ & $0.013_{-0.001}^{+0.001}$ & $0.017_{-0.001}^{+0.001}$ & $0.011_{-0.000}^{+0.000}$ & $0.027_{-0.000}^{+0.000}$ \\ 
& $\num{128}$ & $0.027_{-0.000}^{+0.000}$ & $0.016_{-0.001}^{+0.001}$ & $0.017_{-0.001}^{+0.001}$ & $0.015_{-0.001}^{+0.001}$ & $0.023_{-0.001}^{+0.001}$ & $0.015_{-0.001}^{+0.001}$ & $0.044_{-0.000}^{+0.000}$ \\ 
& $\num{256}$ & $0.048_{-0.000}^{+0.000}$ & $0.022_{-0.001}^{+0.001}$ & $0.022_{-0.001}^{+0.001}$ & $0.020_{-0.001}^{+0.001}$ & $0.030_{-0.001}^{+0.001}$ & $0.021_{-0.000}^{+0.000}$ & $nan_{-nan}^{+nan}$ \\ 
& $\num{512}$ & $0.100_{-0.000}^{+0.000}$ & $0.033_{-0.000}^{+0.000}$ & $0.035_{-0.001}^{+0.001}$ & $0.030_{-0.001}^{+0.001}$ & $0.051_{-0.000}^{+0.000}$ & $0.037_{-0.000}^{+0.000}$ & $0.160_{-0.000}^{+0.000}$ \\ 
& $\num{1024}$ & $0.199_{-0.000}^{+0.000}$ & $0.061_{-0.000}^{+0.000}$ & $0.065_{-0.000}^{+0.000}$ & $0.052_{-0.001}^{+0.001}$ & $0.095_{-0.000}^{+0.000}$ & $0.073_{-0.001}^{+0.001}$ & $0.315_{-0.001}^{+0.001}$ \\ 
& $\num{2048}$ & $0.251_{-0.000}^{+0.000}$ & $0.083_{-0.000}^{+0.000}$ & $0.088_{-0.000}^{+0.000}$ & $0.068_{-0.001}^{+0.001}$ & $0.136_{-0.000}^{+0.000}$ & $0.087_{-0.000}^{+0.000}$ & $0.474_{-0.000}^{+0.000}$ \\ 
& $\num{4096}$ & $0.482_{-0.000}^{+0.000}$ & $0.153_{-0.000}^{+0.000}$ & $0.145_{-0.000}^{+0.000}$ & $0.113_{-0.000}^{+0.000}$ & $0.257_{-0.000}^{+0.000}$ & $0.164_{-0.000}^{+0.000}$ & $0.932_{-0.001}^{+0.001}$ \\ 
& $\num{8192}$ & $0.975_{-0.001}^{+0.001}$ & $0.308_{-0.000}^{+0.000}$ & $0.303_{-0.001}^{+0.001}$ & $0.245_{-0.001}^{+0.001}$ & $0.518_{-0.000}^{+0.000}$ & $0.324_{-0.001}^{+0.001}$ & $2.085_{-0.021}^{+0.021}$ \\ 
& $\num{16384}$ & $1.913_{-0.001}^{+0.001}$ & $0.599_{-0.001}^{+0.001}$ & $0.603_{-0.002}^{+0.002}$ & $0.535_{-0.002}^{+0.002}$ & $1.128_{-0.024}^{+0.024}$ & $0.633_{-0.000}^{+0.000}$ & $4.078_{-0.019}^{+0.019}$ \\ 
& $\num{32768}$ & $3.842_{-0.001}^{+0.001}$ & $1.512_{-0.003}^{+0.003}$ & $1.428_{-0.003}^{+0.003}$ & $1.460_{-0.003}^{+0.003}$ & $2.144_{-0.003}^{+0.003}$ & $1.557_{-0.003}^{+0.003}$ & $7.933_{-0.036}^{+0.036}$ \\ 
& $\num{65536}$ & $7.823_{-0.004}^{+0.004}$ & $3.609_{-0.014}^{+0.014}$ & $3.515_{-0.015}^{+0.015}$ & $3.691_{-0.024}^{+0.024}$ & $4.621_{-0.011}^{+0.011}$ & $3.626_{-0.012}^{+0.012}$ & $16.068_{-0.070}^{+0.070}$ \\ 
& $\num{131072}$ & $16.065_{-0.014}^{+0.014}$ & $6.440_{-0.071}^{+0.071}$ & $6.497_{-0.056}^{+0.056}$ & $6.372_{-0.035}^{+0.035}$ & $9.195_{-0.004}^{+0.004}$ & $6.425_{-0.031}^{+0.031}$ & $37.130_{-0.571}^{+0.571}$ \\ 
& $\num{262144}$ & $31.801_{-0.040}^{+0.040}$ & $11.783_{-0.051}^{+0.051}$ & $12.059_{-0.056}^{+0.056}$ & $12.809_{-0.076}^{+0.076}$ & $18.367_{-0.021}^{+0.021}$ & $12.356_{-0.026}^{+0.026}$ & $66.954_{-0.324}^{+0.324}$ \\ 
& $\num{524288}$ & $64.077_{-0.058}^{+0.058}$ & $24.092_{-0.020}^{+0.020}$ & $24.006_{-0.072}^{+0.072}$ & $23.764_{-0.168}^{+0.168}$ & $37.508_{-0.049}^{+0.049}$ & $24.838_{-0.033}^{+0.033}$ & $130.730_{-0.506}^{+0.506}$ \\ 
                        \midrule
\multirow{14}{*}{\shortstack[l]{Nearest\\ neighbor\\ with hops}}
& $\num{64}$ & $0.041_{-0.000}^{+0.000}$ & $0.023_{-0.001}^{+0.001}$ & $0.024_{-0.001}^{+0.001}$ & $0.025_{-0.001}^{+0.001}$ & $0.037_{-0.000}^{+0.000}$ & $0.018_{-0.000}^{+0.000}$ & $nan_{-nan}^{+nan}$ \\ 
& $\num{128}$ & $0.070_{-0.000}^{+0.000}$ & $0.028_{-0.001}^{+0.001}$ & $0.031_{-0.000}^{+0.000}$ & $0.030_{-0.001}^{+0.001}$ & $0.063_{-0.000}^{+0.000}$ & $0.026_{-0.001}^{+0.001}$ & $0.084_{-0.000}^{+0.000}$ \\ 
& $\num{256}$ & $0.202_{-0.039}^{+0.039}$ & $0.041_{-0.001}^{+0.001}$ & $0.050_{-0.000}^{+0.000}$ & $0.042_{-0.001}^{+0.001}$ & $0.123_{-0.000}^{+0.000}$ & $0.038_{-0.001}^{+0.001}$ & $0.497_{-0.115}^{+0.115}$ \\ 
& $\num{512}$ & $0.245_{-0.000}^{+0.000}$ & $0.069_{-0.000}^{+0.000}$ & $0.092_{-0.000}^{+0.000}$ & $0.074_{-0.000}^{+0.000}$ & $0.244_{-0.000}^{+0.000}$ & $0.065_{-0.001}^{+0.001}$ & $0.310_{-0.001}^{+0.001}$ \\ 
& $\num{1024}$ & $0.481_{-0.000}^{+0.000}$ & $0.133_{-0.001}^{+0.001}$ & $0.179_{-0.000}^{+0.000}$ & $0.143_{-0.000}^{+0.000}$ & $0.481_{-0.001}^{+0.001}$ & $0.115_{-0.001}^{+0.001}$ & $0.625_{-0.001}^{+0.001}$ \\ 
& $\num{2048}$ & $0.720_{-0.000}^{+0.000}$ & $0.170_{-0.000}^{+0.000}$ & $0.249_{-0.000}^{+0.000}$ & $0.187_{-0.000}^{+0.000}$ & $0.611_{-0.000}^{+0.000}$ & $0.145_{-0.001}^{+0.001}$ & $0.889_{-0.000}^{+0.000}$ \\ 
& $\num{4096}$ & $1.429_{-0.000}^{+0.000}$ & $0.317_{-0.001}^{+0.001}$ & $0.483_{-0.000}^{+0.000}$ & $0.798_{-0.085}^{+0.085}$ & $1.206_{-0.001}^{+0.001}$ & $0.259_{-0.001}^{+0.001}$ & $1.807_{-0.004}^{+0.004}$ \\ 
& $\num{8192}$ & $2.882_{-0.004}^{+0.004}$ & $0.798_{-0.003}^{+0.003}$ & $0.987_{-0.001}^{+0.001}$ & $0.724_{-0.000}^{+0.000}$ & $2.424_{-0.001}^{+0.001}$ & $0.633_{-0.003}^{+0.003}$ & $4.040_{-0.017}^{+0.017}$ \\ 
& $\num{16384}$ & $6.130_{-0.026}^{+0.026}$ & $1.650_{-0.008}^{+0.008}$ & $1.994_{-0.001}^{+0.001}$ & $1.512_{-0.004}^{+0.004}$ & $5.012_{-0.021}^{+0.021}$ & $1.471_{-0.003}^{+0.003}$ & $8.116_{-0.064}^{+0.064}$ \\ 
& $\num{32768}$ & $11.604_{-0.029}^{+0.029}$ & $3.663_{-0.016}^{+0.016}$ & $4.377_{-0.008}^{+0.008}$ & $3.844_{-0.072}^{+0.072}$ & $9.793_{-0.002}^{+0.002}$ & $3.596_{-0.012}^{+0.012}$ & $15.941_{-0.069}^{+0.069}$ \\ 
& $\num{65536}$ & $22.908_{-0.009}^{+0.009}$ & $7.208_{-0.020}^{+0.020}$ & $8.761_{-0.007}^{+0.007}$ & $7.005_{-0.029}^{+0.029}$ & $19.796_{-0.027}^{+0.027}$ & $7.509_{-0.056}^{+0.056}$ & $32.125_{-0.132}^{+0.132}$ \\ 
& $\num{131072}$ & $46.370_{-0.098}^{+0.098}$ & $13.307_{-0.042}^{+0.042}$ & $16.937_{-0.008}^{+0.008}$ & $13.027_{-0.021}^{+0.021}$ & $39.595_{-0.050}^{+0.050}$ & $12.850_{-0.057}^{+0.057}$ & $66.153_{-0.467}^{+0.467}$ \\ 
& $\num{262144}$ & $92.184_{-0.028}^{+0.028}$ & $27.115_{-0.082}^{+0.082}$ & $33.777_{-0.016}^{+0.016}$ & $26.176_{-0.031}^{+0.031}$ & $78.921_{-0.070}^{+0.070}$ & $25.932_{-0.115}^{+0.115}$ & $125.223_{-0.521}^{+0.521}$ \\ 
& $\num{524288}$ & $185.346_{-0.129}^{+0.129}$ & $55.925_{-0.151}^{+0.151}$ & $70.165_{-0.136}^{+0.136}$ & $53.866_{-0.063}^{+0.063}$ & $157.693_{-0.076}^{+0.076}$ & $50.761_{-0.202}^{+0.202}$ & $235.573_{-0.607}^{+0.607}$ \\ 
                        \midrule
\multirow{14}{*}{\shortstack[l]{Component}}
& $\num{64}$ & $0.015_{-0.000}^{+0.000}$ & $0.010_{-0.001}^{+0.001}$ & $0.009_{-0.001}^{+0.001}$ & $0.008_{-0.000}^{+0.000}$ & $0.014_{-0.001}^{+0.001}$ & $0.008_{-0.000}^{+0.000}$ & $0.015_{-0.000}^{+0.000}$ \\ 
& $\num{128}$ & $0.024_{-0.000}^{+0.000}$ & $0.010_{-0.000}^{+0.000}$ & $0.009_{-0.001}^{+0.001}$ & $0.008_{-0.000}^{+0.000}$ & $0.018_{-0.001}^{+0.001}$ & $0.009_{-0.000}^{+0.000}$ & $0.026_{-0.001}^{+0.001}$ \\ 
& $\num{256}$ & $0.044_{-0.000}^{+0.000}$ & $0.013_{-0.000}^{+0.000}$ & $0.011_{-0.001}^{+0.001}$ & $0.010_{-0.001}^{+0.001}$ & $0.028_{-0.000}^{+0.000}$ & $0.011_{-0.001}^{+0.001}$ & $0.043_{-0.000}^{+0.000}$ \\ 
& $\num{512}$ & $0.712_{-0.156}^{+0.156}$ & $0.021_{-0.000}^{+0.000}$ & $0.014_{-0.001}^{+0.001}$ & $0.013_{-0.001}^{+0.001}$ & $0.052_{-0.000}^{+0.000}$ & $0.016_{-0.000}^{+0.000}$ & $0.082_{-0.000}^{+0.000}$ \\ 
& $\num{1024}$ & $0.170_{-0.000}^{+0.000}$ & $0.038_{-0.000}^{+0.000}$ & $0.018_{-0.001}^{+0.001}$ & $0.019_{-0.001}^{+0.001}$ & $0.104_{-0.000}^{+0.000}$ & $0.027_{-0.000}^{+0.000}$ & $0.163_{-0.000}^{+0.000}$ \\ 
& $\num{2048}$ & $0.251_{-0.000}^{+0.000}$ & $0.053_{-0.000}^{+0.000}$ & $0.030_{-0.000}^{+0.000}$ & $0.031_{-0.001}^{+0.001}$ & $0.134_{-0.000}^{+0.000}$ & $0.037_{-0.000}^{+0.000}$ & $0.244_{-0.000}^{+0.000}$ \\ 
& $\num{4096}$ & $0.483_{-0.000}^{+0.000}$ & $0.095_{-0.000}^{+0.000}$ & $0.049_{-0.001}^{+0.001}$ & $0.052_{-0.001}^{+0.001}$ & $0.251_{-0.000}^{+0.000}$ & $0.061_{-0.000}^{+0.000}$ & $0.461_{-0.000}^{+0.000}$ \\ 
& $\num{8192}$ & $1.006_{-0.006}^{+0.006}$ & $0.198_{-0.001}^{+0.001}$ & $0.091_{-0.001}^{+0.001}$ & $0.098_{-0.002}^{+0.002}$ & $0.506_{-0.001}^{+0.001}$ & $0.113_{-0.000}^{+0.000}$ & $0.932_{-0.002}^{+0.002}$ \\ 
& $\num{16384}$ & $1.980_{-0.011}^{+0.011}$ & $0.384_{-0.002}^{+0.002}$ & $0.206_{-0.003}^{+0.003}$ & $0.213_{-0.004}^{+0.004}$ & $1.002_{-0.003}^{+0.003}$ & $0.213_{-0.001}^{+0.001}$ & $2.052_{-0.011}^{+0.011}$ \\ 
& $\num{32768}$ & $3.864_{-0.013}^{+0.013}$ & $0.689_{-0.002}^{+0.002}$ & $0.532_{-0.008}^{+0.008}$ & $0.533_{-0.009}^{+0.009}$ & $1.935_{-0.001}^{+0.001}$ & $0.543_{-0.004}^{+0.004}$ & $4.120_{-0.015}^{+0.015}$ \\ 
& $\num{65536}$ & $7.766_{-0.033}^{+0.033}$ & $1.728_{-0.002}^{+0.002}$ & $1.927_{-0.062}^{+0.062}$ & $1.556_{-0.004}^{+0.004}$ & $3.886_{-0.002}^{+0.002}$ & $1.604_{-0.007}^{+0.007}$ & $8.431_{-0.054}^{+0.054}$ \\ 
& $\num{131072}$ & $15.317_{-0.050}^{+0.050}$ & $3.473_{-0.013}^{+0.013}$ & $3.786_{-0.003}^{+0.003}$ & $3.447_{-0.012}^{+0.012}$ & $8.023_{-0.003}^{+0.003}$ & $3.583_{-0.020}^{+0.020}$ & $22.377_{-0.615}^{+0.615}$ \\ 
& $\num{262144}$ & $30.142_{-0.002}^{+0.002}$ & $7.797_{-0.042}^{+0.042}$ & $7.921_{-0.038}^{+0.038}$ & $7.564_{-0.032}^{+0.032}$ & $16.321_{-0.025}^{+0.025}$ & $7.981_{-0.031}^{+0.031}$ & $37.493_{-0.484}^{+0.484}$ \\ 
& $\num{524288}$ & $60.201_{-0.001}^{+0.001}$ & $14.378_{-0.071}^{+0.071}$ & $15.446_{-0.067}^{+0.067}$ & $14.410_{-0.059}^{+0.059}$ & $33.399_{-0.072}^{+0.072}$ & $14.671_{-0.050}^{+0.050}$ & $70.881_{-0.542}^{+0.542}$ \\ 
			\bottomrule
		\end{widetable}
		\end{scriptsize}
	\end{sidewaystable*}
	\begin{sidewaystable*}
		\begin{scriptsize}
		\caption{Time needed for an \texttt{MPI\_\-Neighbor\_\-alltoall}\xspace exchange, different $k$-neighborhoods and reordering algorithms on \emph{VSC}$4$\xspace with $N=100$ and $p=48$. 
			Experiment performed as described in~\ref{sec:throughput}. We present the mean time in \SI{}{\milli\second} and the~$95\%$ confidence interval range. }
		\begin{widetable}{\textwidth}{l  r r r r r r r r}
			\toprule
			Stencil & Size [\SI{}{\byte}] & \texttt{Blocked}\xspace & \texttt{Hyperplane}\xspace & $k$-$d$~\texttt{tree}\xspace & \texttt{Stencil Strips}\xspace & \texttt{Nodecart}\xspace & \texttt{VieM}\xspace & \texttt{Random}\xspace \\
                        \midrule
\multirow{14}{*}{\shortstack[l]{Nearest\\ neighbor}}
& $\num{64}$ & $0.028_{-0.000}^{+0.000}$ & $0.018_{-0.001}^{+0.001}$ & $0.019_{-0.001}^{+0.001}$ & $0.017_{-0.001}^{+0.001}$ & $0.021_{-0.001}^{+0.001}$ & $0.019_{-0.001}^{+0.001}$ & $0.047_{-0.000}^{+0.000}$ \\ 
& $\num{128}$ & $0.044_{-0.000}^{+0.000}$ & $0.022_{-0.001}^{+0.001}$ & $0.023_{-0.001}^{+0.001}$ & $0.020_{-0.001}^{+0.001}$ & $0.023_{-0.001}^{+0.001}$ & $0.021_{-0.001}^{+0.001}$ & $0.080_{-0.000}^{+0.000}$ \\ 
& $\num{256}$ & $0.081_{-0.000}^{+0.000}$ & $0.029_{-0.000}^{+0.000}$ & $0.033_{-0.000}^{+0.000}$ & $0.027_{-0.000}^{+0.000}$ & $0.035_{-0.000}^{+0.000}$ & $0.029_{-0.000}^{+0.000}$ & $0.151_{-0.000}^{+0.000}$ \\ 
& $\num{512}$ & $0.155_{-0.000}^{+0.000}$ & $0.051_{-0.000}^{+0.000}$ & $0.059_{-0.000}^{+0.000}$ & $0.049_{-0.000}^{+0.000}$ & $0.065_{-0.000}^{+0.000}$ & $0.054_{-0.000}^{+0.000}$ & $0.290_{-0.000}^{+0.000}$ \\ 
& $\num{1024}$ & $0.302_{-0.000}^{+0.000}$ & $0.099_{-0.001}^{+0.001}$ & $0.117_{-0.000}^{+0.000}$ & $0.094_{-0.000}^{+0.000}$ & $0.124_{-0.001}^{+0.001}$ & $0.100_{-0.000}^{+0.000}$ & $0.573_{-0.001}^{+0.001}$ \\ 
& $\num{2048}$ & $0.492_{-0.000}^{+0.000}$ & $0.149_{-0.000}^{+0.000}$ & $0.178_{-0.000}^{+0.000}$ & $0.150_{-0.000}^{+0.000}$ & $0.192_{-0.003}^{+0.003}$ & $0.160_{-0.000}^{+0.000}$ & $0.931_{-0.000}^{+0.000}$ \\ 
& $\num{4096}$ & $0.959_{-0.000}^{+0.000}$ & $0.281_{-0.000}^{+0.000}$ & $0.343_{-0.000}^{+0.000}$ & $0.287_{-0.000}^{+0.000}$ & $0.423_{-0.021}^{+0.021}$ & $0.303_{-0.000}^{+0.000}$ & $1.844_{-0.000}^{+0.000}$ \\ 
& $\num{8192}$ & $1.918_{-0.001}^{+0.001}$ & $0.737_{-0.053}^{+0.053}$ & $0.704_{-0.008}^{+0.008}$ & $0.570_{-0.001}^{+0.001}$ & $0.711_{-0.002}^{+0.002}$ & $0.611_{-0.001}^{+0.001}$ & $3.849_{-0.007}^{+0.007}$ \\ 
& $\num{16384}$ & $3.826_{-0.001}^{+0.001}$ & $1.621_{-0.108}^{+0.108}$ & $1.548_{-0.054}^{+0.054}$ & $1.140_{-0.001}^{+0.001}$ & $1.408_{-0.002}^{+0.002}$ & $1.209_{-0.000}^{+0.000}$ & $7.818_{-0.021}^{+0.021}$ \\ 
& $\num{32768}$ & $7.609_{-0.001}^{+0.001}$ & $2.466_{-0.040}^{+0.040}$ & $2.808_{-0.002}^{+0.002}$ & $2.334_{-0.001}^{+0.001}$ & $2.922_{-0.007}^{+0.007}$ & $2.485_{-0.002}^{+0.002}$ & $15.386_{-0.047}^{+0.047}$ \\ 
& $\num{65536}$ & $15.305_{-0.004}^{+0.004}$ & $4.945_{-0.005}^{+0.005}$ & $5.865_{-0.005}^{+0.005}$ & $4.940_{-0.006}^{+0.006}$ & $6.239_{-0.018}^{+0.018}$ & $5.319_{-0.006}^{+0.006}$ & $29.963_{-0.077}^{+0.077}$ \\ 
& $\num{131072}$ & $30.705_{-0.021}^{+0.021}$ & $9.766_{-0.004}^{+0.004}$ & $11.621_{-0.003}^{+0.003}$ & $9.769_{-0.010}^{+0.010}$ & $12.708_{-0.027}^{+0.027}$ & $10.404_{-0.002}^{+0.002}$ & $63.105_{-0.499}^{+0.499}$ \\ 
& $\num{262144}$ & $61.480_{-0.037}^{+0.037}$ & $19.668_{-0.009}^{+0.009}$ & $23.330_{-0.011}^{+0.011}$ & $19.870_{-0.027}^{+0.027}$ & $24.930_{-0.023}^{+0.023}$ & $21.258_{-0.029}^{+0.029}$ & $122.308_{-0.460}^{+0.460}$ \\ 
& $\num{524288}$ & $123.391_{-0.095}^{+0.095}$ & $40.357_{-0.016}^{+0.016}$ & $47.721_{-0.034}^{+0.034}$ & $40.435_{-0.027}^{+0.027}$ & $50.798_{-0.054}^{+0.054}$ & $43.377_{-0.046}^{+0.046}$ & $238.045_{-0.462}^{+0.462}$ \\ 
                        \midrule
\multirow{14}{*}{\shortstack[l]{Nearest\\ neighbor\\ with hops}}
& $\num{64}$ & $0.070_{-0.000}^{+0.000}$ & $0.039_{-0.002}^{+0.002}$ & $0.034_{-0.001}^{+0.001}$ & $0.029_{-0.001}^{+0.001}$ & $0.052_{-0.000}^{+0.000}$ & $0.029_{-0.001}^{+0.001}$ & $0.089_{-0.000}^{+0.000}$ \\ 
& $\num{128}$ & $0.124_{-0.000}^{+0.000}$ & $0.049_{-0.001}^{+0.001}$ & $0.050_{-0.000}^{+0.000}$ & $0.044_{-0.000}^{+0.000}$ & $0.089_{-0.000}^{+0.000}$ & $0.039_{-0.000}^{+0.000}$ & $0.159_{-0.000}^{+0.000}$ \\ 
& $\num{256}$ & $0.232_{-0.000}^{+0.000}$ & $0.077_{-0.004}^{+0.004}$ & $0.089_{-0.000}^{+0.000}$ & $0.080_{-0.000}^{+0.000}$ & $0.164_{-0.000}^{+0.000}$ & $0.066_{-0.000}^{+0.000}$ & $0.299_{-0.000}^{+0.000}$ \\ 
& $\num{512}$ & $0.454_{-0.000}^{+0.000}$ & $0.188_{-0.021}^{+0.021}$ & $0.166_{-0.000}^{+0.000}$ & $0.154_{-0.000}^{+0.000}$ & $0.382_{-0.021}^{+0.021}$ & $0.127_{-0.000}^{+0.000}$ & $0.577_{-0.000}^{+0.000}$ \\ 
& $\num{1024}$ & $0.898_{-0.001}^{+0.001}$ & $0.336_{-0.024}^{+0.024}$ & $0.372_{-0.014}^{+0.014}$ & $0.298_{-0.001}^{+0.001}$ & $0.670_{-0.014}^{+0.014}$ & $0.253_{-0.000}^{+0.000}$ & $1.143_{-0.001}^{+0.001}$ \\ 
& $\num{2048}$ & $1.428_{-0.000}^{+0.000}$ & $0.464_{-0.029}^{+0.029}$ & $0.502_{-0.000}^{+0.000}$ & $0.466_{-0.009}^{+0.009}$ & $1.012_{-0.016}^{+0.016}$ & $0.362_{-0.002}^{+0.002}$ & $1.850_{-0.001}^{+0.001}$ \\ 
& $\num{4096}$ & $2.843_{-0.001}^{+0.001}$ & $0.650_{-0.015}^{+0.015}$ & $1.422_{-0.105}^{+0.105}$ & $0.863_{-0.000}^{+0.000}$ & $1.948_{-0.002}^{+0.002}$ & $0.695_{-0.000}^{+0.000}$ & $3.901_{-0.010}^{+0.010}$ \\ 
& $\num{8192}$ & $5.846_{-0.017}^{+0.017}$ & $1.224_{-0.004}^{+0.004}$ & $2.006_{-0.001}^{+0.001}$ & $1.771_{-0.001}^{+0.001}$ & $3.932_{-0.001}^{+0.001}$ & $1.405_{-0.001}^{+0.001}$ & $8.167_{-0.027}^{+0.027}$ \\ 
& $\num{16384}$ & $11.356_{-0.010}^{+0.010}$ & $2.515_{-0.009}^{+0.009}$ & $4.046_{-0.002}^{+0.002}$ & $3.586_{-0.001}^{+0.001}$ & $7.801_{-0.001}^{+0.001}$ & $2.914_{-0.009}^{+0.009}$ & $16.309_{-0.044}^{+0.044}$ \\ 
& $\num{32768}$ & $22.690_{-0.001}^{+0.001}$ & $5.166_{-0.010}^{+0.010}$ & $8.454_{-0.011}^{+0.011}$ & $7.436_{-0.003}^{+0.003}$ & $15.820_{-0.004}^{+0.004}$ & $6.210_{-0.013}^{+0.013}$ & $31.562_{-0.072}^{+0.072}$ \\ 
& $\num{65536}$ & $45.306_{-0.003}^{+0.003}$ & $10.102_{-0.017}^{+0.017}$ & $16.869_{-0.021}^{+0.021}$ & $14.931_{-0.005}^{+0.005}$ & $31.746_{-0.009}^{+0.009}$ & $12.376_{-0.024}^{+0.024}$ & $60.201_{-0.140}^{+0.140}$ \\ 
& $\num{131072}$ & $90.735_{-0.033}^{+0.033}$ & $19.632_{-0.048}^{+0.048}$ & $33.191_{-0.013}^{+0.013}$ & $29.440_{-0.003}^{+0.003}$ & $63.291_{-0.023}^{+0.023}$ & $24.021_{-0.046}^{+0.046}$ & $121.179_{-0.316}^{+0.316}$ \\ 
& $\num{262144}$ & $181.396_{-0.044}^{+0.044}$ & $39.677_{-0.159}^{+0.159}$ & $66.514_{-0.021}^{+0.021}$ & $59.531_{-0.040}^{+0.040}$ & $126.631_{-0.033}^{+0.033}$ & $48.108_{-0.067}^{+0.067}$ & $239.648_{-0.414}^{+0.414}$ \\ 
& $\num{524288}$ & $362.859_{-0.089}^{+0.089}$ & $79.769_{-0.129}^{+0.129}$ & $134.264_{-0.047}^{+0.047}$ & $120.062_{-0.047}^{+0.047}$ & $253.462_{-0.078}^{+0.078}$ & $97.200_{-0.070}^{+0.070}$ & $471.859_{-0.739}^{+0.739}$ \\ 
                        \midrule
\multirow{14}{*}{\shortstack[l]{Component}}
& $\num{64}$ & $0.024_{-0.000}^{+0.000}$ & $0.014_{-0.001}^{+0.001}$ & $0.014_{-0.001}^{+0.001}$ & $0.018_{-0.001}^{+0.001}$ & $0.016_{-0.001}^{+0.001}$ & $0.013_{-0.001}^{+0.001}$ & $0.026_{-0.000}^{+0.000}$ \\ 
& $\num{128}$ & $0.041_{-0.000}^{+0.000}$ & $0.014_{-0.001}^{+0.001}$ & $0.014_{-0.001}^{+0.001}$ & $0.022_{-0.001}^{+0.001}$ & $0.021_{-0.001}^{+0.001}$ & $0.014_{-0.001}^{+0.001}$ & $0.044_{-0.000}^{+0.000}$ \\ 
& $\num{256}$ & $0.078_{-0.000}^{+0.000}$ & $0.019_{-0.001}^{+0.001}$ & $0.014_{-0.001}^{+0.001}$ & $0.023_{-0.001}^{+0.001}$ & $0.035_{-0.001}^{+0.001}$ & $0.015_{-0.001}^{+0.001}$ & $0.080_{-0.000}^{+0.000}$ \\ 
& $\num{512}$ & $0.152_{-0.000}^{+0.000}$ & $0.027_{-0.000}^{+0.000}$ & $0.017_{-0.001}^{+0.001}$ & $0.023_{-0.001}^{+0.001}$ & $0.055_{-0.000}^{+0.000}$ & $0.017_{-0.001}^{+0.001}$ & $0.151_{-0.000}^{+0.000}$ \\ 
& $\num{1024}$ & $0.301_{-0.000}^{+0.000}$ & $0.048_{-0.000}^{+0.000}$ & $0.023_{-0.001}^{+0.001}$ & $0.027_{-0.001}^{+0.001}$ & $0.107_{-0.000}^{+0.000}$ & $0.021_{-0.001}^{+0.001}$ & $0.300_{-0.000}^{+0.000}$ \\ 
& $\num{2048}$ & $0.482_{-0.000}^{+0.000}$ & $0.066_{-0.000}^{+0.000}$ & $0.036_{-0.001}^{+0.001}$ & $0.045_{-0.001}^{+0.001}$ & $0.168_{-0.000}^{+0.000}$ & $0.034_{-0.001}^{+0.001}$ & $0.478_{-0.000}^{+0.000}$ \\ 
& $\num{4096}$ & $0.940_{-0.000}^{+0.000}$ & $0.106_{-0.000}^{+0.000}$ & $0.059_{-0.001}^{+0.001}$ & $0.069_{-0.001}^{+0.001}$ & $0.326_{-0.000}^{+0.000}$ & $0.054_{-0.001}^{+0.001}$ & $0.933_{-0.000}^{+0.000}$ \\ 
& $\num{8192}$ & $1.900_{-0.002}^{+0.002}$ & $0.206_{-0.001}^{+0.001}$ & $0.115_{-0.002}^{+0.002}$ & $0.136_{-0.005}^{+0.005}$ & $0.659_{-0.003}^{+0.003}$ & $0.103_{-0.002}^{+0.002}$ & $1.878_{-0.001}^{+0.001}$ \\ 
& $\num{16384}$ & $3.796_{-0.006}^{+0.006}$ & $0.464_{-0.005}^{+0.005}$ & $0.250_{-0.004}^{+0.004}$ & $0.262_{-0.005}^{+0.005}$ & $1.277_{-0.001}^{+0.001}$ & $0.226_{-0.004}^{+0.004}$ & $3.760_{-0.006}^{+0.006}$ \\ 
& $\num{32768}$ & $7.526_{-0.001}^{+0.001}$ & $0.974_{-0.012}^{+0.012}$ & $0.646_{-0.031}^{+0.031}$ & $0.609_{-0.020}^{+0.020}$ & $2.545_{-0.001}^{+0.001}$ & $0.569_{-0.009}^{+0.009}$ & $7.733_{-0.029}^{+0.029}$ \\ 
& $\num{65536}$ & $14.918_{-0.001}^{+0.001}$ & $2.207_{-0.074}^{+0.074}$ & $2.254_{-0.112}^{+0.112}$ & $1.586_{-0.034}^{+0.034}$ & $5.101_{-0.001}^{+0.001}$ & $2.540_{-0.081}^{+0.081}$ & $15.220_{-0.046}^{+0.046}$ \\ 
& $\num{131072}$ & $29.789_{-0.001}^{+0.001}$ & $3.904_{-0.032}^{+0.032}$ & $3.925_{-0.073}^{+0.073}$ & $3.555_{-0.018}^{+0.018}$ & $10.443_{-0.001}^{+0.001}$ & $4.061_{-0.081}^{+0.081}$ & $39.689_{-0.413}^{+0.413}$ \\ 
& $\num{262144}$ & $59.512_{-0.001}^{+0.001}$ & $8.009_{-0.048}^{+0.048}$ & $7.286_{-0.037}^{+0.037}$ & $7.068_{-0.019}^{+0.019}$ & $20.962_{-0.002}^{+0.002}$ & $7.828_{-0.042}^{+0.042}$ & $66.571_{-0.433}^{+0.433}$ \\ 
& $\num{524288}$ & $118.942_{-0.001}^{+0.001}$ & $16.178_{-0.051}^{+0.051}$ & $14.235_{-0.046}^{+0.046}$ & $14.276_{-0.043}^{+0.043}$ & $41.908_{-0.003}^{+0.003}$ & $14.502_{-0.042}^{+0.042}$ & $128.570_{-0.299}^{+0.299}$ \\ 
			\bottomrule
		\end{widetable}
		\end{scriptsize}
	\end{sidewaystable*}
	\begin{sidewaystable*}
		\begin{scriptsize}
		\caption{Time needed for an \texttt{MPI\_\-Neighbor\_\-alltoall}\xspace exchange, different $k$-neighborhoods and reordering algorithms on \emph{SuperMUC-NG}\xspace with $N=50$ and $p=48$. 
			Experiment performed as described in~\ref{sec:throughput}. We present the mean time in \SI{}{\milli\second} and the~$95\%$ confidence interval range. }
		\begin{widetable}{\textwidth}{l  r r r r r r r r}
			\toprule
			Stencil & Size [\SI{}{\byte}] & \texttt{Blocked}\xspace & \texttt{Hyperplane}\xspace & $k$-$d$~\texttt{tree}\xspace & \texttt{Stencil Strips}\xspace & \texttt{Nodecart}\xspace & \texttt{VieM}\xspace & \texttt{Random}\xspace \\
                        \midrule
\multirow{14}{*}{\shortstack[l]{Nearest\\ neighbor}}
& $\num{64}$ & $0.022_{-0.001}^{+0.001}$ & $0.012_{-0.000}^{+0.000}$ & $0.044_{-0.002}^{+0.002}$ & $0.020_{-0.001}^{+0.001}$ & $0.028_{-0.001}^{+0.001}$ & $0.041_{-0.003}^{+0.003}$ & $0.028_{-0.002}^{+0.002}$ \\ 
& $\num{128}$ & $0.023_{-0.000}^{+0.000}$ & $0.019_{-0.001}^{+0.001}$ & $0.039_{-0.002}^{+0.002}$ & $0.026_{-0.001}^{+0.001}$ & $0.033_{-0.002}^{+0.002}$ & $0.048_{-0.002}^{+0.002}$ & $0.044_{-0.002}^{+0.002}$ \\ 
& $\num{256}$ & $0.036_{-0.000}^{+0.000}$ & $0.019_{-0.000}^{+0.000}$ & $0.038_{-0.002}^{+0.002}$ & $0.031_{-0.002}^{+0.002}$ & $0.044_{-0.002}^{+0.002}$ & $0.039_{-0.003}^{+0.003}$ & $0.067_{-0.001}^{+0.001}$ \\ 
& $\num{512}$ & $0.063_{-0.000}^{+0.000}$ & $0.035_{-0.000}^{+0.000}$ & $0.052_{-0.002}^{+0.002}$ & $0.044_{-0.002}^{+0.002}$ & $0.047_{-0.002}^{+0.002}$ & $0.056_{-0.002}^{+0.002}$ & $0.113_{-0.000}^{+0.000}$ \\ 
& $\num{1024}$ & $0.126_{-0.000}^{+0.000}$ & $0.067_{-0.000}^{+0.000}$ & $0.087_{-0.002}^{+0.002}$ & $0.062_{-0.002}^{+0.002}$ & $0.089_{-0.001}^{+0.001}$ & $0.071_{-0.001}^{+0.001}$ & $0.223_{-0.001}^{+0.001}$ \\ 
& $\num{2048}$ & $0.171_{-0.001}^{+0.001}$ & $0.097_{-0.001}^{+0.001}$ & $0.122_{-0.003}^{+0.003}$ & $0.078_{-0.001}^{+0.001}$ & $0.111_{-0.001}^{+0.001}$ & $0.101_{-0.002}^{+0.002}$ & $0.298_{-0.001}^{+0.001}$ \\ 
& $\num{4096}$ & $0.358_{-0.003}^{+0.003}$ & $0.179_{-0.003}^{+0.003}$ & $0.208_{-0.006}^{+0.006}$ & $0.132_{-0.001}^{+0.001}$ & $0.233_{-0.001}^{+0.001}$ & $0.193_{-0.006}^{+0.006}$ & $0.590_{-0.001}^{+0.001}$ \\ 
& $\num{8192}$ & $0.831_{-0.008}^{+0.008}$ & $0.425_{-0.006}^{+0.006}$ & $0.370_{-0.002}^{+0.002}$ & $0.290_{-0.001}^{+0.001}$ & $0.482_{-0.002}^{+0.002}$ & $0.394_{-0.006}^{+0.006}$ & $1.647_{-0.010}^{+0.010}$ \\ 
& $\num{16384}$ & $1.565_{-0.011}^{+0.011}$ & $0.824_{-0.017}^{+0.017}$ & $0.692_{-0.003}^{+0.003}$ & $0.548_{-0.003}^{+0.003}$ & $1.002_{-0.004}^{+0.004}$ & $0.788_{-0.018}^{+0.018}$ & $3.371_{-0.021}^{+0.021}$ \\ 
& $\num{32768}$ & $3.430_{-0.034}^{+0.034}$ & $1.559_{-0.006}^{+0.006}$ & $1.466_{-0.004}^{+0.004}$ & $1.303_{-0.005}^{+0.005}$ & $2.008_{-0.009}^{+0.009}$ & $1.437_{-0.018}^{+0.018}$ & $6.388_{-0.053}^{+0.053}$ \\ 
& $\num{65536}$ & $7.233_{-0.043}^{+0.043}$ & $3.406_{-0.019}^{+0.019}$ & $3.207_{-0.011}^{+0.011}$ & $2.944_{-0.011}^{+0.011}$ & $4.188_{-0.018}^{+0.018}$ & $3.022_{-0.013}^{+0.013}$ & $11.969_{-0.067}^{+0.067}$ \\ 
& $\num{131072}$ & $13.057_{-0.034}^{+0.034}$ & $6.401_{-0.022}^{+0.022}$ & $5.997_{-0.026}^{+0.026}$ & $5.422_{-0.003}^{+0.003}$ & $8.184_{-0.042}^{+0.042}$ & $6.364_{-0.004}^{+0.004}$ & $24.770_{-0.199}^{+0.199}$ \\ 
& $\num{262144}$ & $27.503_{-0.146}^{+0.146}$ & $13.433_{-0.034}^{+0.034}$ & $12.525_{-0.036}^{+0.036}$ & $11.012_{-0.011}^{+0.011}$ & $15.631_{-0.064}^{+0.064}$ & $12.883_{-0.007}^{+0.007}$ & $61.167_{-0.286}^{+0.286}$ \\ 
& $\num{524288}$ & $56.395_{-0.548}^{+0.548}$ & $28.185_{-0.070}^{+0.070}$ & $25.727_{-0.069}^{+0.069}$ & $22.358_{-0.021}^{+0.021}$ & $32.738_{-0.105}^{+0.105}$ & $26.496_{-0.018}^{+0.018}$ & $145.241_{-0.686}^{+0.686}$ \\ 
                        \midrule
\multirow{14}{*}{\shortstack[l]{Nearest\\ neighbor\\ with hops}}
& $\num{64}$ & $0.068_{-0.004}^{+0.004}$ & $0.062_{-0.006}^{+0.006}$ & $0.063_{-0.008}^{+0.008}$ & $0.072_{-0.008}^{+0.008}$ & $0.031_{-0.001}^{+0.001}$ & $0.045_{-0.003}^{+0.003}$ & $0.145_{-0.013}^{+0.013}$ \\ 
& $\num{128}$ & $0.074_{-0.003}^{+0.003}$ & $0.046_{-0.004}^{+0.004}$ & $0.059_{-0.006}^{+0.006}$ & $0.099_{-0.010}^{+0.010}$ & $0.051_{-0.001}^{+0.001}$ & $0.031_{-0.002}^{+0.002}$ & $0.095_{-0.005}^{+0.005}$ \\ 
& $\num{256}$ & $0.113_{-0.004}^{+0.004}$ & $0.055_{-0.004}^{+0.004}$ & $0.076_{-0.006}^{+0.006}$ & $0.078_{-0.007}^{+0.007}$ & $0.075_{-0.000}^{+0.000}$ & $0.070_{-0.004}^{+0.004}$ & $0.183_{-0.008}^{+0.008}$ \\ 
& $\num{512}$ & $0.202_{-0.004}^{+0.004}$ & $0.100_{-0.006}^{+0.006}$ & $0.102_{-0.006}^{+0.006}$ & $0.107_{-0.006}^{+0.006}$ & $0.146_{-0.001}^{+0.001}$ & $0.072_{-0.003}^{+0.003}$ & $0.270_{-0.009}^{+0.009}$ \\ 
& $\num{1024}$ & $0.402_{-0.004}^{+0.004}$ & $0.157_{-0.004}^{+0.004}$ & $0.202_{-0.010}^{+0.010}$ & $0.177_{-0.008}^{+0.008}$ & $0.308_{-0.000}^{+0.000}$ & $0.146_{-0.005}^{+0.005}$ & $0.497_{-0.007}^{+0.007}$ \\ 
& $\num{2048}$ & $0.582_{-0.008}^{+0.008}$ & $0.215_{-0.007}^{+0.007}$ & $0.263_{-0.015}^{+0.015}$ & $0.220_{-0.009}^{+0.009}$ & $0.461_{-0.001}^{+0.001}$ & $0.176_{-0.005}^{+0.005}$ & $0.757_{-0.024}^{+0.024}$ \\ 
& $\num{4096}$ & $1.126_{-0.004}^{+0.004}$ & $0.369_{-0.004}^{+0.004}$ & $0.461_{-0.016}^{+0.016}$ & $0.370_{-0.010}^{+0.010}$ & $0.921_{-0.001}^{+0.001}$ & $0.341_{-0.009}^{+0.009}$ & $1.359_{-0.033}^{+0.033}$ \\ 
& $\num{8192}$ & $2.904_{-0.026}^{+0.026}$ & $0.859_{-0.006}^{+0.006}$ & $0.979_{-0.029}^{+0.029}$ & $0.828_{-0.015}^{+0.015}$ & $2.064_{-0.021}^{+0.021}$ & $0.724_{-0.007}^{+0.007}$ & $3.461_{-0.035}^{+0.035}$ \\ 
& $\num{16384}$ & $5.422_{-0.037}^{+0.037}$ & $1.801_{-0.011}^{+0.011}$ & $1.928_{-0.041}^{+0.041}$ & $1.579_{-0.020}^{+0.020}$ & $4.005_{-0.019}^{+0.019}$ & $1.478_{-0.012}^{+0.012}$ & $7.076_{-0.071}^{+0.071}$ \\ 
& $\num{32768}$ & $10.810_{-0.056}^{+0.056}$ & $3.861_{-0.016}^{+0.016}$ & $3.947_{-0.064}^{+0.064}$ & $3.319_{-0.039}^{+0.039}$ & $7.907_{-0.030}^{+0.030}$ & $3.214_{-0.015}^{+0.015}$ & $13.531_{-0.129}^{+0.129}$ \\ 
& $\num{65536}$ & $20.723_{-0.127}^{+0.127}$ & $7.899_{-0.026}^{+0.026}$ & $7.565_{-0.080}^{+0.080}$ & $6.644_{-0.055}^{+0.055}$ & $15.594_{-0.075}^{+0.075}$ & $6.507_{-0.032}^{+0.032}$ & $26.569_{-0.197}^{+0.197}$ \\ 
& $\num{131072}$ & $40.657_{-0.192}^{+0.192}$ & $15.777_{-0.036}^{+0.036}$ & $16.580_{-0.133}^{+0.133}$ & $13.232_{-0.053}^{+0.053}$ & $31.244_{-0.090}^{+0.090}$ & $13.174_{-0.037}^{+0.037}$ & $66.695_{-0.450}^{+0.450}$ \\ 
& $\num{262144}$ & $84.403_{-0.339}^{+0.339}$ & $31.937_{-0.064}^{+0.064}$ & $35.010_{-0.408}^{+0.408}$ & $27.330_{-0.121}^{+0.121}$ & $66.675_{-0.257}^{+0.257}$ & $26.812_{-0.094}^{+0.094}$ & $158.638_{-0.791}^{+0.791}$ \\ 
& $\num{524288}$ & $174.863_{-0.893}^{+0.893}$ & $63.737_{-0.141}^{+0.141}$ & $69.544_{-0.672}^{+0.672}$ & $55.509_{-0.154}^{+0.154}$ & $139.567_{-0.640}^{+0.640}$ & $55.039_{-0.260}^{+0.260}$ & $321.429_{-1.658}^{+1.658}$ \\ 
                        \midrule
\multirow{14}{*}{\shortstack[l]{Component}}
& $\num{64}$ & $0.021_{-0.002}^{+0.002}$ & $0.024_{-0.001}^{+0.001}$ & $0.013_{-0.001}^{+0.001}$ & $0.018_{-0.001}^{+0.001}$ & $0.011_{-0.001}^{+0.001}$ & $0.014_{-0.001}^{+0.001}$ & $0.028_{-0.003}^{+0.003}$ \\ 
& $\num{128}$ & $0.027_{-0.001}^{+0.001}$ & $0.015_{-0.001}^{+0.001}$ & $0.011_{-0.001}^{+0.001}$ & $0.015_{-0.001}^{+0.001}$ & $0.019_{-0.001}^{+0.001}$ & $0.014_{-0.001}^{+0.001}$ & $0.036_{-0.002}^{+0.002}$ \\ 
& $\num{256}$ & $0.040_{-0.001}^{+0.001}$ & $0.020_{-0.001}^{+0.001}$ & $0.009_{-0.000}^{+0.000}$ & $0.022_{-0.001}^{+0.001}$ & $0.026_{-0.001}^{+0.001}$ & $0.016_{-0.001}^{+0.001}$ & $0.065_{-0.004}^{+0.004}$ \\ 
& $\num{512}$ & $0.066_{-0.001}^{+0.001}$ & $0.019_{-0.001}^{+0.001}$ & $0.013_{-0.000}^{+0.000}$ & $0.028_{-0.002}^{+0.002}$ & $0.044_{-0.001}^{+0.001}$ & $0.018_{-0.001}^{+0.001}$ & $0.088_{-0.004}^{+0.004}$ \\ 
& $\num{1024}$ & $0.128_{-0.000}^{+0.000}$ & $0.047_{-0.002}^{+0.002}$ & $0.024_{-0.001}^{+0.001}$ & $0.029_{-0.002}^{+0.002}$ & $0.072_{-0.001}^{+0.001}$ & $0.028_{-0.000}^{+0.000}$ & $0.153_{-0.004}^{+0.004}$ \\ 
& $\num{2048}$ & $0.186_{-0.000}^{+0.000}$ & $0.054_{-0.001}^{+0.001}$ & $0.047_{-0.001}^{+0.001}$ & $0.055_{-0.002}^{+0.002}$ & $0.097_{-0.000}^{+0.000}$ & $0.047_{-0.001}^{+0.001}$ & $0.178_{-0.004}^{+0.004}$ \\ 
& $\num{4096}$ & $0.363_{-0.000}^{+0.000}$ & $0.100_{-0.002}^{+0.002}$ & $0.089_{-0.002}^{+0.002}$ & $0.096_{-0.002}^{+0.002}$ & $0.163_{-0.001}^{+0.001}$ & $0.092_{-0.002}^{+0.002}$ & $0.308_{-0.002}^{+0.002}$ \\ 
& $\num{8192}$ & $0.888_{-0.009}^{+0.009}$ & $0.225_{-0.003}^{+0.003}$ & $0.171_{-0.004}^{+0.004}$ & $0.173_{-0.004}^{+0.004}$ & $0.347_{-0.002}^{+0.002}$ & $0.174_{-0.004}^{+0.004}$ & $0.776_{-0.007}^{+0.007}$ \\ 
& $\num{16384}$ & $1.685_{-0.018}^{+0.018}$ & $0.435_{-0.008}^{+0.008}$ & $0.338_{-0.009}^{+0.009}$ & $0.334_{-0.009}^{+0.009}$ & $0.693_{-0.004}^{+0.004}$ & $0.335_{-0.009}^{+0.009}$ & $1.648_{-0.012}^{+0.012}$ \\ 
& $\num{32768}$ & $3.063_{-0.027}^{+0.027}$ & $0.756_{-0.017}^{+0.017}$ & $0.679_{-0.018}^{+0.018}$ & $0.679_{-0.018}^{+0.018}$ & $1.397_{-0.007}^{+0.007}$ & $0.683_{-0.019}^{+0.019}$ & $3.243_{-0.026}^{+0.026}$ \\ 
& $\num{65536}$ & $6.553_{-0.109}^{+0.109}$ & $1.444_{-0.027}^{+0.027}$ & $1.294_{-0.012}^{+0.012}$ & $1.292_{-0.012}^{+0.012}$ & $3.018_{-0.018}^{+0.018}$ & $1.326_{-0.017}^{+0.017}$ & $6.367_{-0.052}^{+0.052}$ \\ 
& $\num{131072}$ & $13.257_{-0.218}^{+0.218}$ & $3.193_{-0.004}^{+0.004}$ & $3.193_{-0.002}^{+0.002}$ & $3.190_{-0.002}^{+0.002}$ & $6.696_{-0.031}^{+0.031}$ & $3.173_{-0.002}^{+0.002}$ & $12.685_{-0.101}^{+0.101}$ \\ 
& $\num{262144}$ & $25.316_{-0.405}^{+0.405}$ & $6.685_{-0.010}^{+0.010}$ & $6.721_{-0.003}^{+0.003}$ & $6.681_{-0.003}^{+0.003}$ & $13.032_{-0.054}^{+0.054}$ & $6.710_{-0.002}^{+0.002}$ & $25.332_{-0.194}^{+0.194}$ \\ 
& $\num{524288}$ & $53.532_{-0.330}^{+0.330}$ & $14.496_{-0.006}^{+0.006}$ & $14.741_{-0.014}^{+0.014}$ & $14.659_{-0.016}^{+0.016}$ & $24.524_{-0.070}^{+0.070}$ & $14.677_{-0.020}^{+0.020}$ & $72.583_{-0.455}^{+0.455}$ \\ 
			\bottomrule
		\end{widetable}
		\end{scriptsize}
	\end{sidewaystable*}
	\begin{sidewaystable*}
		\begin{scriptsize}
		\caption{Time needed for an \texttt{MPI\_\-Neighbor\_\-alltoall}\xspace exchange, different $k$-neighborhoods and reordering algorithms on \emph{SuperMUC-NG}\xspace with $N=100$ and $p=48$. 
			Experiment performed as described in~\ref{sec:throughput}. We present the mean time in \SI{}{\milli\second} and the~$95\%$ confidence interval range. }
		\begin{widetable}{\textwidth}{l  r r r r r r r r}
			\toprule
			Stencil & Size [\SI{}{\byte}] & \texttt{Blocked}\xspace & \texttt{Hyperplane}\xspace & $k$-$d$~\texttt{tree}\xspace & \texttt{Stencil Strips}\xspace & \texttt{Nodecart}\xspace & \texttt{VieM}\xspace & \texttt{Random}\xspace \\
                        \midrule
\multirow{14}{*}{\shortstack[l]{Nearest\\ neighbor}}
& $\num{64}$ & $0.039_{-0.002}^{+0.002}$ & $0.050_{-0.003}^{+0.003}$ & $0.038_{-0.002}^{+0.002}$ & $0.024_{-0.001}^{+0.001}$ & $0.022_{-0.001}^{+0.001}$ & $0.034_{-0.002}^{+0.002}$ & $0.053_{-0.002}^{+0.002}$ \\ 
& $\num{128}$ & $0.044_{-0.001}^{+0.001}$ & $0.055_{-0.003}^{+0.003}$ & $0.047_{-0.003}^{+0.003}$ & $0.037_{-0.002}^{+0.002}$ & $0.024_{-0.001}^{+0.001}$ & $0.025_{-0.001}^{+0.001}$ & $0.062_{-0.002}^{+0.002}$ \\ 
& $\num{256}$ & $0.051_{-0.002}^{+0.002}$ & $0.054_{-0.003}^{+0.003}$ & $0.026_{-0.001}^{+0.001}$ & $0.026_{-0.001}^{+0.001}$ & $0.035_{-0.001}^{+0.001}$ & $0.036_{-0.001}^{+0.001}$ & $0.087_{-0.002}^{+0.002}$ \\ 
& $\num{512}$ & $0.078_{-0.001}^{+0.001}$ & $0.068_{-0.003}^{+0.003}$ & $0.033_{-0.000}^{+0.000}$ & $0.051_{-0.001}^{+0.001}$ & $0.046_{-0.002}^{+0.002}$ & $0.072_{-0.004}^{+0.004}$ & $0.125_{-0.001}^{+0.001}$ \\ 
& $\num{1024}$ & $0.156_{-0.002}^{+0.002}$ & $0.076_{-0.001}^{+0.001}$ & $0.065_{-0.001}^{+0.001}$ & $0.074_{-0.001}^{+0.001}$ & $0.075_{-0.003}^{+0.003}$ & $0.071_{-0.001}^{+0.001}$ & $0.257_{-0.002}^{+0.002}$ \\ 
& $\num{2048}$ & $0.269_{-0.008}^{+0.008}$ & $0.105_{-0.001}^{+0.001}$ & $0.095_{-0.000}^{+0.000}$ & $0.105_{-0.002}^{+0.002}$ & $0.088_{-0.001}^{+0.001}$ & $0.109_{-0.002}^{+0.002}$ & $0.322_{-0.002}^{+0.002}$ \\ 
& $\num{4096}$ & $0.395_{-0.004}^{+0.004}$ & $0.186_{-0.003}^{+0.003}$ & $0.163_{-0.000}^{+0.000}$ & $0.184_{-0.004}^{+0.004}$ & $0.164_{-0.001}^{+0.001}$ & $0.186_{-0.003}^{+0.003}$ & $0.667_{-0.003}^{+0.003}$ \\ 
& $\num{8192}$ & $0.924_{-0.006}^{+0.006}$ & $0.429_{-0.006}^{+0.006}$ & $0.388_{-0.002}^{+0.002}$ & $0.391_{-0.006}^{+0.006}$ & $0.382_{-0.002}^{+0.002}$ & $0.405_{-0.006}^{+0.006}$ & $1.766_{-0.013}^{+0.013}$ \\ 
& $\num{16384}$ & $1.739_{-0.016}^{+0.016}$ & $0.852_{-0.017}^{+0.017}$ & $0.718_{-0.003}^{+0.003}$ & $0.755_{-0.017}^{+0.017}$ & $0.766_{-0.004}^{+0.004}$ & $0.803_{-0.015}^{+0.015}$ & $3.633_{-0.028}^{+0.028}$ \\ 
& $\num{32768}$ & $3.478_{-0.027}^{+0.027}$ & $1.606_{-0.005}^{+0.005}$ & $1.512_{-0.005}^{+0.005}$ & $1.424_{-0.017}^{+0.017}$ & $1.629_{-0.007}^{+0.007}$ & $1.472_{-0.016}^{+0.016}$ & $7.029_{-0.040}^{+0.040}$ \\ 
& $\num{65536}$ & $7.208_{-0.045}^{+0.045}$ & $3.563_{-0.010}^{+0.010}$ & $3.299_{-0.011}^{+0.011}$ & $3.049_{-0.015}^{+0.015}$ & $3.337_{-0.021}^{+0.021}$ & $3.080_{-0.010}^{+0.010}$ & $13.909_{-0.069}^{+0.069}$ \\ 
& $\num{131072}$ & $13.916_{-0.038}^{+0.038}$ & $6.441_{-0.019}^{+0.019}$ & $6.222_{-0.021}^{+0.021}$ & $6.419_{-0.003}^{+0.003}$ & $5.967_{-0.017}^{+0.017}$ & $6.380_{-0.003}^{+0.003}$ & $27.183_{-0.116}^{+0.116}$ \\ 
& $\num{262144}$ & $27.489_{-0.071}^{+0.071}$ & $13.216_{-0.040}^{+0.040}$ & $13.047_{-0.035}^{+0.035}$ & $12.995_{-0.006}^{+0.006}$ & $14.443_{-0.281}^{+0.281}$ & $12.923_{-0.006}^{+0.006}$ & $82.569_{-0.468}^{+0.468}$ \\ 
& $\num{524288}$ & $60.771_{-0.294}^{+0.294}$ & $26.447_{-0.077}^{+0.077}$ & $26.696_{-0.065}^{+0.065}$ & $27.207_{-0.024}^{+0.024}$ & $26.241_{-0.086}^{+0.086}$ & $26.723_{-0.023}^{+0.023}$ & $203.706_{-1.110}^{+1.110}$ \\ 
                        \midrule
\multirow{14}{*}{\shortstack[l]{Nearest\\ neighbor\\ with hops}}
& $\num{64}$ & $0.044_{-0.001}^{+0.001}$ & $0.063_{-0.004}^{+0.004}$ & $0.039_{-0.003}^{+0.003}$ & $0.052_{-0.003}^{+0.003}$ & $0.031_{-0.001}^{+0.001}$ & $0.070_{-0.009}^{+0.009}$ & $0.122_{-0.013}^{+0.013}$ \\ 
& $\num{128}$ & $0.056_{-0.001}^{+0.001}$ & $0.068_{-0.004}^{+0.004}$ & $0.044_{-0.003}^{+0.003}$ & $0.026_{-0.001}^{+0.001}$ & $0.042_{-0.001}^{+0.001}$ & $0.095_{-0.013}^{+0.013}$ & $0.121_{-0.011}^{+0.011}$ \\ 
& $\num{256}$ & $0.092_{-0.001}^{+0.001}$ & $0.064_{-0.001}^{+0.001}$ & $0.052_{-0.002}^{+0.002}$ & $0.053_{-0.003}^{+0.003}$ & $0.063_{-0.001}^{+0.001}$ & $0.134_{-0.014}^{+0.014}$ & $0.214_{-0.017}^{+0.017}$ \\ 
& $\num{512}$ & $0.175_{-0.001}^{+0.001}$ & $0.121_{-0.002}^{+0.002}$ & $0.088_{-0.002}^{+0.002}$ & $0.080_{-0.002}^{+0.002}$ & $0.121_{-0.001}^{+0.001}$ & $0.161_{-0.015}^{+0.015}$ & $0.293_{-0.014}^{+0.014}$ \\ 
& $\num{1024}$ & $0.370_{-0.001}^{+0.001}$ & $0.227_{-0.000}^{+0.000}$ & $0.157_{-0.002}^{+0.002}$ & $0.128_{-0.001}^{+0.001}$ & $0.260_{-0.000}^{+0.000}$ & $0.341_{-0.024}^{+0.024}$ & $0.516_{-0.016}^{+0.016}$ \\ 
& $\num{2048}$ & $0.517_{-0.002}^{+0.002}$ & $0.313_{-0.001}^{+0.001}$ & $0.203_{-0.002}^{+0.002}$ & $0.182_{-0.003}^{+0.003}$ & $0.367_{-0.000}^{+0.000}$ & $0.318_{-0.019}^{+0.019}$ & $0.729_{-0.025}^{+0.025}$ \\ 
& $\num{4096}$ & $1.047_{-0.004}^{+0.004}$ & $0.614_{-0.001}^{+0.001}$ & $0.407_{-0.002}^{+0.002}$ & $0.322_{-0.001}^{+0.001}$ & $0.737_{-0.001}^{+0.001}$ & $0.546_{-0.030}^{+0.030}$ & $1.543_{-0.044}^{+0.044}$ \\ 
& $\num{8192}$ & $2.720_{-0.023}^{+0.023}$ & $1.309_{-0.004}^{+0.004}$ & $0.907_{-0.003}^{+0.003}$ & $0.778_{-0.007}^{+0.007}$ & $1.618_{-0.019}^{+0.019}$ & $1.053_{-0.044}^{+0.044}$ & $4.039_{-0.072}^{+0.072}$ \\ 
& $\num{16384}$ & $5.158_{-0.032}^{+0.032}$ & $2.755_{-0.005}^{+0.005}$ & $1.813_{-0.007}^{+0.007}$ & $1.559_{-0.005}^{+0.005}$ & $3.038_{-0.010}^{+0.010}$ & $1.861_{-0.063}^{+0.063}$ & $8.099_{-0.119}^{+0.119}$ \\ 
& $\num{32768}$ & $10.596_{-0.042}^{+0.042}$ & $5.506_{-0.006}^{+0.006}$ & $3.781_{-0.018}^{+0.018}$ & $3.293_{-0.010}^{+0.010}$ & $6.368_{-0.022}^{+0.022}$ & $3.972_{-0.127}^{+0.127}$ & $13.444_{-0.096}^{+0.096}$ \\ 
& $\num{65536}$ & $19.830_{-0.069}^{+0.069}$ & $10.874_{-0.015}^{+0.015}$ & $7.406_{-0.016}^{+0.016}$ & $6.636_{-0.058}^{+0.058}$ & $12.439_{-0.035}^{+0.035}$ & $7.956_{-0.246}^{+0.246}$ & $27.049_{-0.142}^{+0.142}$ \\ 
& $\num{131072}$ & $39.773_{-0.108}^{+0.108}$ & $21.305_{-0.024}^{+0.024}$ & $15.646_{-0.021}^{+0.021}$ & $13.346_{-0.020}^{+0.020}$ & $25.155_{-0.065}^{+0.065}$ & $15.952_{-0.345}^{+0.345}$ & $74.959_{-0.406}^{+0.406}$ \\ 
& $\num{262144}$ & $88.509_{-0.599}^{+0.599}$ & $46.045_{-0.085}^{+0.085}$ & $31.922_{-0.052}^{+0.052}$ & $26.989_{-0.034}^{+0.034}$ & $50.261_{-0.093}^{+0.093}$ & $31.863_{-0.554}^{+0.554}$ & $183.225_{-0.707}^{+0.707}$ \\ 
& $\num{524288}$ & $197.841_{-2.145}^{+2.145}$ & $96.257_{-0.213}^{+0.213}$ & $65.305_{-0.137}^{+0.137}$ & $55.430_{-0.087}^{+0.087}$ & $104.417_{-0.217}^{+0.217}$ & $62.788_{-0.923}^{+0.923}$ & $371.839_{-1.489}^{+1.489}$ \\ 
                        \midrule
\multirow{14}{*}{\shortstack[l]{Component}}
& $\num{64}$ & $0.019_{-0.001}^{+0.001}$ & $0.125_{-0.011}^{+0.011}$ & $0.065_{-0.007}^{+0.007}$ & $0.013_{-0.001}^{+0.001}$ & $0.090_{-0.011}^{+0.011}$ & $0.049_{-0.002}^{+0.002}$ & $0.080_{-0.004}^{+0.004}$ \\ 
& $\num{128}$ & $0.026_{-0.001}^{+0.001}$ & $0.138_{-0.012}^{+0.012}$ & $0.110_{-0.011}^{+0.011}$ & $0.020_{-0.001}^{+0.001}$ & $0.070_{-0.006}^{+0.006}$ & $0.037_{-0.003}^{+0.003}$ & $0.078_{-0.003}^{+0.003}$ \\ 
& $\num{256}$ & $0.042_{-0.001}^{+0.001}$ & $0.143_{-0.012}^{+0.012}$ & $0.102_{-0.010}^{+0.010}$ & $0.035_{-0.002}^{+0.002}$ & $0.081_{-0.008}^{+0.008}$ & $0.026_{-0.002}^{+0.002}$ & $0.086_{-0.002}^{+0.002}$ \\ 
& $\num{512}$ & $0.069_{-0.001}^{+0.001}$ & $0.123_{-0.011}^{+0.011}$ & $0.066_{-0.008}^{+0.008}$ & $0.033_{-0.002}^{+0.002}$ & $0.093_{-0.008}^{+0.008}$ & $0.046_{-0.003}^{+0.003}$ & $0.098_{-0.003}^{+0.003}$ \\ 
& $\num{1024}$ & $0.133_{-0.002}^{+0.002}$ & $0.082_{-0.007}^{+0.007}$ & $0.084_{-0.007}^{+0.007}$ & $0.042_{-0.001}^{+0.001}$ & $0.128_{-0.010}^{+0.010}$ & $0.031_{-0.001}^{+0.001}$ & $0.153_{-0.003}^{+0.003}$ \\ 
& $\num{2048}$ & $0.204_{-0.004}^{+0.004}$ & $0.107_{-0.007}^{+0.007}$ & $0.090_{-0.007}^{+0.007}$ & $0.055_{-0.002}^{+0.002}$ & $0.166_{-0.011}^{+0.011}$ & $0.051_{-0.001}^{+0.001}$ & $0.189_{-0.002}^{+0.002}$ \\ 
& $\num{4096}$ & $0.319_{-0.001}^{+0.001}$ & $0.167_{-0.006}^{+0.006}$ & $0.136_{-0.007}^{+0.007}$ & $0.096_{-0.002}^{+0.002}$ & $0.262_{-0.013}^{+0.013}$ & $0.101_{-0.003}^{+0.003}$ & $0.327_{-0.002}^{+0.002}$ \\ 
& $\num{8192}$ & $0.872_{-0.004}^{+0.004}$ & $0.328_{-0.007}^{+0.007}$ & $0.228_{-0.009}^{+0.009}$ & $0.181_{-0.004}^{+0.004}$ & $0.372_{-0.014}^{+0.014}$ & $0.178_{-0.004}^{+0.004}$ & $0.872_{-0.008}^{+0.008}$ \\ 
& $\num{16384}$ & $1.709_{-0.007}^{+0.007}$ & $0.591_{-0.009}^{+0.009}$ & $0.359_{-0.011}^{+0.011}$ & $0.349_{-0.009}^{+0.009}$ & $0.707_{-0.020}^{+0.020}$ & $0.358_{-0.009}^{+0.009}$ & $1.777_{-0.013}^{+0.013}$ \\ 
& $\num{32768}$ & $3.456_{-0.021}^{+0.021}$ & $1.049_{-0.017}^{+0.017}$ & $0.706_{-0.019}^{+0.019}$ & $0.706_{-0.018}^{+0.018}$ & $1.198_{-0.016}^{+0.016}$ & $0.697_{-0.018}^{+0.018}$ & $3.442_{-0.023}^{+0.023}$ \\ 
& $\num{65536}$ & $6.692_{-0.048}^{+0.048}$ & $1.963_{-0.024}^{+0.024}$ & $1.326_{-0.012}^{+0.012}$ & $1.316_{-0.011}^{+0.011}$ & $2.486_{-0.037}^{+0.037}$ & $1.321_{-0.014}^{+0.014}$ & $7.076_{-0.061}^{+0.061}$ \\ 
& $\num{131072}$ & $13.309_{-0.098}^{+0.098}$ & $3.980_{-0.022}^{+0.022}$ & $3.247_{-0.002}^{+0.002}$ & $3.253_{-0.004}^{+0.004}$ & $5.058_{-0.068}^{+0.068}$ & $3.217_{-0.003}^{+0.003}$ & $13.551_{-0.071}^{+0.071}$ \\ 
& $\num{262144}$ & $25.742_{-0.186}^{+0.186}$ & $8.281_{-0.044}^{+0.044}$ & $6.821_{-0.005}^{+0.005}$ & $6.775_{-0.002}^{+0.002}$ & $10.322_{-0.125}^{+0.125}$ & $6.777_{-0.003}^{+0.003}$ & $26.792_{-0.147}^{+0.147}$ \\ 
& $\num{524288}$ & $55.323_{-0.303}^{+0.303}$ & $15.970_{-0.054}^{+0.054}$ & $14.909_{-0.006}^{+0.006}$ & $14.916_{-0.007}^{+0.007}$ & $19.572_{-0.151}^{+0.151}$ & $14.883_{-0.007}^{+0.007}$ & $78.029_{-0.472}^{+0.472}$ \\ 
			\bottomrule
		\end{widetable}
		\end{scriptsize}
	\end{sidewaystable*}
	\begin{sidewaystable*}
		\begin{scriptsize}
		\caption{Time needed for an \texttt{MPI\_\-Neighbor\_\-alltoall}\xspace exchange, different $k$-neighborhoods and reordering algorithms on \emph{JUWELS}\xspace with $N=50$ and $p=48$. 
			Experiment performed as described in~\ref{sec:throughput}. We present the mean time in \SI{}{\milli\second} and the~$95\%$ confidence interval range. }
		\begin{widetable}{\textwidth}{l  r r r r r r r r}
			\toprule
			Stencil & Size [\SI{}{\byte}] & \texttt{Blocked}\xspace & \texttt{Hyperplane}\xspace & $k$-$d$~\texttt{tree}\xspace & \texttt{Stencil Strips}\xspace & \texttt{Nodecart}\xspace & \texttt{VieM}\xspace & \texttt{Random}\xspace \\
			\midrule
\multirow{14}{*}{\shortstack[l]{Nearest\\ neighbor}}
& $\num{64}$ & $0.032_{-0.003}^{+0.003}$ & $0.059_{-0.004}^{+0.004}$ & $0.053_{-0.005}^{+0.005}$ & $0.026_{-0.002}^{+0.002}$ & $0.040_{-0.004}^{+0.004}$ & $0.070_{-0.012}^{+0.012}$ & $0.074_{-0.007}^{+0.007}$ \\ 
& $\num{128}$ & $0.078_{-0.005}^{+0.005}$ & $0.059_{-0.004}^{+0.004}$ & $0.051_{-0.005}^{+0.005}$ & $0.026_{-0.002}^{+0.002}$ & $0.041_{-0.002}^{+0.002}$ & $0.079_{-0.013}^{+0.013}$ & $0.075_{-0.004}^{+0.004}$ \\ 
& $\num{256}$ & $0.036_{-0.001}^{+0.001}$ & $0.054_{-0.005}^{+0.005}$ & $0.064_{-0.005}^{+0.005}$ & $0.024_{-0.002}^{+0.002}$ & $0.038_{-0.002}^{+0.002}$ & $0.069_{-0.011}^{+0.011}$ & $0.096_{-0.004}^{+0.004}$ \\ 
& $\num{512}$ & $0.056_{-0.001}^{+0.001}$ & $0.048_{-0.003}^{+0.003}$ & $0.063_{-0.005}^{+0.005}$ & $0.042_{-0.004}^{+0.004}$ & $0.072_{-0.005}^{+0.005}$ & $0.240_{-0.028}^{+0.028}$ & $0.151_{-0.001}^{+0.001}$ \\ 
& $\num{1024}$ & $0.824_{-0.133}^{+0.133}$ & $0.079_{-0.005}^{+0.005}$ & $0.065_{-0.003}^{+0.003}$ & $0.049_{-0.002}^{+0.002}$ & $0.100_{-0.005}^{+0.005}$ & $0.112_{-0.013}^{+0.013}$ & $0.344_{-0.002}^{+0.002}$ \\ 
& $\num{2048}$ & $0.391_{-0.029}^{+0.029}$ & $0.087_{-0.003}^{+0.003}$ & $0.108_{-0.004}^{+0.004}$ & $0.074_{-0.002}^{+0.002}$ & $0.249_{-0.016}^{+0.016}$ & $0.166_{-0.017}^{+0.017}$ & $0.735_{-0.004}^{+0.004}$ \\ 
& $\num{4096}$ & $0.807_{-0.070}^{+0.070}$ & $0.141_{-0.005}^{+0.005}$ & $0.296_{-0.021}^{+0.021}$ & $0.352_{-0.021}^{+0.021}$ & $0.332_{-0.013}^{+0.013}$ & $0.251_{-0.021}^{+0.021}$ & $3.142_{-0.274}^{+0.274}$ \\ 
& $\num{8192}$ & $1.622_{-0.134}^{+0.134}$ & $0.322_{-0.011}^{+0.011}$ & $0.358_{-0.009}^{+0.009}$ & $0.302_{-0.009}^{+0.009}$ & $0.720_{-0.019}^{+0.019}$ & $0.569_{-0.040}^{+0.040}$ & $3.212_{-0.020}^{+0.020}$ \\ 
& $\num{16384}$ & $2.924_{-0.239}^{+0.239}$ & $0.612_{-0.014}^{+0.014}$ & $0.664_{-0.012}^{+0.012}$ & $0.507_{-0.003}^{+0.003}$ & $1.465_{-0.045}^{+0.045}$ & $1.112_{-0.070}^{+0.070}$ & $7.480_{-0.391}^{+0.391}$ \\ 
& $\num{32768}$ & $5.907_{-0.506}^{+0.506}$ & $4.223_{-0.592}^{+0.592}$ & $4.037_{-0.603}^{+0.603}$ & $1.265_{-0.051}^{+0.051}$ & $2.891_{-0.074}^{+0.074}$ & $1.840_{-0.087}^{+0.087}$ & $12.742_{-0.038}^{+0.038}$ \\ 
& $\num{65536}$ & $13.384_{-0.990}^{+0.990}$ & $2.631_{-0.031}^{+0.031}$ & $2.895_{-0.039}^{+0.039}$ & $2.539_{-0.032}^{+0.032}$ & $5.537_{-0.217}^{+0.217}$ & $6.365_{-0.800}^{+0.800}$ & $25.280_{-0.076}^{+0.076}$ \\ 
& $\num{131072}$ & $24.604_{-1.907}^{+1.907}$ & $5.459_{-0.044}^{+0.044}$ & $6.244_{-0.051}^{+0.051}$ & $5.516_{-0.087}^{+0.087}$ & $11.021_{-0.320}^{+0.320}$ & $7.577_{-0.329}^{+0.329}$ & $50.665_{-0.429}^{+0.429}$ \\ 
& $\num{262144}$ & $22.814_{-0.139}^{+0.139}$ & $11.058_{-0.093}^{+0.093}$ & $12.887_{-0.092}^{+0.092}$ & $11.136_{-0.132}^{+0.132}$ & $21.487_{-0.493}^{+0.493}$ & $14.446_{-0.408}^{+0.408}$ & $102.591_{-0.854}^{+0.854}$ \\ 
& $\num{524288}$ & $45.121_{-0.212}^{+0.212}$ & $22.214_{-0.170}^{+0.170}$ & $26.384_{-0.125}^{+0.125}$ & $22.422_{-0.269}^{+0.269}$ & $41.894_{-1.041}^{+1.041}$ & $27.933_{-0.582}^{+0.582}$ & $201.109_{-1.042}^{+1.042}$ \\ 
                        \midrule
\multirow{14}{*}{\shortstack[l]{Nearest\\ neighbor\\ with hops}}
& $\num{64}$ & $0.027_{-0.001}^{+0.001}$ & $0.016_{-0.000}^{+0.000}$ & $0.017_{-0.000}^{+0.000}$ & $0.017_{-0.000}^{+0.000}$ & $0.024_{-0.000}^{+0.000}$ & $0.017_{-0.000}^{+0.000}$ & $0.044_{-0.000}^{+0.000}$ \\ 
& $\num{128}$ & $0.056_{-0.000}^{+0.000}$ & $0.023_{-0.000}^{+0.000}$ & $0.024_{-0.000}^{+0.000}$ & $0.023_{-0.000}^{+0.000}$ & $0.047_{-0.000}^{+0.000}$ & $0.021_{-0.000}^{+0.000}$ & $0.075_{-0.000}^{+0.000}$ \\ 
& $\num{256}$ & $0.093_{-0.000}^{+0.000}$ & $0.035_{-0.000}^{+0.000}$ & $0.038_{-0.000}^{+0.000}$ & $0.033_{-0.000}^{+0.000}$ & $0.080_{-0.000}^{+0.000}$ & $0.030_{-0.000}^{+0.000}$ & $0.133_{-0.000}^{+0.000}$ \\ 
& $\num{512}$ & $0.130_{-0.001}^{+0.001}$ & $0.057_{-0.000}^{+0.000}$ & $0.062_{-0.000}^{+0.000}$ & $0.058_{-0.000}^{+0.000}$ & $0.117_{-0.001}^{+0.001}$ & $0.053_{-0.000}^{+0.000}$ & $0.254_{-0.000}^{+0.000}$ \\ 
& $\num{1024}$ & $0.255_{-0.001}^{+0.001}$ & $0.094_{-0.000}^{+0.000}$ & $0.104_{-0.000}^{+0.000}$ & $0.103_{-0.001}^{+0.001}$ & $0.229_{-0.001}^{+0.001}$ & $0.093_{-0.001}^{+0.001}$ & $0.494_{-0.000}^{+0.000}$ \\ 
& $\num{2048}$ & $0.512_{-0.002}^{+0.002}$ & $0.176_{-0.000}^{+0.000}$ & $0.177_{-0.001}^{+0.001}$ & $0.172_{-0.001}^{+0.001}$ & $0.435_{-0.001}^{+0.001}$ & $0.149_{-0.000}^{+0.000}$ & $0.978_{-0.000}^{+0.000}$ \\ 
& $\num{4096}$ & $1.051_{-0.006}^{+0.006}$ & $0.474_{-0.020}^{+0.020}$ & $0.510_{-0.020}^{+0.020}$ & $0.325_{-0.001}^{+0.001}$ & $0.878_{-0.003}^{+0.003}$ & $0.290_{-0.001}^{+0.001}$ & $2.030_{-0.008}^{+0.008}$ \\ 
& $\num{8192}$ & $2.182_{-0.010}^{+0.010}$ & $0.753_{-0.005}^{+0.005}$ & $0.884_{-0.007}^{+0.007}$ & $0.811_{-0.004}^{+0.004}$ & $1.922_{-0.008}^{+0.008}$ & $0.827_{-0.004}^{+0.004}$ & $4.489_{-0.024}^{+0.024}$ \\ 
& $\num{16384}$ & $4.217_{-0.016}^{+0.016}$ & $1.678_{-0.015}^{+0.015}$ & $1.793_{-0.014}^{+0.014}$ & $1.551_{-0.008}^{+0.008}$ & $3.721_{-0.016}^{+0.016}$ & $1.605_{-0.009}^{+0.009}$ & $9.363_{-0.046}^{+0.046}$ \\ 
& $\num{32768}$ & $8.217_{-0.024}^{+0.024}$ & $3.741_{-0.021}^{+0.021}$ & $3.711_{-0.015}^{+0.015}$ & $3.140_{-0.016}^{+0.016}$ & $7.332_{-0.030}^{+0.030}$ & $3.406_{-0.014}^{+0.014}$ & $19.386_{-0.085}^{+0.085}$ \\ 
& $\num{65536}$ & $16.543_{-0.047}^{+0.047}$ & $7.495_{-0.032}^{+0.032}$ & $7.468_{-0.029}^{+0.029}$ & $5.974_{-0.030}^{+0.030}$ & $14.440_{-0.045}^{+0.045}$ & $6.663_{-0.031}^{+0.031}$ & $40.110_{-0.121}^{+0.121}$ \\ 
& $\num{131072}$ & $32.977_{-0.090}^{+0.090}$ & $14.803_{-0.067}^{+0.067}$ & $15.041_{-0.064}^{+0.064}$ & $11.718_{-0.039}^{+0.039}$ & $29.161_{-0.126}^{+0.126}$ & $13.022_{-0.049}^{+0.049}$ & $81.422_{-0.268}^{+0.268}$ \\ 
& $\num{262144}$ & $65.735_{-0.159}^{+0.159}$ & $30.113_{-0.124}^{+0.124}$ & $29.774_{-0.072}^{+0.072}$ & $23.318_{-0.110}^{+0.110}$ & $58.178_{-0.198}^{+0.198}$ & $25.848_{-0.099}^{+0.099}$ & $163.502_{-0.455}^{+0.455}$ \\ 
& $\num{524288}$ & $131.086_{-0.314}^{+0.314}$ & $60.261_{-0.152}^{+0.152}$ & $58.900_{-0.103}^{+0.103}$ & $45.801_{-0.175}^{+0.175}$ & $115.177_{-0.246}^{+0.246}$ & $51.268_{-0.165}^{+0.165}$ & $328.816_{-0.797}^{+0.797}$ \\ 
                        \midrule
\multirow{14}{*}{\shortstack[l]{Component}}
& $\num{64}$ & $0.121_{-0.014}^{+0.014}$ & $0.248_{-0.017}^{+0.017}$ & $0.158_{-0.022}^{+0.022}$ & $0.024_{-0.003}^{+0.003}$ & $0.052_{-0.008}^{+0.008}$ & $0.013_{-0.000}^{+0.000}$ & $0.035_{-0.004}^{+0.004}$ \\ 
& $\num{128}$ & $0.146_{-0.016}^{+0.016}$ & $0.073_{-0.010}^{+0.010}$ & $0.110_{-0.016}^{+0.016}$ & $0.040_{-0.006}^{+0.006}$ & $0.026_{-0.001}^{+0.001}$ & $0.012_{-0.000}^{+0.000}$ & $0.043_{-0.003}^{+0.003}$ \\ 
& $\num{256}$ & $0.098_{-0.009}^{+0.009}$ & $0.174_{-0.019}^{+0.019}$ & $0.038_{-0.006}^{+0.006}$ & $0.058_{-0.009}^{+0.009}$ & $0.093_{-0.011}^{+0.011}$ & $0.013_{-0.000}^{+0.000}$ & $0.052_{-0.003}^{+0.003}$ \\ 
& $\num{512}$ & $0.133_{-0.012}^{+0.012}$ & $0.030_{-0.003}^{+0.003}$ & $0.048_{-0.008}^{+0.008}$ & $0.022_{-0.002}^{+0.002}$ & $0.127_{-0.015}^{+0.015}$ & $0.014_{-0.000}^{+0.000}$ & $0.069_{-0.002}^{+0.002}$ \\ 
& $\num{1024}$ & $0.145_{-0.009}^{+0.009}$ & $0.101_{-0.013}^{+0.013}$ & $0.024_{-0.002}^{+0.002}$ & $0.070_{-0.009}^{+0.009}$ & $0.162_{-0.015}^{+0.015}$ & $0.019_{-0.000}^{+0.000}$ & $0.125_{-0.001}^{+0.001}$ \\ 
& $\num{2048}$ & $0.225_{-0.008}^{+0.008}$ & $0.147_{-0.017}^{+0.017}$ & $0.100_{-0.013}^{+0.013}$ & $0.046_{-0.005}^{+0.005}$ & $0.191_{-0.012}^{+0.012}$ & $0.031_{-0.000}^{+0.000}$ & $0.250_{-0.001}^{+0.001}$ \\ 
& $\num{4096}$ & $0.475_{-0.021}^{+0.021}$ & $0.204_{-0.022}^{+0.022}$ & $0.046_{-0.003}^{+0.003}$ & $0.052_{-0.004}^{+0.004}$ & $0.270_{-0.011}^{+0.011}$ & $0.052_{-0.001}^{+0.001}$ & $0.517_{-0.003}^{+0.003}$ \\ 
& $\num{8192}$ & $0.876_{-0.015}^{+0.015}$ & $0.350_{-0.036}^{+0.036}$ & $0.126_{-0.013}^{+0.013}$ & $0.204_{-0.021}^{+0.021}$ & $0.641_{-0.025}^{+0.025}$ & $0.120_{-0.001}^{+0.001}$ & $1.100_{-0.010}^{+0.010}$ \\ 
& $\num{16384}$ & $1.627_{-0.024}^{+0.024}$ & $0.720_{-0.115}^{+0.115}$ & $0.297_{-0.024}^{+0.024}$ & $0.120_{-0.003}^{+0.003}$ & $1.275_{-0.047}^{+0.047}$ & $0.217_{-0.001}^{+0.001}$ & $2.227_{-0.015}^{+0.015}$ \\ 
& $\num{32768}$ & $3.075_{-0.029}^{+0.029}$ & $0.468_{-0.012}^{+0.012}$ & $0.305_{-0.016}^{+0.016}$ & $0.243_{-0.004}^{+0.004}$ & $2.230_{-0.070}^{+0.070}$ & $0.408_{-0.002}^{+0.002}$ & $4.507_{-0.026}^{+0.026}$ \\ 
& $\num{65536}$ & $6.291_{-0.126}^{+0.126}$ & $0.958_{-0.021}^{+0.021}$ & $1.071_{-0.091}^{+0.091}$ & $0.564_{-0.008}^{+0.008}$ & $4.366_{-0.134}^{+0.134}$ & $0.859_{-0.015}^{+0.015}$ & $9.113_{-0.081}^{+0.081}$ \\ 
& $\num{131072}$ & $12.214_{-0.198}^{+0.198}$ & $2.323_{-0.029}^{+0.029}$ & $1.589_{-0.041}^{+0.041}$ & $1.600_{-0.037}^{+0.037}$ & $8.717_{-0.286}^{+0.286}$ & $2.130_{-0.016}^{+0.016}$ & $18.474_{-0.168}^{+0.168}$ \\ 
& $\num{262144}$ & $23.209_{-0.208}^{+0.208}$ & $5.338_{-0.051}^{+0.051}$ & $3.207_{-0.003}^{+0.003}$ & $3.258_{-0.003}^{+0.003}$ & $16.913_{-0.510}^{+0.510}$ & $4.950_{-0.012}^{+0.012}$ & $36.091_{-0.241}^{+0.241}$ \\ 
& $\num{524288}$ & $44.925_{-0.264}^{+0.264}$ & $10.792_{-0.060}^{+0.060}$ & $6.822_{-0.013}^{+0.013}$ & $6.874_{-0.003}^{+0.003}$ & $34.539_{-0.701}^{+0.701}$ & $10.438_{-0.018}^{+0.018}$ & $69.282_{-0.390}^{+0.390}$ \\ 
			\bottomrule
		\end{widetable}
		\end{scriptsize}
	\end{sidewaystable*}
	\begin{sidewaystable*}
		\begin{scriptsize}
		\caption{Time needed for an \texttt{MPI\_\-Neighbor\_\-alltoall}\xspace exchange, different $k$-neighborhoods and reordering algorithms on \emph{VSC}$4$\xspace with $N=100$ and $p=48$. 
			Experiment performed as described in~\ref{sec:throughput}. We present the mean time in \SI{}{\milli\second} and the~$95\%$ confidence interval range. }
		\begin{widetable}{\textwidth}{l  r r r r r r r r}
			\toprule
			Stencil & Size [\SI{}{\byte}] & \texttt{Blocked}\xspace & \texttt{Hyperplane}\xspace & $k$-$d$~\texttt{tree}\xspace & \texttt{Stencil Strips}\xspace & \texttt{Nodecart}\xspace & \texttt{VieM}\xspace & \texttt{Random}\xspace \\
                        \midrule
\multirow{14}{*}{\shortstack[l]{Nearest\\ neighbor}}
& $\num{64}$ & $0.033_{-0.002}^{+0.002}$ & $0.028_{-0.002}^{+0.002}$ & $0.021_{-0.001}^{+0.001}$ & $0.057_{-0.008}^{+0.008}$ & $0.034_{-0.003}^{+0.003}$ & $0.064_{-0.007}^{+0.007}$ & $0.169_{-0.023}^{+0.023}$ \\ 
& $\num{128}$ & $0.039_{-0.002}^{+0.002}$ & $0.032_{-0.002}^{+0.002}$ & $0.023_{-0.001}^{+0.001}$ & $0.043_{-0.005}^{+0.005}$ & $0.033_{-0.002}^{+0.002}$ & $0.020_{-0.001}^{+0.001}$ & $0.165_{-0.020}^{+0.020}$ \\ 
& $\num{256}$ & $0.047_{-0.002}^{+0.002}$ & $0.043_{-0.004}^{+0.004}$ & $0.024_{-0.001}^{+0.001}$ & $0.048_{-0.005}^{+0.005}$ & $0.040_{-0.003}^{+0.003}$ & $0.035_{-0.002}^{+0.002}$ & $0.207_{-0.024}^{+0.024}$ \\ 
& $\num{512}$ & $0.071_{-0.003}^{+0.003}$ & $0.045_{-0.004}^{+0.004}$ & $0.035_{-0.001}^{+0.001}$ & $0.081_{-0.013}^{+0.013}$ & $0.049_{-0.003}^{+0.003}$ & $0.027_{-0.001}^{+0.001}$ & $0.178_{-0.010}^{+0.010}$ \\ 
& $\num{1024}$ & $0.111_{-0.003}^{+0.003}$ & $0.052_{-0.002}^{+0.002}$ & $0.178_{-0.022}^{+0.022}$ & $0.091_{-0.011}^{+0.011}$ & $0.066_{-0.002}^{+0.002}$ & $0.098_{-0.009}^{+0.009}$ & $0.321_{-0.015}^{+0.015}$ \\ 
& $\num{2048}$ & $0.207_{-0.004}^{+0.004}$ & $0.088_{-0.005}^{+0.005}$ & $0.086_{-0.001}^{+0.001}$ & $0.088_{-0.005}^{+0.005}$ & $0.104_{-0.001}^{+0.001}$ & $0.125_{-0.008}^{+0.008}$ & $0.603_{-0.014}^{+0.014}$ \\ 
& $\num{4096}$ & $17.399_{-3.241}^{+3.241}$ & $0.134_{-0.004}^{+0.004}$ & $0.150_{-0.001}^{+0.001}$ & $0.178_{-0.017}^{+0.017}$ & $0.213_{-0.008}^{+0.008}$ & $0.144_{-0.004}^{+0.004}$ & $1.257_{-0.025}^{+0.025}$ \\ 
& $\num{8192}$ & $16.526_{-3.260}^{+3.260}$ & $17.620_{-3.266}^{+3.266}$ & $0.345_{-0.003}^{+0.003}$ & $16.196_{-3.271}^{+3.271}$ & $19.444_{-3.314}^{+3.314}$ & $0.310_{-0.003}^{+0.003}$ & $2.849_{-0.095}^{+0.095}$ \\ 
& $\num{16384}$ & $49.466_{-0.046}^{+0.046}$ & $0.546_{-0.007}^{+0.007}$ & $0.644_{-0.006}^{+0.006}$ & $17.479_{-3.343}^{+3.343}$ & $19.636_{-3.342}^{+3.342}$ & $0.579_{-0.004}^{+0.004}$ & $5.243_{-0.056}^{+0.056}$ \\ 
& $\num{32768}$ & $49.348_{-0.018}^{+0.018}$ & $49.276_{-0.015}^{+0.015}$ & $1.277_{-0.012}^{+0.012}$ & $49.338_{-0.025}^{+0.025}$ & $49.348_{-0.012}^{+0.012}$ & $1.186_{-0.012}^{+0.012}$ & $11.390_{-0.229}^{+0.229}$ \\ 
& $\num{65536}$ & $5.787_{-0.036}^{+0.036}$ & $2.638_{-0.015}^{+0.015}$ & $3.061_{-0.020}^{+0.020}$ & $49.762_{-0.036}^{+0.036}$ & $3.577_{-0.033}^{+0.033}$ & $2.787_{-0.025}^{+0.025}$ & $23.361_{-0.480}^{+0.480}$ \\ 
& $\num{131072}$ & $11.792_{-0.079}^{+0.079}$ & $5.720_{-0.081}^{+0.081}$ & $6.374_{-0.041}^{+0.041}$ & $50.086_{-0.043}^{+0.043}$ & $7.049_{-0.047}^{+0.047}$ & $5.893_{-0.052}^{+0.052}$ & $42.202_{-0.409}^{+0.409}$ \\ 
& $\num{262144}$ & $22.827_{-0.114}^{+0.114}$ & $11.900_{-0.241}^{+0.241}$ & $12.763_{-0.089}^{+0.089}$ & $99.369_{-0.125}^{+0.125}$ & $13.894_{-0.086}^{+0.086}$ & $12.018_{-0.099}^{+0.099}$ & $84.227_{-0.731}^{+0.731}$ \\ 
& $\num{524288}$ & $44.733_{-0.167}^{+0.167}$ & $23.877_{-0.353}^{+0.353}$ & $25.459_{-0.183}^{+0.183}$ & $25.231_{-0.667}^{+0.667}$ & $27.661_{-0.128}^{+0.128}$ & $23.729_{-0.220}^{+0.220}$ & $165.619_{-1.292}^{+1.292}$ \\ 
                        \midrule
\multirow{14}{*}{\shortstack[l]{Nearest\\ neighbor\\ with hops}}
& $\num{64}$ & $0.030_{-0.001}^{+0.001}$ & $0.026_{-0.001}^{+0.001}$ & $0.029_{-0.002}^{+0.002}$ & $0.030_{-0.002}^{+0.002}$ & $0.029_{-0.001}^{+0.001}$ & $0.425_{-0.040}^{+0.040}$ & $0.054_{-0.002}^{+0.002}$ \\ 
& $\num{128}$ & $0.059_{-0.000}^{+0.000}$ & $0.034_{-0.000}^{+0.000}$ & $0.030_{-0.001}^{+0.001}$ & $0.031_{-0.001}^{+0.001}$ & $0.047_{-0.001}^{+0.001}$ & $0.034_{-0.002}^{+0.002}$ & $0.083_{-0.002}^{+0.002}$ \\ 
& $\num{256}$ & $0.096_{-0.000}^{+0.000}$ & $0.059_{-0.001}^{+0.001}$ & $0.042_{-0.001}^{+0.001}$ & $0.038_{-0.001}^{+0.001}$ & $0.071_{-0.000}^{+0.000}$ & $0.044_{-0.002}^{+0.002}$ & $0.135_{-0.001}^{+0.001}$ \\ 
& $\num{512}$ & $0.140_{-0.002}^{+0.002}$ & $0.085_{-0.001}^{+0.001}$ & $0.067_{-0.001}^{+0.001}$ & $0.061_{-0.001}^{+0.001}$ & $0.117_{-0.001}^{+0.001}$ & $0.062_{-0.001}^{+0.001}$ & $0.251_{-0.001}^{+0.001}$ \\ 
& $\num{1024}$ & $0.275_{-0.002}^{+0.002}$ & $0.157_{-0.001}^{+0.001}$ & $0.111_{-0.001}^{+0.001}$ & $0.108_{-0.001}^{+0.001}$ & $0.227_{-0.002}^{+0.002}$ & $0.103_{-0.001}^{+0.001}$ & $0.516_{-0.003}^{+0.003}$ \\ 
& $\num{2048}$ & $0.558_{-0.005}^{+0.005}$ & $0.293_{-0.001}^{+0.001}$ & $0.262_{-0.009}^{+0.009}$ & $0.177_{-0.002}^{+0.002}$ & $0.425_{-0.002}^{+0.002}$ & $0.176_{-0.003}^{+0.003}$ & $1.117_{-0.008}^{+0.008}$ \\ 
& $\num{4096}$ & $1.093_{-0.006}^{+0.006}$ & $0.593_{-0.002}^{+0.002}$ & $0.419_{-0.004}^{+0.004}$ & $0.335_{-0.002}^{+0.002}$ & $0.854_{-0.003}^{+0.003}$ & $0.397_{-0.013}^{+0.013}$ & $3.304_{-0.373}^{+0.373}$ \\ 
& $\num{8192}$ & $9.779_{-1.322}^{+1.322}$ & $1.233_{-0.005}^{+0.005}$ & $1.547_{-0.128}^{+0.128}$ & $0.860_{-0.003}^{+0.003}$ & $1.871_{-0.009}^{+0.009}$ & $0.936_{-0.011}^{+0.011}$ & $5.015_{-0.051}^{+0.051}$ \\ 
& $\num{16384}$ & $5.007_{-0.040}^{+0.040}$ & $3.893_{-0.431}^{+0.431}$ & $2.935_{-0.193}^{+0.193}$ & $1.778_{-0.090}^{+0.090}$ & $4.845_{-0.361}^{+0.361}$ & $1.906_{-0.016}^{+0.016}$ & $11.202_{-0.186}^{+0.186}$ \\ 
& $\num{32768}$ & $9.360_{-0.069}^{+0.069}$ & $5.151_{-0.018}^{+0.018}$ & $3.960_{-0.019}^{+0.019}$ & $3.481_{-0.009}^{+0.009}$ & $7.212_{-0.035}^{+0.035}$ & $3.835_{-0.026}^{+0.026}$ & $20.874_{-0.182}^{+0.182}$ \\ 
& $\num{65536}$ & $18.461_{-0.099}^{+0.099}$ & $10.410_{-0.049}^{+0.049}$ & $7.939_{-0.040}^{+0.040}$ & $6.866_{-0.016}^{+0.016}$ & $14.351_{-0.044}^{+0.044}$ & $7.796_{-0.154}^{+0.154}$ & $42.163_{-0.295}^{+0.295}$ \\ 
& $\num{131072}$ & $36.220_{-0.140}^{+0.140}$ & $20.761_{-0.072}^{+0.072}$ & $15.928_{-0.071}^{+0.071}$ & $13.605_{-0.035}^{+0.035}$ & $28.637_{-0.091}^{+0.091}$ & $14.838_{-0.145}^{+0.145}$ & $84.438_{-0.599}^{+0.599}$ \\ 
& $\num{262144}$ & $72.336_{-0.252}^{+0.252}$ & $41.625_{-0.138}^{+0.138}$ & $32.105_{-0.105}^{+0.105}$ & $27.084_{-0.092}^{+0.092}$ & $56.833_{-0.092}^{+0.092}$ & $28.781_{-0.097}^{+0.097}$ & $203.593_{-2.473}^{+2.473}$ \\ 
& $\num{524288}$ & $144.648_{-0.691}^{+0.691}$ & $82.193_{-0.091}^{+0.091}$ & $63.598_{-0.146}^{+0.146}$ & $52.156_{-0.126}^{+0.126}$ & $113.202_{-0.162}^{+0.162}$ & $57.651_{-0.144}^{+0.144}$ & $339.577_{-2.344}^{+2.344}$ \\ 
                        \midrule
\multirow{14}{*}{\shortstack[l]{Component}}
& $\num{64}$ & $0.041_{-0.003}^{+0.003}$ & $0.020_{-0.001}^{+0.001}$ & $0.014_{-0.001}^{+0.001}$ & $0.010_{-0.000}^{+0.000}$ & $0.164_{-0.022}^{+0.022}$ & $0.033_{-0.004}^{+0.004}$ & $0.081_{-0.017}^{+0.017}$ \\ 
& $\num{128}$ & $0.165_{-0.030}^{+0.030}$ & $0.020_{-0.001}^{+0.001}$ & $0.013_{-0.001}^{+0.001}$ & $0.010_{-0.000}^{+0.000}$ & $0.277_{-0.023}^{+0.023}$ & $0.056_{-0.011}^{+0.011}$ & $0.297_{-0.043}^{+0.043}$ \\ 
& $\num{256}$ & $0.045_{-0.002}^{+0.002}$ & $0.023_{-0.002}^{+0.002}$ & $0.012_{-0.001}^{+0.001}$ & $0.011_{-0.000}^{+0.000}$ & $0.639_{-0.016}^{+0.016}$ & $0.054_{-0.009}^{+0.009}$ & $0.193_{-0.022}^{+0.022}$ \\ 
& $\num{512}$ & $0.072_{-0.002}^{+0.002}$ & $0.028_{-0.002}^{+0.002}$ & $0.013_{-0.001}^{+0.001}$ & $0.128_{-0.016}^{+0.016}$ & $0.047_{-0.004}^{+0.004}$ & $0.159_{-0.023}^{+0.023}$ & $0.120_{-0.011}^{+0.011}$ \\ 
& $\num{1024}$ & $0.111_{-0.002}^{+0.002}$ & $0.036_{-0.001}^{+0.001}$ & $0.017_{-0.001}^{+0.001}$ & $0.016_{-0.001}^{+0.001}$ & $0.089_{-0.007}^{+0.007}$ & $0.015_{-0.000}^{+0.000}$ & $0.156_{-0.005}^{+0.005}$ \\ 
& $\num{2048}$ & $0.194_{-0.003}^{+0.003}$ & $0.060_{-0.001}^{+0.001}$ & $0.181_{-0.016}^{+0.016}$ & $0.024_{-0.001}^{+0.001}$ & $0.138_{-0.008}^{+0.008}$ & $0.092_{-0.014}^{+0.014}$ & $0.321_{-0.013}^{+0.013}$ \\ 
& $\num{4096}$ & $0.389_{-0.006}^{+0.006}$ & $0.119_{-0.004}^{+0.004}$ & $0.037_{-0.001}^{+0.001}$ & $0.240_{-0.026}^{+0.026}$ & $0.210_{-0.007}^{+0.007}$ & $0.348_{-0.036}^{+0.036}$ & $1.321_{-0.140}^{+0.140}$ \\ 
& $\num{8192}$ & $0.897_{-0.012}^{+0.012}$ & $0.313_{-0.015}^{+0.015}$ & $0.062_{-0.001}^{+0.001}$ & $0.061_{-0.001}^{+0.001}$ & $0.537_{-0.023}^{+0.023}$ & $0.108_{-0.008}^{+0.008}$ & $1.351_{-0.027}^{+0.027}$ \\ 
& $\num{16384}$ & $1.615_{-0.017}^{+0.017}$ & $0.562_{-0.025}^{+0.025}$ & $0.129_{-0.006}^{+0.006}$ & $0.167_{-0.013}^{+0.013}$ & $1.008_{-0.038}^{+0.038}$ & $0.151_{-0.007}^{+0.007}$ & $2.958_{-0.100}^{+0.100}$ \\ 
& $\num{32768}$ & $3.163_{-0.020}^{+0.020}$ & $1.042_{-0.045}^{+0.045}$ & $0.231_{-0.002}^{+0.002}$ & $0.232_{-0.001}^{+0.001}$ & $1.827_{-0.046}^{+0.046}$ & $0.257_{-0.004}^{+0.004}$ & $5.673_{-0.146}^{+0.146}$ \\ 
& $\num{65536}$ & $5.558_{-0.043}^{+0.043}$ & $3.435_{-0.331}^{+0.331}$ & $0.531_{-0.002}^{+0.002}$ & $0.546_{-0.002}^{+0.002}$ & $3.338_{-0.042}^{+0.042}$ & $0.779_{-0.040}^{+0.040}$ & $13.893_{-0.791}^{+0.791}$ \\ 
& $\num{131072}$ & $10.943_{-0.081}^{+0.081}$ & $3.984_{-0.101}^{+0.101}$ & $1.400_{-0.002}^{+0.002}$ & $1.397_{-0.002}^{+0.002}$ & $6.515_{-0.038}^{+0.038}$ & $2.595_{-0.285}^{+0.285}$ & $26.735_{-1.219}^{+1.219}$ \\ 
& $\num{262144}$ & $21.678_{-0.132}^{+0.132}$ & $9.878_{-0.611}^{+0.611}$ & $3.189_{-0.002}^{+0.002}$ & $3.189_{-0.002}^{+0.002}$ & $12.989_{-0.101}^{+0.101}$ & $3.316_{-0.023}^{+0.023}$ & $54.507_{-2.131}^{+2.131}$ \\ 
& $\num{524288}$ & $44.034_{-0.220}^{+0.220}$ & $15.972_{-0.240}^{+0.240}$ & $6.859_{-0.003}^{+0.003}$ & $6.765_{-0.004}^{+0.004}$ & $25.547_{-0.052}^{+0.052}$ & $7.151_{-0.126}^{+0.126}$ & $112.673_{-4.697}^{+4.697}$ \\ 
			\bottomrule
		\end{widetable}
		\end{scriptsize}
	\end{sidewaystable*}
\end{document}